\newtheorem{lemma}     {Lemma}[section]
\newtheorem{teorema1}   [lemma]{Theorem}
\newtheorem{prop}      [lemma]{Proposition}
\newtheorem{coro}    [lemma]{Corollary}
\newtheorem{cong1}      [lemma]{Conjecture}
\newtheorem{remark1}    [lemma]{Remark}
\numberwithin{equation}{section}
\renewcommand{\(}{\left(}        \renewcommand{\)}{\right)}
\renewcommand{\[}{\left[}        \renewcommand{\]}{\right]}
\newcommand{\R}{\mathbb R}
 \renewcommand{\d}{\delta}
 \renewcommand{\a}{\alpha}
 \renewcommand{\b}{\beta}
 \newcommand{\g}{\gamma}
\newcommand{\dis}{\displaystyle}
\newcommand{\mmmintone}[1]{{\dis{\int\kern -.43cm
-}}_{\kern-.21cm\substack{#1}}\;\;}
\newcommand{\mmmintwo}[2]{{\dis{\int\kern -.43cm
-}}_{\kern-.21cm\substack{#1}}^{\substack{#2}}\;\;}
\newcommand{\submint}{{\scriptstyle{\int\kern -.66em -}}}
\newcommand{\submintone}[1]{{\scriptstyle{\int\kern -.66em
-}}_{\scriptscriptstyle{\kern-.21em\substack{#1}}}}
\newcommand{\fracmint}{{\textstyle{\int\kern -.88em -}}}
\newcommand{\fracmintone}[1]{{\textstyle{\int\kern -.88em
-}}_{\scriptscriptstyle{\kern-.21em\substack{#1}}}\;}
\newcommand{\eps}{\epsilon}
\newcommand{\ga}{\gamma}
\newcommand{\nada}[1]{}
\newcommand{\bea}{\begin{eqnarray}}
\newcommand{\ea}{\end{eqnarray}}
\newcommand{\be}{\begin{equation}}
\newcommand{\ee}{\end{equation}}
\newcommand{\beq}{\begin{eqnarray}}
\newcommand{\eeq}{\end{eqnarray}}
\begin{document}
\title{Duality for stochastic models of transport}
\author{
Gioia Carinci$^{\textup{{\tiny(a)}}}$,
Cristian Giardin\`a$^{\textup{{\tiny(a)}}}$,
Claudio Giberti$^{\textup{{\tiny(b)}}}$,
Frank Redig$^{\textup{{\tiny(c)}}}$.
\\
\\
{\small $^{\textup{(a)}}$
University of Modena and Reggio Emilia},
{\small via G. Campi 213/b, 41125 Modena, Italy}
\\
{\small $^{\textup{(b)}}$
University of Modena and Reggio Emilia},
{\small Via Amendola 2, Pad. Morselli, 42122 Reggio Emilia, Italy}
\\
{\small $^{\textup{(c)}}$
University of Delft}, {\small Mekelweg 4 2628 CD Delft , The Netherlands}
\\
}
\maketitle

\pagenumbering{arabic}

\begin{abstract}
We study three classes of continuous time Markov processes (inclusion process,
exclusion process, independent walkers) and a family of
interacting diffusions (Brownian energy process).
For each model we define a boundary
driven process which is obtained by placing the system in contact
with proper reservoirs, working at different {particle} densities or
different temperatures. We show that {all the models}
are exactly solvable by duality, using a dual process with absorbing
boundaries. {The solution does also apply to the so-called thermalization
limit in which particles or energy is instantaneously redistributed among
sites.}

The results shows that duality is a versatile tool for analyzing stochastic
models of transport, while the analysis in the literature has been
so far limited to particular {instances}.
Long-range correlations naturally emerge as a result of the interaction
of dual particles at the microscopic level and the {explicit computations of covariances}
match, in the scaling limit, { the predictions of the macroscopic fluctuation theory}.
\end{abstract}

\vspace{1.cm}
\section{Introduction}
Interacting particle systems are classical models to study non-equilibrium statistical mechanics.
The standard setting is the one in which a system is placed in contact with reservoirs working
at different parameters that create a stationary state characterized by a non-zero averaged
current.
The prototypical example are the Symmetric Exclusion process with at most one
particle per site connected to birth and death process at the boundaries \cite{L, D} and the KMP process \cite{KMP} connected to reservoirs which impose
at the boundaries Boltzmann-Gibbs distribution with different temperatures.
The Symmetric exclusion process is a model for transport of a discrete quantity,
whereas the KMP process models transport of a continuous quantity.

Problems that are very hard for classical Hamiltonian systems -- for instance deriving Fourier law starting from the microscopic evolution --
can be successfully approached using stochastic models. Furthermore  stochastic models
of transport have been used to prove new theorems in non-equilibrium statistical mechanics, such as the fluctuation theorem \cite{GC,K}, to introduce new principles,
such as the additivity principle \cite{BD}, to construct new schemes, such
as the macroscopic fluctuation theory that describes the density and current
large deviations for diffusive systems \cite{BDGJL1,DLS}, to test new algorithms,
such as cloning algorithms to simulate rare events \cite{GKLT}.
Recently, the connection between deterministic Hamiltonian systems and stochastic models is
emerging either by considering evolutions in which they are coupled \cite{BO} or by
considering slow/fast variables \cite{DL} and thermodynamic formalismo \cite{LAW}.

An important tool in the study of interacting stochastic systems is duality
\cite{S, L}.
Duality provides the connection between a process and a simpler dual process.
This technique has been applied in different contexts, including interacting
particles systems, interacting diffusions, queueing theory and
mathematical population genetics. For a recent review on duality,
which also include many references, see \cite{JK}. For recent applications
of duality in the context of asymmetric processes and KPZ universality see
\cite{corwin}.

In the context of interacting particle systems or interacting diffusion processes
modeling non-equilibrium
systems, the main simplification coming from duality lies in the fact that
for an appropriate choice of the modeling of the boundary
reservoirs, a dual process exists where the reservoirs are replaced by absorbing boundaries.
This was originally found for
the boundary driven Symmetric Exclusion process with at most one particle
per site \cite{Spo} and for the KMP model \cite{KMP}.
As a consequence, the $n$-point correlation
functions in the non-equilibrium steady state can be obtained from absorption probabilities
of $n$ dual particles. In particular, the stationary density or temperature profile
can be easily obtained from a single dual walker.
Other simplifications due to duality include ``from continuous to discrete'', i.e., connecting continuous systems with discrete
particle systems and ``from many to few'', i.e., correlation functions in a systems of possibly infinitely
many particles reduce to as many dual particles as the degree of the correlation function.

In this paper we introduce and study a large class of boundary driven processes which can be dealt with via this technique of
duality. We treat processes  with interactions of ``inclusion'' (attractive) and ``exclusion'' (repulsive) type.

The particle systems
range from the Symmetric Inclusion Processes (SIP) with  Negative-Binomial
product stationary measures at equilibrium, to the Symmetric Exclusion Processes (SEP) having a Binomial product measures as
equilibrium state, via Independent Random Walkers (IRW) with a product Poisson stationary
measures. The interacting diffusions corresponding to the SIP are given by the so-called Brownian
Energy processes (BEP), having product of Gamma distributions as equilibrium.

We also study ``thermalized versions'' of these processes. For the diffusion models
thermalization leads to ``energy redistribution models'' of which the famous KMP model is a particular instance.
For particle systems thermalization leads to ``occupation redistribution models'' where
in one event associated to a nearest neighbour edge, occupations of particles are reshuffled according to a specific redistribution measure. The dual KMP model is a particular instance of these
thermalized particle systems. Most of these thermalized models are new, as well as their boundary driven versions.
A non-trivial stationary state is found for these boundary driven thermalized models even considering
only one site, since the reservoirs are not additive.

Some of the processes we discuss here have already been introduced before: we have chosen to
include all of them, including independent random walkers, in order to provide a (up to know and
to our knowledge) complete and self-contained overview of the interacting non-equilibrium systems
that can be treated with duality.
The main message of this paper is thus an extension of duality and its consequences into the boundary driven non-equilibrium setting
for all the models discovered and studied in \cite{GKR, GKRV, GRV}.

\section{Models definition}\label{systems}
In this section we introduce
our models.
In the most complete setting,
they are constituted by a bulk  which is kept in a non equilibrium state
by the contact with particles or energy reservoirs.
In particular, we consider one-dimensional systems
on a finite lattice $\{1,\ldots,L\}$, whose boundaries
(i.e. sites $1$ and $L$)
interact with the reservoirs. When needed, the reservoirs themselves will be represented by two extra sites, namely sites $0$ and $L+1$.\\
Accordingly, the generators of the random processes associated with our models can be generically expressed as the sum of three terms
\be
{\cal L}={\cal L}_a+{\cal L}_0+{\cal L}_b\;,
\ee
where ${\cal L}_0$ represents the generator of the dynamics in the bulk, while ${\cal L}_a$ and ${\cal L}_b$ represent the generators of the reservoirs.\\
We will consider four models: three classes of interacting particle systems, characterized by the different interactions  between the particles, and one family of interacting diffusions introduced to model heat conduction
\cite{GKR, GKRV, GRV}. The models are:
\begin{enumerate}
\item the Symmetric Inclusion Process (SIP), with {\em attractive} interaction between neighbouring particles;
\item the Symmetric Exclusion Process  (SEP), with {\em repulsive} interaction between neighbouring particles;
\item the Independent Random Walkers (IRW), without interactions among particles;
\item the Brownian Energy Process (BEP).
\end{enumerate}
In the first three cases the dynamic variable is a vector that specifies the number of particle on each site:  $\eta=(\eta_1, ...,\eta_L)\in \Omega$; here  $\Omega$,
the state space, depends on the model  and will be defined ahead.
In the case of the BEP the dynamic variable is a vector $z$ representing the energies on each site of the lattice: $z=(z_1,\ldots,z_L)\in \Omega \equiv  {\mathbb R}_+^L$.

\subsection{Interacting particle systems}
The generators of the reservoirs for SIP, SEP and IRW have the following general form:
\bea
{\cal L}_{a} f(\eta)&=&b(\eta_1)[f(\eta^{0,1})-f(\eta)]+d(\eta_1)[f(\eta^{1,0})-f(\eta)],\label{terma}\\
{\cal L}_{b} f(\eta)&=&b(\eta_L)[f(\eta^{L+1,L})-f(\eta)]+d(\eta_L)[f(\eta^{L,L+1})-f(\eta)]\label{termb}.
\ea
Here
$\eta^{i,i+1}$  denotes the configuration obtained
from $\eta$ by moving a particle from site $i$ to site $i+1$,
i.e. $\eta^{i,i+1} = (\eta_1,\ldots,\eta_i-1,\eta_{i+1}+1,\ldots,\eta_L)$.
According to (\ref{terma}) and (\ref{termb})  particles are injected into the system through the boundaries
with rate $b(n)$ with $n\in\mathbb{N}_0$,  and removed from the same sites with rate $d(n)$.  While $b(n)$ is
model-dependent, the annihilation rate is not,  being in any case proportional to the number of
particles at the boundary site.\\We introduce now our models by defining the actions of the generators ${\cal L}$ on the functions $f:\Omega \rightarrow \R$. 


\vspace{0.5cm}
{\bf Inclusion walkers SIP($2k$).} The inclusion process (without boundaries) is introduced
first in \cite{GKR}, and also studied further in \cite{GRV}.

In the SIP($2k$), see Figure \ref{figSIP},  each site can accomodate
an arbitrary number of particles, thus $\Omega={\mathbb N}^{L}_0$.
In the bulk each particle may jump to its left or right  neighbouring site with rates proportional to the number
of particles in the departure site and to the number of particles in the arrival site. In each  boundary site particles are created with a rate proportional to $2k$ plus the number
of particles sitting in that site; $k\in \mathbb R_+$  labels the class of models.
The generator is
\begin{eqnarray}\label{INC}
{\cal L}^{SIP}f(\eta)&=&{\cal L}_a^{SIP}f(\eta)+{\cal L}_0^{SIP}f(\eta)+{\cal L}_b^{SIP}f(\eta)\\
&=&\a (2k+\eta_1)\[f(\eta^{0,1})-f(\eta)\]+\ga \eta_1 \[f(\eta^{1,0})-f(\eta)\]\nonumber\\
&+& \sum_{i=1}^{L-1}\eta_i(2k+\eta_{i+1})\[f(\eta^{i,i+1})-f(\eta)\]+ \eta_{i+1}(2k+\eta_i)\[f(\eta^{i+1,i})-f(\eta)\]\nonumber\\
&+&\d (2k+\eta_L)\[f(\eta^{L+1,L})-f(\eta)\]+\b \eta_L \[f(\eta^{L,L+1})-f(\eta)\].\nonumber
\end{eqnarray}
The positive numbers $\a$ and $\ga$ (resp. $\d$ and $\b$) tune the creation and annihilation rates of  the left  (resp. right) reservoirs.
\begin{figure}[h]
\centering
\includegraphics[width=13cm]{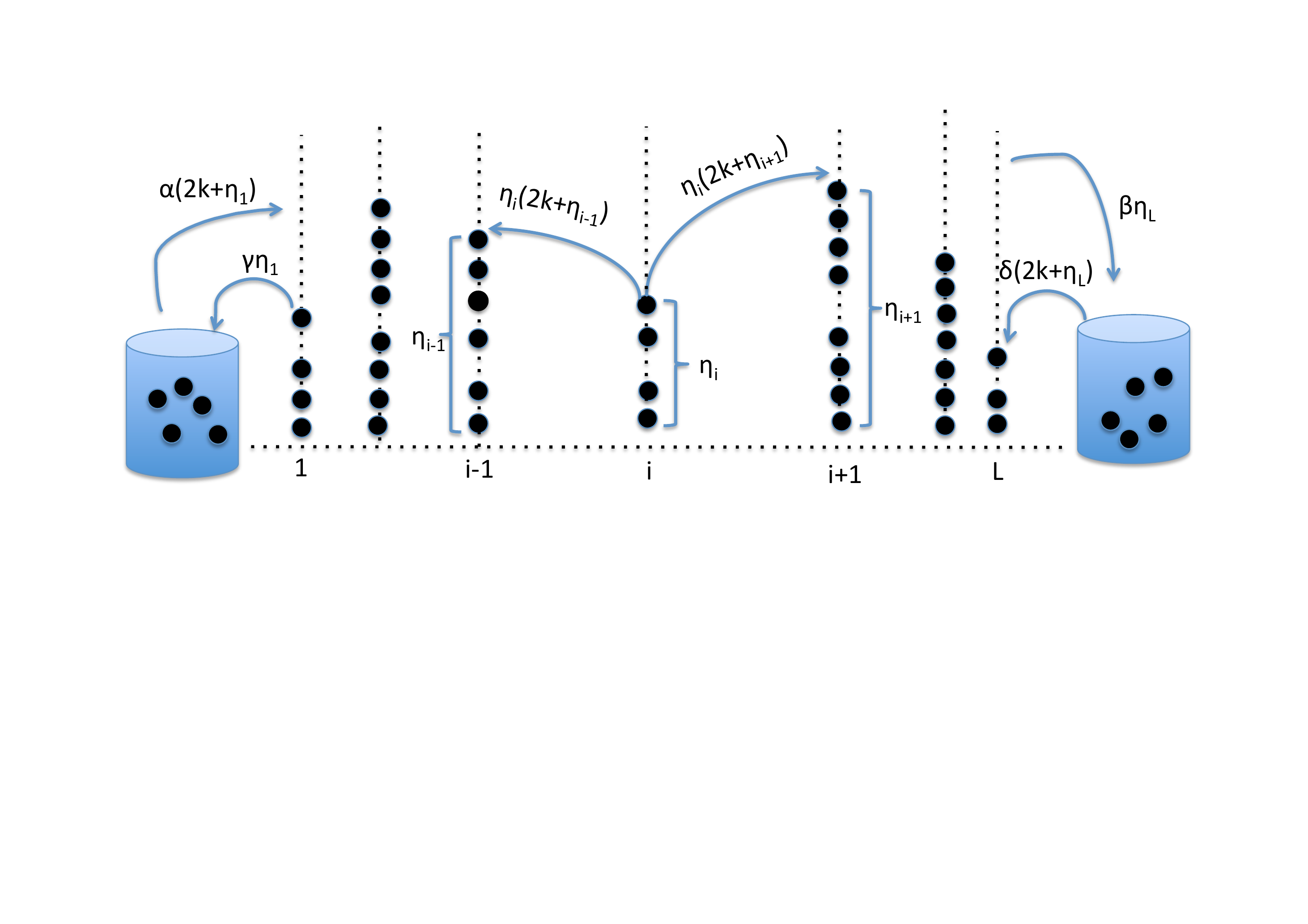}
\vspace{-4cm}
\caption{\label{figSIP} Schematic description of the Symmetric Inclusion Process SIP($2k$). The arrows represent the possible transitions and the corresponding rates, while the two cylinders represent the boundary reserovoirs. Each site can accomodate an arbitrary numer of particles. }
\end{figure}

\vspace{0.5cm}
{\bf Exclusion walkers SEP($2j$).}
For $j=1/2$ the boundary driven simple exclusion process has been studied using duality
in \cite{Spo}. The model for arbitrary $j$ has been introduced and studied in \cite{SS}. 
From the mathematical point of view
a related model which also exhibits product measures, but which does not have the self-duality property
is studied (without boundary reservoirs)  in \cite{Kies}.

In the SEP($2j$) the maximum occupation number at each site is $2j\in \mathbb{N}$, thus $\Omega=\{0,1,\ldots, 2j\}^{L}$. In the bulk
particles jump independently to nearest neighbouring lattices sites at rate
proportional to the number of particles in the departure site times
the number of holes in the arrival site. The reservoirs inject particles in the systems with a rate proportional to the holes in the boundary sites, see Figure \ref{figSEP}.
The generator is
\begin{eqnarray}\label{EXC}
{\cal L}^{SEP}f(\eta)&=&{\cal L}_a^{SEP}f(\eta)+{\cal L}_0^{SEP}f(\eta)+{\cal L}_b^{SEP}f(\eta)\\
&=&\a (2j-\eta_1)\[f(\eta^{0,1})-f(\eta)\]+\ga \eta_1 \[f(\eta^{1,0})-f(\eta)\]\nonumber\\
&+& \sum_{i=1}^{L-1}\eta_i(2j-\eta_{i+1})\[f(\eta^{i,i+1})-f(\eta)\]+ \eta_{i+1}(2j-\eta_i)\[f(\eta^{i+1,i})-f(\eta)\]\nonumber\\
&+&\d (2j-\eta_L)\[f(\eta^{L+1,L})-f(\eta)\]+\b \eta_L \[f(\eta^{L,L+1})-f(\eta)\]\;.\nonumber
\end{eqnarray}
The parameters  $\a,\,\ga,\,\d,\, \b$ have the same meaning as in the SIP($2k$).
\begin{figure}[h]
\centering
\includegraphics[width=13cm]{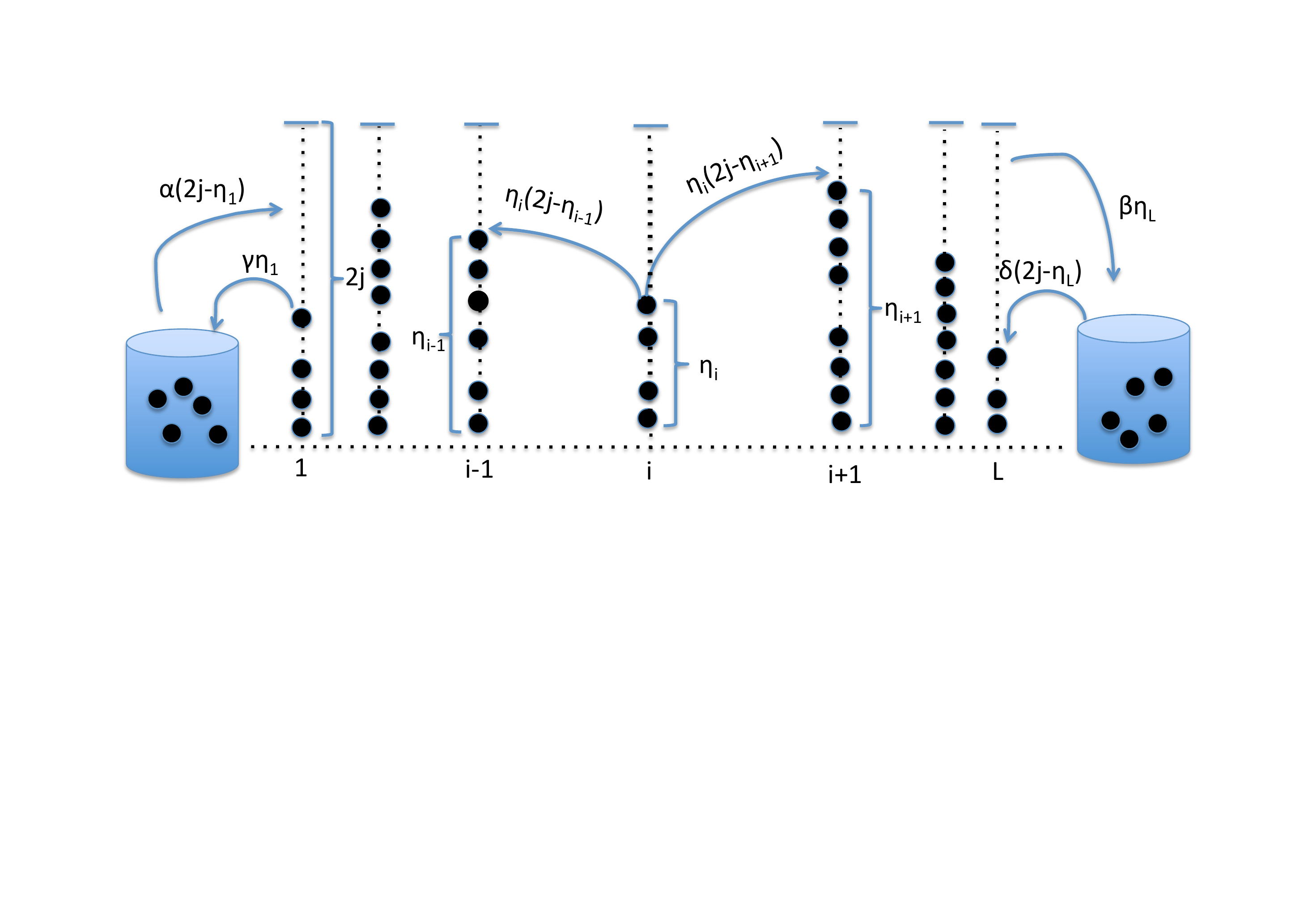}
\vspace{-4cm}
\caption{\label{figSEP} Schematic description of the Symmetric Exclusion Process SEP($2j$). The arrows represent the possible transitions and the corresponding rates, while the two cylinders represent the boundary reserovoirs. Each site can accomodate up to $2j$ particles. }
\end{figure}

\vspace{0.5cm}
{\bf Independent random walkers IRW.}
This well-known model is first considered in \cite{S}, and with boundaries is also well-known and studied
e.g. in \cite{Levine} (where also the more general boundary driven zero range process is studied).

In the IRW model each particle jumps independently to nearest neighbouring lattices sites at rate 1, and each site can accomodate an arbitrary number of particles, thus
$\Omega={\mathbb N}^{L}_0$.
Jumps occur with the same probability to the right and to the left, while particles are created at rates $\a$ and $\d$ irrespective of the numer of particles at the boundaries.
Therefore the system is described by the generator
\begin{eqnarray}\label{IND}
{\cal L}^{IRW}f(\eta)&=&{\cal L}_a^{IRW}f(\eta)+{\cal L}_0^{IRW}f(\eta)+{\cal L}_b^{IRW}f(\eta)\\
&=&\a \[f(\eta^{0,1})-f(\eta)\]+ \ga \eta_1 \[f(\eta^{1,0})-f(\eta)\]\nonumber \\
&+& \sum_{i=1}^{L-1}\eta_i\[f(\eta^{i,i+1})-f(\eta)\]+ \eta_{i+1}\[f(\eta^{i+1,i})-f(\eta)\]\nonumber\\
&+&\d\[f(\eta^{L+1,L})-f(\eta)\]+ \b \eta_L \[f(\eta^{L,L+1})-f(\eta)\]\;.\nonumber
\end{eqnarray}
The dynamics in the bulk can be further described by saying that if at site $i$ there are $\eta_i$ particles, one of the particle jumps
at rate $\eta_i$ either to the left or to the right. As in the previous cases, parameters $\g$ and $\b$ define
the annihilation processes.

\begin{remark1}
{The effect of the reservoirs is to impose the average number of particles
on the left and on the right sides of the chains.
With some misuse of language, but sticking to standard notations, we  will call ``densities" these averages and we will denote them $\rho_a$ (left reservoir)
and $\rho_b$ (right reservoir). The  vaules of $\rho_a$ and $\rho_b$ are reported in the table below and computed in Sec.\ref{sezeq}.\\
\begin{table}[h!]
\centerline{
\begin{tabular}{|c|c|c|}
\hline System & $\rho_a$ & $\rho_b$\\
\hline  SIP & $ 2k \frac{\a}{\g-\a}$ &  $2k \frac{\d}{\b-\d}$ \\
\hline  SEP &  $2j \frac{\a}{\g+\a}$ & $2j \frac{\d}{\b+\d}$\\
\hline  IRW & $  \frac{\a}{\g}$  &  $\frac{\d}{\b}$ \\
\hline
\end{tabular}}
\caption{\label{rho-def} Definition of $\rho_a$ and $\rho_b$.}
\end{table}
}
\end{remark1}
\begin{remark1}
{Note that the SIP process requires $\gamma>\alpha$ and $\beta>\delta$.
This condition turns out to be   necessary in order for the system to reach a stationary state (see also formula \eqref{cond}).
}
\end{remark1}
\begin{remark1}\label{REM}
It is interesting to remark that the exclusion (resp. inclusion) walkers with parameters $(\alpha,\gamma',\delta,\beta')$ converges
to the independent walkers with parameters $(\alpha,\gamma,\delta,\beta)$ in the limit
$j\to \infty$ (resp. $k\to\infty$) under the scaling $\gamma' = 2j \gamma$,  $\b' = 2j \b$ (resp. $\gamma' = 2k \gamma$,  $\b' = 2k \b$).
Indeed, in this limit the generators  $\frac{{\cal L}^{SIP}}{2j}$ and $\frac{{\cal L}^{SEP}}{2j}$ converge to ${\cal L}^{IRW}$. This remark can be put on rigorous grounds by
using the Trotter-Kunz theorem (see Theorem 2.12 of \cite{L}); see for istance  \cite{GKRV} for the proof in the case of  SEP($2j$).
\end{remark1}
\subsection{Interacting diffusions}
The last process  we consider is the Brownian Energy Process (BEP), originally introduced (without boundaries) in \cite{GKRV}. Here we present its boundary driven version. The bulk diffusion process of the BEP also appears in genetics, as the multi-type Wright-Fisher diffusion with parent independent mutation rate (see \cite{CGGR}
and references therein for a discussion of duality in the context of population dynamics).

\vspace{0.5cm}
{\bf Brownian energy process BEP($2k$).}
This model describes symmetric energy exchange between nearest neighbouring
sites, see Figure \ref{figBEP}. The dynamical variables (energies) are collected in the vector
$z=(z_1,\ldots,z_L)\in \R_+^L$ and the generator is
\begin{eqnarray}\label{BEP}
{\cal L}^{BEP}&=&{\cal L}_a^{BEP}f(\eta)+{\cal L}_0^{BEP}f(\eta)+{\cal L}_b^{BEP}f(\eta)\\
&=&T_a \left (2k\frac{\partial }{\partial z_1} + z_1\frac{\partial^2}{\partial z_1^2} \right ) - \frac12 z_1\frac{\partial }{\partial z_1}\nonumber\\
&+& \sum_{i=1}^{L-1}
z_iz_{i+1} \left(\frac{\partial }{\partial z_i}-\frac{\partial }{\partial z_{i+1}}\right )^2 - 2k (z_i-z_{i+1})\left (\frac{\partial }{\partial z_i}-\frac{\partial }{\partial z_{i+1}}\right )\nonumber\\
&+&
T_b \left (2k\frac{\partial }{\partial z_L} + z_L\frac{\partial^2}{\partial z_L^2}\right ) - \frac12 z_L\frac{\partial }{\partial z_L}\;.\nonumber
\end{eqnarray}
\begin{remark1}
The origin of the bulk dynamics, generated by
\begin{eqnarray}\label{BEP0}
{\cal L}_0^{BEP}f(z)&=&
\sum_{i=1}^{L-1}
z_iz_{i+1} \left(\frac{\partial}{\partial z_i} - \frac{\partial}{\partial z_{i+1}}\right)^2
- 2k (z_i-z_{i+1})\left(\frac{\partial}{\partial z_i}-\frac{\partial}{\partial z_{i+1}}\right)
\end{eqnarray}
can be explained as follows \cite{GKR,GKRV}. Consider $m=4k\in\mathbb{N}$ velocity variables on
each site $i$ and call them $v_{i,\alpha}$ with $\alpha=1,\ldots,m$. Suppose that they evolve
with the following generator
\be\label{BMP}
{\cal L}_0^{BMP}f(v) = \sum_{i=1}^{L-1}\sum_{\alpha,\beta=1}^{m}
\left(
v_{i,\alpha}\frac{\partial}{\partial v_{i+1,\beta}}
- v_{i+1,\beta}\;\frac{\partial}{\partial v_{i,\alpha}}
\right)^2
\ee
which defines a process, called Brownian Momentum Process, introduced in \cite{BO,GK}.
Each term in ${\cal L}_0^{BMP}$ represents a rotation in the plane $(v_{i,\alpha},v_{i+1,\beta})$,
therefore it conserves the total length $v_{i,\alpha}^2 + v_{i+1,\beta}^2$,
i.e. the total kinetic energy. One can check that the BEP$(2k)$ is the evolution
process, induced by (\ref{BMP}), of the total energies on each site
\be\label{sumv}
z_i = \sum_{\alpha=1}^{m} v_{i,\alpha}^2\;.
\ee

The generator of the BEP reservoirs  ${\cal L}_a^{BEP}$ and ${\cal L}_b^{BEP}$, that will  be discussed in some details in Sec. \ref{sezeq}, impose an average energy $4k T_a$ on the left,
and an average energy $4kT_b$ on the right. The choice of their form is motivated as follows.
%
Consider an Ornstein-Uhlenbeck process on each of the $m$ velocities at site $1$ of the Brownian Momentum  process (\ref{BMP}), namely
\be
{\cal L}^{BMP}_{a} = \sum_{\alpha=1}^{m} 2T \frac{\partial^2}{\partial v_{1,\alpha}^2} - v_{1,\alpha } \frac{\partial}{\partial v_{1,\alpha}}\;.
\ee
Since in the stationary state of this reservoir the $\{v_{1,\alpha}\}_{\alpha=1,\ldots,m}$ are independent centered Gaussian
with variance $T$ then, using (\ref{sumv}), the expectation of $z_1$ is $\mathbb{E}(z_1) =
\sum_{\alpha=1}^m \mathbb{E}(v_{1,\alpha}^2) = mT = 4kT$ .
\end{remark1}
\begin{figure}[h]
\centering
\includegraphics[width=12cm]{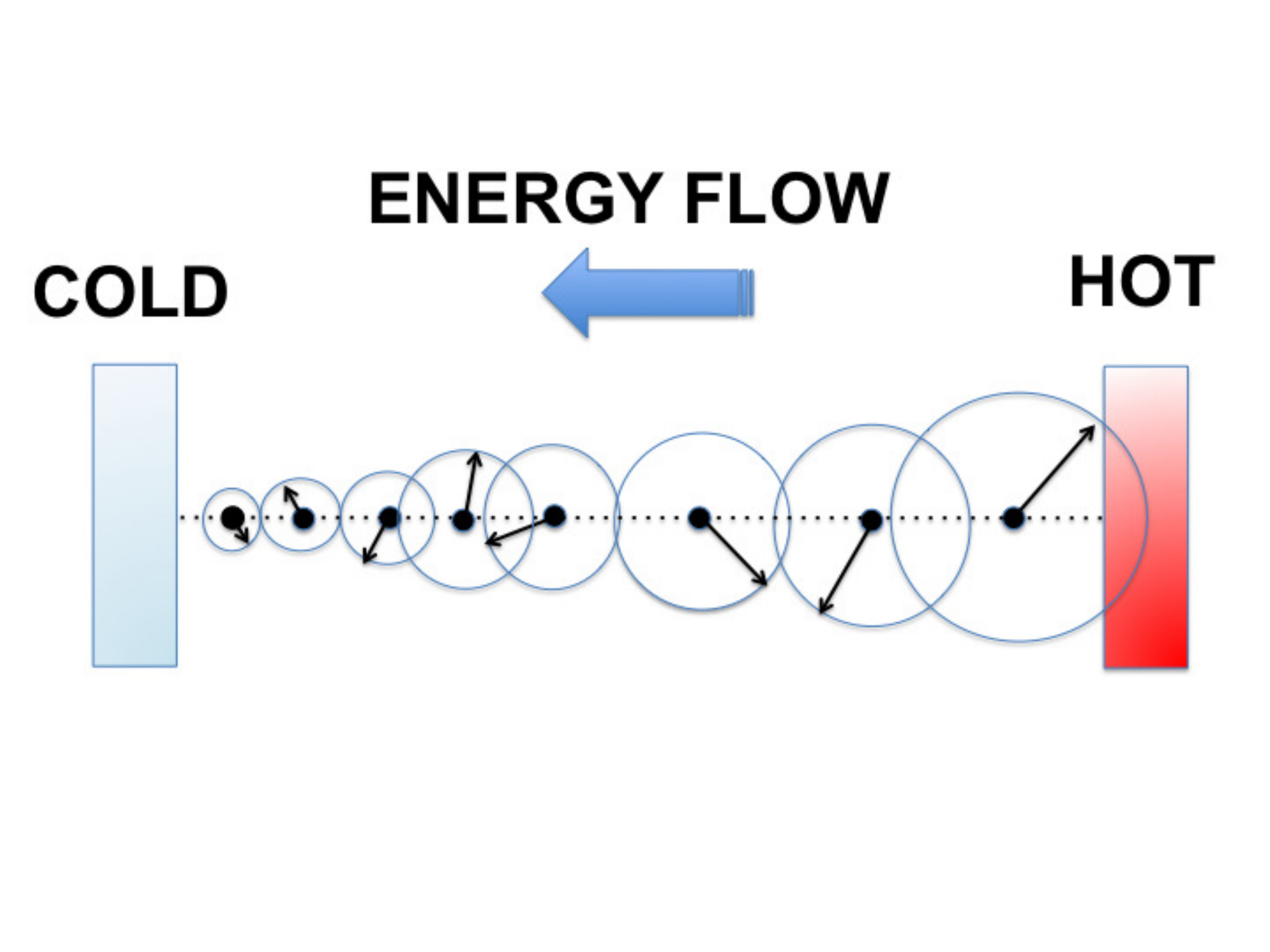}
\vspace{-2cm}
\caption{\label{figBEP} Schematic description of the Brownian Energy Process BEP($2k$). The length of the arrows represent the energies $z_i$ on the sites $i$'s, the two rectangles represent the cold and hot boundary  reserovoirs.}
\end{figure}

\subsection{Scaling limit of the particle systems}

Besides duality, there is another relation connecting the bulk part of the BEP with
generator ${\cal L}_0^{BEP}$ (\ref{BEP0}),
and the bulk part of the SIP with
generator ${\cal L}_0^{SIP}$  (third line in (\ref{INC})).
The BEP can be indeed  obtained from the SIP, through a suitable scaling limit, by a reinterpretation of this process as a model of energy transport, by supposing that each particle carries a quantum of energy $\eps$.
In this interpretation, since ${\cal L}_0^{SIP}$ conserves the number of particles, then
it conserves the total energy. Consider the free boundary inclusion process $\eta(t)=\(\eta_1(t),\dots, \eta_{L}(t)\)$ generated by  ${\cal L}_0^{SIP}$  and let $N$ be the total number of particles, i.e. $N=\sum_{i=1}^L \eta_i$.
Let $\eps$ be a parameter of the order of $1/N$, then
one expects $\eta_i$ to be of the order of $\eps^{-1}$
as $\eps \to  0$
(despite attractive interactions for any finite $k$ there are
no condensation phenomena in the SIP; one needs to rescale
$k$ with $\epsilon$ to see particles coalescing into a single site;
see \cite{GRV2}). Then  one may investigate the continuous
dynamics generated in the limit as $\eps \to 0$ on the variables
$z_i(t)=\eps \eta_i(t)$.
It turns out that the limiting dynamics for $z(t)$ is generated by
${\cal L}_0^{BEP}$.

\vskip.3cm

\begin{prop}
\label{scalingsip}
Let $\eta(t)=\(\eta_1(t),\dots, \eta_{L}(t)\)$ be the bulk inclusion process generated by  ${\cal L}_0^{SIP}$  with  $N$ particles. Let $\eps=\mathcal E /N$ for some fixed $\mathcal E>0$. Then the process $z(t)=\(z_1(t), \dots, z_L(t)\)$ where $z_i(t)=\eps \, \eta_i(t)$ is, in the limit $\eps \to 0$, the bulk Brownian energy process generated by ${\cal L}_0^{BEP}$ with total energy $\mathcal E$.
\end{prop}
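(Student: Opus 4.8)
The plan is to prove convergence at the level of generators on a suitable core and then lift it to convergence of the processes via the Trotter--Kurtz semigroup convergence theorem (Theorem 2.12 of \cite{L}), exactly in the spirit of Remark \ref{REM}. The first observation is that the bulk generator ${\cal L}_0^{SIP}$ conserves the total number of particles $N=\sum_{i=1}^L\eta_i$, so the rescaled configuration $z=\eps\eta$ lives for every $\eps=\mathcal E/N$ on the compact simplex $\Delta_{\mathcal E}=\{z\in\R_+^L:\sum_i z_i=\mathcal E\}$; likewise ${\cal L}_0^{BEP}$ conserves $\sum_i z_i$ and leaves $\Delta_{\mathcal E}$ invariant. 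Working on a compact state space removes any difficulty coming from the unboundedness of the occupation variables and makes the limiting procedure clean.

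Next I would fix a test function $\phi\in C^3(\Delta_{\mathcal E})$, set $f_\eps(\eta)=\phi(\eps\eta)$, and expand ${\cal L}_0^{SIP}f_\eps$. Moving a particle along the edge $(i,i+1)$ shifts $(z_i,z_{i+1})$ by $(\mp\eps,\pm\eps)$, so writing $D_i=\frac{\partial}{\partial z_i}-\frac{\partial}{\partial z_{i+1}}$ a Taylor expansion gives
\be
f_\eps(\eta^{i,i+1})-f_\eps(\eta)=-\eps\,D_i\phi+\tfrac{\eps^2}{2}D_i^2\phi+O(\eps^3),\qquad
f_\eps(\eta^{i+1,i})-f_\eps(\eta)=+\eps\,D_i\phi+\tfrac{\eps^2}{2}D_i^2\phi+O(\eps^3).
\ee
Rewriting the inclusion rates in the rescaled variables, $\eta_i(2k+\eta_{i+1})=\frac{z_iz_{i+1}}{\eps^2}+\frac{2k z_i}{\eps}$ and $\eta_{i+1}(2k+\eta_i)=\frac{z_iz_{i+1}}{\eps^2}+\frac{2k z_{i+1}}{\eps}$, and multiplying out, the leading $O(\eps^{-1})$ contributions of the two moves carry opposite signs on $D_i\phi$ and cancel. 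The surviving $O(1)$ terms are precisely $z_iz_{i+1}\,D_i^2\phi-2k(z_i-z_{i+1})D_i\phi$, i.e. the edge contribution to ${\cal L}_0^{BEP}\phi$, while all remaining terms are $O(\eps)$.

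Summing over $i=1,\dots,L-1$ then yields $\sup_{z\in\Delta_{\mathcal E}}|{\cal L}_0^{SIP}f_\eps(\eta)-{\cal L}_0^{BEP}\phi(z)|\to0$ as $\eps\to0$, the uniformity following because each Taylor remainder is bounded by $\frac{z_iz_{i+1}}{\eps^2}\cdot O(\eps^3)=O(\eps)$ times $\|\phi\|_{C^3}$, and $z_iz_{i+1}\le(\mathcal E/2)^2$ on the simplex. The cancellation of the $O(\eps^{-1})$ term is the algebraic heart of the statement: it reflects the symmetric structure of the inclusion interaction and is the exact analogue of the diffusive scaling that turns a nearest-neighbour random walk into Brownian motion.

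Finally I would upgrade generator convergence to convergence of the processes in path space. The main obstacle is not the expansion above, which is routine, but the functional-analytic step: one must verify that ${\cal L}_0^{BEP}$ generates a well-posed Markov process on $\Delta_{\mathcal E}$ and that $C^\infty(\Delta_{\mathcal E})$ is a core for it. Because the diffusion coefficient $z_iz_{i+1}$ degenerates on the boundary faces $\{z_i=0\}$, the limiting diffusion is of Wright--Fisher type (cf. the multi-type Wright--Fisher connection mentioned above and \cite{CGGR}); its well-posedness and the core property are classical but genuinely require the Ethier--Kurtz machinery for degenerate diffusions on simplices rather than a soft argument. Granted this, uniform generator convergence on a core together with convergence of the initial data (which is unproblematic since $\Delta_{\mathcal E}$ is compact) implies, by Trotter--Kurtz, that $z(t)=\eps\eta(t)$ converges in distribution to the bulk Brownian energy process generated by ${\cal L}_0^{BEP}$ with total energy $\mathcal E$.
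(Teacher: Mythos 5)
Your proposal is correct, and its computational heart --- rewriting the SIP rates in the rescaled variables as $\frac{2kz_i}{\eps}+\frac{z_iz_{i+1}}{\eps^2}$ and $\frac{2kz_{i+1}}{\eps}+\frac{z_iz_{i+1}}{\eps^2}$, Taylor-expanding along $\pm(e_i-e_{i+1})$, and observing that the $O(\eps^{-1})$ contributions of the two opposite jumps cancel while the surviving $O(1)$ terms reproduce $z_iz_{i+1}\left(\frac{\partial}{\partial z_i}-\frac{\partial}{\partial z_{i+1}}\right)^2-2k(z_i-z_{i+1})\left(\frac{\partial}{\partial z_i}-\frac{\partial}{\partial z_{i+1}}\right)$ --- is exactly the computation in the paper's proof. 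Where you differ is in scope. The paper stops at the generator level: it takes $F\in\mathcal C^2(\mathbb{R}_+^L)$, a sequence $z^\eps\to z$, and proves the pointwise statement $\lim_{\eps\to0}[{\cal L}_0^{\eps}F](z^\eps)=[{\cal L}_0^{BEP}F](z)$ together with conservation of the total energy; no attempt is made to upgrade this to convergence of the processes. You instead restrict to the compact simplex $\Delta_{\mathcal E}$, use $\mathcal C^3$ test functions to obtain a \emph{uniform} $O(\eps)$ error bound, and invoke Trotter--Kurtz to conclude convergence in path space. This buys a proof of the proposition as literally worded (``the process $z(t)$ is, in the limit, the bulk BEP''), at the price of the functional-analytic hypotheses you correctly flag: well-posedness of the degenerate Wright--Fisher-type limiting diffusion on $\Delta_{\mathcal E}$ and the core property of smooth functions, which you assume from Ethier--Kurtz theory rather than prove. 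That conditional step is the only part of your argument that is not self-contained, and it concerns an issue the paper's proof does not address at all (the paper cites Trotter--Kurtz only in the separate remark on the SEP/SIP to IRW limit). Two minor points: your supremum should run over the lattice points $\Delta_{\mathcal E}\cap\eps\mathbb{N}_0^L$ with $\eta=z/\eps$, since the prelimit generator acts only there, and the comparison of semigroups on different state spaces in Trotter--Kurtz requires composing with the natural embedding of discrete configurations into $\Delta_{\mathcal E}$; both are routine but worth stating if you want the argument airtight.
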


\vskip.3cm
{\bf Proof.}
Let  $F:\mathbb{R}_+^L \to \mathbb{R}$, $F=F(z)$ be a two times countinuously differentiable function, i.e. $F \in \mathcal C^2(\mathbb{R}_+^L)$.
Let $z^\eps =\(z_1^\eps, \dots, z_L^\eps\) \in \mathbb{R}_+^L$ be such that $z^\eps/\eps \in \mathbb{N}_+^L$, then for any $F$ as above, there exists
 $f:\mathbb{N}_0^L \to \mathbb{R}$, $f=f(\eta)$,  such that
\begin{equation}
F\(z^\eps_1, \dots, z^\eps_L\):=f\(\frac{z^\eps_1}{\eps}, \dots, \frac{z^\eps_L}{\eps}\)\;.
\end{equation}
Let ${\cal L}_0^{\eps}$ be the generator of the process $z^\eps(t)$ induced by the SIP, then ${\cal L}_0^{\eps}$ acts on  $F=F(z^\eps)$ as follows:
\begin{eqnarray*}
\[{\cal L}_0^{\eps}F\](z^\eps)&=& \[{\cal L}_0^{SIP}f\]\(\frac {z^\eps} \eps\) \nonumber\\\\
&=&\sum_{i=1}^{L-1}\Bigg\{\frac{z^\eps_i}{\eps}\(2k+\frac{z^\eps_{i+1}}{\eps}\)
\[ f \( \frac{z^\eps_1}{\eps} , \dots, \frac{z^\eps_i}{\eps}-1, \frac{z^\eps_{i+1}}{\eps}+1, \dots, \frac{z^\eps_L}{\eps} \)-f \( \frac {z^\eps} \eps \) \] \nonumber
\\&& \hskip.6cm+ \: \frac{z^\eps_{i+1}}{\eps}\(2k+\frac{z^\eps_i}{\eps}\)\[ f \( \frac{z^\eps_1}{\eps} , \dots, \frac{z^\eps_i}{\eps}+1, \frac{z^\eps_{i+1}}{\eps}-1, \dots, \frac{z^\eps_L}{\eps} \)-f \( \frac {z^\eps} \eps \) \] \Bigg\}\nonumber\\
&=&\sum_{i=1}^{L-1}\Bigg\{\frac{z^\eps_i}{\eps}\(2k+\frac{z^\eps_{i+1}}{\eps}\)
\[ F \(z^\eps_1, \dots, z^\eps_i-\eps, z^\eps_{i+1}+\eps, \dots, z^\eps_L\)-F \(z^\eps\) \] \nonumber
\\&& \hskip.6cm+ \: \frac{z^\eps_{i+1}}{\eps}\(2k+\frac{z^\eps_i}{\eps}\)\[ F \(z^\eps_1, \dots, z^\eps_i+\eps, z^\eps_{i+1}-\eps, \dots, z^\eps_L\)-F \(z^\eps\)  \] \Bigg\}\;.\nonumber\\
\end{eqnarray*}
Suppose that $z^\eps$ converges to a finite limit $z^\eps \to z \in \mathbb{R}_+^L$ as $\eps \to 0$. Then, from the regularity assumptions on $F$, we have
\begin{eqnarray}
[\Delta^\eps_{i,i+1}F](z^\eps)&:=&F(z^\eps_1,\ldots,z^\eps_{i-1}-\eps,z^\eps_i+\eps,\ldots, z^\eps_L)-F(z^\eps_1,\ldots,z^\eps_{i-1},z^\eps_i,\ldots, z^\eps_L) \nonumber \\&=&-\eps\left (\frac{\partial }{\partial z_{i}}-\frac{\partial }{\partial z_{i+1}}\right )F(z^\eps)+{ o(\eps)}\;,
\end{eqnarray}
while
\begin{eqnarray}
 [\Delta_{i+1,i}^\eps F](z^\eps)&:=&F(z^\eps_1,\ldots,z^\eps_{i-1}+\eps,z^\eps_i-\eps,\ldots,z^\eps_L)-F(z^\eps_1,\ldots,z^\eps_{i-1},z^\eps_i,\ldots,z^\eps_L)\nonumber \\&=&\eps\left (\frac{\partial }{\partial z_{i}}-\frac{\partial }{\partial z_{i+1}}\right )F(z^\eps)+{ o(\eps)}\;,
 \end{eqnarray}
and
$$
[(\Delta^\eps_{i+1,i}+\Delta^\eps_{i,i+1})F](z^\eps)=\eps^2 \left ( \frac{\partial }{\partial z_{i}}-\frac{\partial }{\partial z_{i+1}}\right  )^2F(z^\eps)+ o(\eps^2)\;.
$$

Therefore we have
\be\label{appross1}
\[{\cal L}_0^{\eps}F\](z^\eps)=\[-2k(z^\eps_i-z^\eps_{i+1})\left(\frac{\partial}{\partial z_i}-\frac{\partial}{\partial z_{i+1}}\right) +
z^\eps_i z^\eps_{i+1}   \left ( \frac{\partial }{\partial z_{i}}-\frac{\partial }{\partial z_{i+1}}\right  )^2\]F(z^\eps) +o(1)\;.
\ee
Thus, for any $F$ as above,
 $ \lim_{\eps \to 0}[ {\cal L}_0^{\eps} F ](z^\eps)=[{\cal L}_0^{BEP}F](z)$. Moreover the total energy is clearly conserved in the limit and it is given by $\sum_{i=1}^L z_i=\sum_{i=1}^L z^\eps_i= \eps \, N =\mathcal E$.  \qed

\vskip.3cm


The same scaling analysis of the inclusion walkers can be performed on the bulk dynamics of
independent random walkers. This yields a deterministic process as scaling limit, which is
also dual to independent random walkers (cfr. \cite{GKRV}).
\begin{prop}
\label{scalingirw}
Let $\eta(t)=\(\eta_1(t),\dots, \eta_{L}(t)\)$ be the bulk process generated by  ${\cal L}_0^{IRW}$  with  $N$ particles. Let $\eps=\mathcal E /N$ for some fixed $\mathcal E>0$. Then the process $y(t)=\(y_1(t), \dots, y_L(t)\)$ where $y_i(t)=\eps \, \eta_i(t)$ is, in the limit $\eps \to 0$, the deterministic energy
process (DEP) with total energy $\sum_{i=1}^{L-1} y_i(t) = \mathcal E$ generated by
$$
{\cal L}_0^{DEP} = \sum_{i=1}^{L-1} (y_i-y_{i+1})\left(\frac{\partial}{\partial y_{i+1}}-\frac{\partial}{\partial y_{i}}\right)\;.
$$
\end{prop}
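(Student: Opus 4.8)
The plan is to repeat, almost verbatim, the argument proving Proposition~\ref{scalingsip}, exploiting the fact that the IRW bulk rates carry \emph{no} occupation‑dependent enhancement factor and are therefore linear in the occupations. First I would fix $F\in\mathcal C^2(\R_+^L)$ and associate to it the lattice function $f$ through $F(y^\eps)=f(y^\eps/\eps)$, exactly as in the proof of Proposition~\ref{scalingsip}. Writing ${\cal L}_0^\eps$ for the generator of $y^\eps(t)=\eps\,\eta(t)$ induced by ${\cal L}_0^{IRW}$, the substitution $\eta_i=y_i^\eps/\eps$ turns the bulk part of \eqref{IND} into
\be
[{\cal L}_0^\eps F](y^\eps)=\sum_{i=1}^{L-1}\frac{y_i^\eps}{\eps}\,[\Delta^\eps_{i,i+1}F](y^\eps)+\frac{y_{i+1}^\eps}{\eps}\,[\Delta^\eps_{i+1,i}F](y^\eps),
\ee
where $\Delta^\eps_{i,i+1}F$ and $\Delta^\eps_{i+1,i}F$ are exactly the finite differences already introduced and expanded in the proof of Proposition~\ref{scalingsip}.

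Next I would insert the first‑order expansions $[\Delta^\eps_{i,i+1}F](y^\eps)=-\eps\(\frac{\partial}{\partial z_i}-\frac{\partial}{\partial z_{i+1}}\)F(y^\eps)+o(\eps)$ and $[\Delta^\eps_{i+1,i}F](y^\eps)=+\eps\(\frac{\partial}{\partial z_i}-\frac{\partial}{\partial z_{i+1}}\)F(y^\eps)+o(\eps)$. The decisive point is that each rate now carries a single factor $\eps^{-1}$ — and not $\eps^{-2}$ as in the SIP, where the arrival‑site occupation contributed an extra $y_{i+1}^\eps/\eps$. Hence the $\eps^{-1}$ exactly cancels the $\eps$ of the leading Taylor term and leaves an $O(1)$ contribution, while all higher‑order terms are of order $\eps^{-1}\cdot\eps^2=\eps$ and vanish. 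In particular \emph{no} second‑order (diffusive) piece survives: this is the analytic reason why the limit is deterministic rather than a diffusion. Collecting the two terms,
\be
[{\cal L}_0^\eps F](y^\eps)=-\sum_{i=1}^{L-1}(y_i^\eps-y_{i+1}^\eps)\(\frac{\partial}{\partial z_i}-\frac{\partial}{\partial z_{i+1}}\)F(y^\eps)+o(1).
\ee

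Finally, letting $\eps\to0$ with $y^\eps\to z$ and using the regularity of $F$ (so that the $o(1)$ remainder is uniform over the fixed, finite family of edges $i=1,\dots,L-1$), I obtain
\be
\lim_{\eps\to0}[{\cal L}_0^\eps F](y^\eps)=\sum_{i=1}^{L-1}(z_i-z_{i+1})\(\frac{\partial}{\partial z_{i+1}}-\frac{\partial}{\partial z_i}\)F(z)=[{\cal L}_0^{DEP}F](z),
\ee
which is precisely the asserted generator; the corresponding deterministic dynamics is the discrete heat flow $\dot y_i=y_{i-1}-2y_i+y_{i+1}$ in the interior. Note that, since no second‑order term is needed, $\mathcal C^1$ regularity of $F$ would already suffice here, in contrast with the $\mathcal C^2$ requirement for the SIP. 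Conservation of the total energy is immediate and $\eps$‑independent: $\sum_{i=1}^{L}y_i^\eps=\eps\sum_{i=1}^{L}\eta_i=\eps N=\mathcal E$ for every $\eps$, hence $\sum_{i=1}^{L}y_i(t)=\mathcal E$ in the limit.

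I expect the only genuine obstacle to lie beyond this pointwise generator computation, namely in upgrading the convergence of generators to convergence of the processes. As in Remark~\ref{REM} this is handled by the Trotter-Kunz theorem (see Theorem~2.12 of \cite{L}): one must exhibit a common core — here $\mathcal C^2$ (or even $\mathcal C^1$) functions on the fixed finite‑dimensional state space $\R_+^L$ — on which ${\cal L}_0^\eps\to{\cal L}_0^{DEP}$ in the appropriate uniform sense, the estimate above being the substantive input. Because $L$ is finite this verification is routine, but it is the one step that requires care rather than calculation.
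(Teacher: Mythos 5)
Your proof is correct and follows exactly the route the paper intends: the paper gives no separate proof of Proposition \ref{scalingirw}, stating only that ``the same scaling analysis of the inclusion walkers can be performed on the bulk dynamics of independent random walkers,'' and your argument carries out precisely that computation (with the correct observation that the linear-in-occupation rates produce only an $\eps^{-1}$ factor, so the first-order drift survives and no diffusive term appears). Your closing remarks on $\mathcal C^1$ sufficiency and on the Trotter--Kurtz step are sound additions, if anything more careful than the paper itself.
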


\begin{remark1}
One may wonder whether there exists a diffusion process arising as a limit of the Exclusion process.
By performing  an analogous  scaling as above, the rates of the SEP take the form $N z_i(2j-N z_{i+1})$ that become negative in the limit as $N \to \infty$.
Consistently the limit of the SEP generator is a second order differential operator
that cannot be interpreted as the generator of a Markov process, since it is has a negative coefficient in front of the second order derivatives, i.e.
\be\label{appross}
\sum_{i=1}^L \: - z_i z_{i+1} \left ( \frac{\partial }{\partial z_{i+1}}-\frac{\partial }{\partial z_{i}}\right  )^2-2j(z_i-z_{i+1})\left( \frac{\partial }{\partial z_{i}}-\frac{\partial }{\partial z_{i+1}}\right)\;.
\ee
\end{remark1}

\begin{remark1}
The same scaling limit which transfoms the bulk dynamics of the SIP into the one of the  BEP does not  work with the reservoirs. Indeed, applying to ${\cal L}^{SEP}_a$ the scaling of Propositon \ref{scalingsip}, the resulting generator is
\be
 (\a-\g)z_1 \frac{\partial}{\partial z_1}
\ee
which produces a deterministic behavior: $z_1(t) = z_1(0) e^{(\a-\g)t}$ . On the other hand it is simple to check that the thermal bath of the BEP can be obtained from
a boundary driven SIP with a modified reservoir generated by
\be
{\cal L}^{SIP}_{a,q} = \left ( 2kq+\left (q-\frac12\right )\eta_1\right)\[f(\eta^{0,1})-f(\eta)\]+ q \eta_1 \[f(\eta^{1,0})-f(\eta)\]\
\ee
with the condition $q\eps \rightarrow T_a$ as $\eps\rightarrow 0$.
\end{remark1}

\section{Stationary measures at equilibrium}\label{sezeq}
The models introduced in the previous section are Markov processes with discrete or
continuous state spaces. The long term behaviour of the processes are described by their stationary measures.
In general it is hard to determine such measures and, in fact, the invariant states of SIP, SEP and BEP
in non-equilibrium conditions are not explicitly known.
The problem of finding the explicit form of the invariant states is greatly
simplyfied at equilibrium.  The equilibrium condition for our systems can be obtained in two ways: either by suppressing the reservoirs
(i.e. considering only the bulk dynamics ${\cal L}_0$) or, retaining the reservoirs, by imposing equal densities
or equal temperatures at the boundaries of the chain, i.e. $\rho=\rho_a=\rho_b$ or $T=T_a=T_b$.\\
In the first case there exists an infinite family of reversible measures labelled by a continuous parameter.
In the second case (i.e. in the presence of the reservoir) at density $\rho$ (resp. at temperature $T$) the boundary conditions select one reversible measure.



\subsection{Equilibrium product measures}
Reversible invariant probability measure $\mathbb P$ of the bulk dynamics generated by
${\cal L}_0$ can be obtained by imposing the detailed balance condition.
When the state space $\Omega$ is finite or countable,
this condition is expressed by requiring that for any pair of configurations
$\eta,\, \eta^\prime \in \Omega $ the probability $\mathbb P$ satisfies
\be\label{bilanciodett}
L_0(\eta,\eta^\prime){\mathbb P}(\eta)=L_0(\eta^\prime,\eta){\mathbb P}(\eta^\prime)
\ee
where $L_0(\eta,\eta^\prime)$ is the  transition  rate from the configuration $\eta$ to $\eta^\prime$,
i.e. $L_0(\eta,\eta^{\prime}) = {\cal L}_0 f(\eta)$ with $f(\eta)= \delta_{\eta,\eta^{\prime}}$.
When the state space $\Omega$ is continuos, a probability measure with
density $\psi(x)$ is said to be reversible stationary measure if,
for all functions $f$ and $g$ in the domain of the generator ${\cal L}_0$,
it holds
\be
\label{bilancio-cont}
\int f(x){\cal L}_0g(x) \psi(x) dx = \int {\cal L}_0f(x) g(x) \psi(x) dx\;.
\ee
By imposing (\ref{bilanciodett}) in the case of SIP, SEP, IRW and
(\ref{bilancio-cont}) in the case of BEP and requiring the factorization
of the probability measure
one obtain the reversible measures described
in the following proposition, whose proof is left to the reader.
\begin{prop}
For the bulk processes with generator ${\cal L}_0$ defined in Sec. 2 we have\\  \\
{\bf Inclusion walkers SIP($2k$)}\\
The process with generator ${\cal L}_0^{SIP}$ has a reversible stationary measure
given by products of generalized  {\tt Negative Binomial} measures with parameters
$2k>0$ and arbitrary $0<p<1$, i.e.
\be \label{distreqsip}
{\mathbb P}(\eta)=\prod_{i=1}^L  \frac{p^{\eta_i}}{\eta_i!} \frac{\Gamma(2k+\eta_i)}{\Gamma(2k)} (1-p)^{2k}\;.
\ee
 {\bf Exclusion walkers SEP($2j$)}\\
The process with generator ${\cal L}_0^{SEP}$ has reversible stationary measure
given by products of {\tt Binomial} measures with parameters $2j\in\mathbb{N}$ and arbitrary $0<p<1$, i.e.
\be \label{distreqsep}
{\mathbb P}(\eta)=\prod_{i=1}^L \frac{\left(\frac{p}{1-p}\right)^{\eta_i}}{\eta_i!} \frac{\Gamma(2j+1)}{\Gamma(2j+1-\eta_i)} (1-p)^{2j}\;.
\ee
 {\bf Independent random walkers IRW}\\
The process with generator ${\cal L}_0^{IRW}$ has reversible stationary measure
given by products of {\tt Poisson} distribution with
arbitrary parameter $\lambda>0$ i.e.
\be \label{distreqirw}
{\mathbb P}(\eta)= \prod_{i=1}^L  \frac{\lambda^{\eta_i}}{\eta_i!} e^{-\lambda}\;.
\ee
 {\bf Brownian energy process BEP($2k$)}\\
The process with generator ${\cal L}_0^{BEP}$ has reversible measures given
by product  of {\tt Gamma} distributions with
parameters $2k>0$ and arbitrary $\theta>0$, i.e.
\begin{equation} \label{distreqbep}
{\mathbb P}(dz)= \prod_{i=1}^L   \frac{1}{(\theta)^{2k} \Gamma(2k)}  z_i^{2k-1} e^{-z_i/\theta} dz_i\;.
\end{equation}
\end{prop}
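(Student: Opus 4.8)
The plan is to verify the detailed balance relation \eqref{bilanciodett} for the three discrete models (SIP, SEP, IRW) and the reversibility relation \eqref{bilancio-cont} for the BEP, exploiting in all cases the product structure of the proposed measures together with the fact that each bulk generator $\mathcal L_0$ only connects configurations differing by a single nearest-neighbour hop. For the discrete models the only transitions with nonzero rate are $\eta\to\eta^{i,i+1}$ and its reverse $\eta^{i,i+1}\to\eta$, so detailed balance need only be checked edge by edge, and since $\mathbb P$ is a product measure all single-site factors coming from sites $\ne i,i+1$ cancel. Writing the bulk rate for a particle to leave a site with $n$ particles towards a neighbour with $m$ particles as $c(n,m)$, the condition for the edge $(i,i+1)$ collapses to the two-site identity
\be
c(\eta_i,\eta_{i+1})\,w(\eta_i)\,w(\eta_{i+1})=c(\eta_{i+1}+1,\eta_i-1)\,w(\eta_i-1)\,w(\eta_{i+1}+1),
\ee
where $w$ is the single-site weight of the candidate measure. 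Everything thus reduces to checking one recursion for $w$ per model.

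Concretely, the rates are $c(n,m)=n(2k+m)$ for SIP, $c(n,m)=n(2j-m)$ for SEP and $c(n,m)=n$ for IRW, while the relevant weight recursions are $w(n-1)/w(n)=n/[p(2k+n-1)]$ (SIP), $w(n-1)/w(n)=(1-p)\,n/[p(2j-n+1)]$ (SEP) and $w(n-1)/w(n)=n/\lambda$ (IRW). First I would substitute these ratios together with the explicit $c(n,m)$ into the displayed identity; in each case the product of the two rate ratios times the two weight ratios telescopes to $1$, which is the required verification. I expect this to be routine $\Gamma$-function algebra. The only point worth recording is that for SEP the finite support $0\le\eta_i\le 2j$ of \eqref{distreqsep} is automatically respected, because $c(n,m)$ vanishes exactly when a forbidden move (emptying an empty site or overfilling a full one) would be attempted.

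For the BEP the density is continuous and I would instead verify \eqref{bilancio-cont}, i.e. the self-adjointness of $\mathcal L_0^{BEP}$ in $L^2(\psi\,dz)$ with $\psi(z)=\prod_i\theta^{-2k}\Gamma(2k)^{-1}z_i^{2k-1}e^{-z_i/\theta}$. Again $\mathcal L_0^{BEP}$ is a sum of edge operators acting on the pair $(z_i,z_{i+1})$, and by the product form of $\psi$ it suffices to treat a single edge. Setting $D=\partial_{z_i}-\partial_{z_{i+1}}$, the edge operator is $A=z_iz_{i+1}D^2-2k(z_i-z_{i+1})D$. The plan is to form the antisymmetric combination $fAg-gAf$ and use the elementary identity $fD^2g-gD^2f=D(fDg-gDf)$, so that with $W:=fDg-gDf$ one gets $fAg-gAf=z_iz_{i+1}\,DW-2k(z_i-z_{i+1})W$. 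Integrating the first term against $\psi$ by parts along $D$ and using the key identity $D(z_iz_{i+1}\psi)=-2k(z_i-z_{i+1})\psi$ (an immediate consequence of $D\log\psi=(2k-1)(z_i^{-1}-z_{i+1}^{-1})$), the surviving bulk term equals $+2k\int(z_i-z_{i+1})W\psi$, which cancels the drift contribution exactly, leaving only boundary terms.

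The hard part, and the only genuinely non-trivial point, is the vanishing of these boundary terms produced by the two integrations by parts (in $z_i$ and in $z_{i+1}$). At $z_i\to\infty$ (and $z_{i+1}\to\infty$) the factor $e^{-z_i/\theta}$ kills them for $f,g$ in the polynomially bounded domain of the generator; at the origin the boundary density is $z_iz_{i+1}\psi\sim z_i^{2k}$, which tends to $0$ precisely because $2k>0$. Hence every boundary contribution vanishes, the antisymmetric integral is zero, and $\mathcal L_0^{BEP}$ is symmetric with respect to the Gamma product measure \eqref{distreqbep}. I would emphasise that this $2k>0$ requirement at $z_i=0$ plays exactly the role of the support constraint that makes the discrete computation consistent.
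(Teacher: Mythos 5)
Your proposal is correct and follows exactly the route the paper prescribes (and leaves to the reader): imposing the detailed balance condition \eqref{bilanciodett} edge by edge for SIP, SEP and IRW, and the continuous reversibility condition \eqref{bilancio-cont} for the BEP, with the product structure reducing everything to a single-edge computation. The single-site weight ratios, the rate cancellations, the identity $D(z_iz_{i+1}\psi)=-2k(z_i-z_{i+1})\psi$, and the vanishing of the boundary terms (using $2k>0$ at the origin and the exponential decay at infinity) are all verified correctly.
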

\subsection{Equilibrium product measure with reservoirs} \label{Canonical}
We recall that, in the case of  particle systems (see  Sec.\ref{systems}),
the reservoirs are modeled by birth-death processes with   creation rate $b(n)$ and annihilation rate $d(n)$,  $n$  the number of particles at the boundary. Each reservoir  has, thus,  its  own
reversible invariant probability measure, $p(n)$, which satisfies the detailed balance condition $b(n)p(n)=d(n+1)p(n+1)$. This condition can be used to compute $p(n)$. The average value of the random number $n$ (that we call density, irrespective to its value) is the quantity imposed by the reservoir to the system.\\
The effects of the reservoirs, under the equilibrium conditions, are described in the following proposition, which can easily be proved with an explicit computation.
\begin{prop}
For the processes with generator ${\cal L}$ defined in Sec. 2 we have: \\ \\
{\bf Inclusion walkers SIP($2k$)}\\
The left reservoir is modeled by the birth and death process with
rates
\be
b(n) =  \alpha(2k+n), \quad d(n) = \gamma n,\quad n \in {\mathbb N}\;.
\ee
The stationary state of this reservoir is given by a Negative Binomial measure
with parameters $2k$ and $p = \frac{\alpha}{\gamma}$.
The reservoir density is $\rho :=\langle n \rangle = 2k \frac{p}{1-p} = 2k \frac{\alpha}{\gamma-\alpha}$.
The boundary driven process with generator ${\cal L}^{SIP}$ defined in (\ref{INC}),  with parameters $\a,\, \g$ and $\b,\, \d$
such that $\a\b - \g\d =0$
(and thus $\rho_a=\rho_b$)
admits the stationary product distribution:
\begin{equation}\label{cond}
  \otimes_{i=1}^L\text{ \tt Negative-Binom}\(2k, p\) \quad \quad \text{with} \quad \quad p:=\frac{\a}{\ga} =\frac{\delta}{\beta}\; \quad \quad \text{for} \quad \a < \ga \quad \text{and}  \quad \d < \b
\end{equation}
{\bf Exclusion walkers  SEP ($2j$)}\\
The left reservoir is modeled by
\be
b(n) =  \alpha(2j-n), \quad d(n) = \gamma n,\quad  n\in\{0,1,\ldots,2j\}.
\ee
The  stationary state of this reservoir is given by a Binomial measure
with parameters $2j$ and $p = \frac{\alpha}{\gamma+\alpha}$.
The reservoir density is $\rho := \langle n \rangle = 2j p = 2j \frac{\alpha}{\gamma+\alpha}$.
The boundary driven process with generator ${\cal L}^{SEP}$ defined in (\ref{EXC}),  with parameters $\a,\, \g$ and $\b,\, \d$
such that $\a\b - \g\d =0$
(and thus $\rho_a=\rho_b$)
admits the stationary product distribution:
%
%
%
\begin{equation}
\otimes_{i=1}^L\text{ \tt Binom}\(2j, p\) \quad \quad \text{with} \quad \quad p:=\frac{\a}{\g+\a}=\frac{\d}{\b+\d}\;.
\end{equation}
{\bf Independent random walkers IRW}\\
The left reservoir has a constant birth rate
\be
b(n) =  \alpha, \quad d(n) = \gamma n,\quad n\in\mathbb{N} \;.
\ee
This reservoir imposes a Poisson measure
with parameter $\lambda= \frac{\alpha}{\gamma}$.
Therefore the density (i.e. mean number of particle)
is $\rho:=\langle n \rangle = \frac{\alpha}{\gamma}$.
If $\frac{\a}{\ga} = \frac{\d}{\b}$ the process with generator ${\cal L}^{IRW}$ defined in (\ref{IND}) admits the stationary product measure:
\begin{equation}
\otimes_{i=1}^L\text{ \tt Poisson}\(\lambda\) \quad \quad \text{with} \quad \quad \lambda:=\frac{\a}{\ga} = \frac{\d}{\b}\;.
\end{equation}
{\bf Brownian energy process BEP($2k$)}\\
In this case the generator of the left reservoir is :
\begin{eqnarray}
{\cal L}_a^{BEP}&=&
T_a \left(2k\frac{\partial}{\partial z} + y\frac{\partial^2}{\partial z^2}\right) - \frac12 z \frac{\partial}{\partial z}, \quad \quad\quad z\in \R^{+},
\end{eqnarray}
The stationary measure of this reservoir is  the Gamma distribution with parameters $2k$ and $\theta = 2T_a$.
From the properties of the Gamma distribution one has $\langle z \rangle = 4k T$.
If $T_a=T_b$ then the
process with generator ${\cal L}^{BEP}$ defined in (\ref{BEP}) admits the stationary product measure:
\begin{equation}
  \otimes_{i=1}^L\text{ \tt Gamma}\(2k, 2T\) \quad \quad \text{with} \quad \quad T:=T_a=T_b\;.
\end{equation}
\end{prop}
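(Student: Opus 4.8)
The plan is to exploit reversibility together with the additive structure ${\cal L} = {\cal L}_a + {\cal L}_0 + {\cal L}_b$. First I would pin down the single reservoir measures. For the three particle systems each reservoir is a birth--death chain in the occupation $n$ of the boundary site, so its invariant law $p(n)$ is determined by the detailed balance relation $b(n)p(n) = d(n+1)p(n+1)$, i.e. $p(n+1)/p(n) = b(n)/d(n+1)$. Telescoping this ratio gives, for the SIP, $p(n) = p(0)(\alpha/\gamma)^n \Gamma(2k+n)/(\Gamma(2k)\,n!)$, which is the Negative Binomial with $p=\alpha/\gamma$ (summable precisely when $\alpha<\gamma$); the choice $b(n)=\alpha(2j-n)$ truncates the product and yields the Binomial with $p=\alpha/(\gamma+\alpha)$, while the constant birth rate of the IRW gives $p(n)=p(0)(\alpha/\gamma)^n/n!$, i.e. the Poisson law with $\lambda=\alpha/\gamma$. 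For the BEP the reservoir is a one--dimensional diffusion, so instead I would verify the continuous reversibility condition \eqref{bilancio-cont}: integrating by parts against the Gamma density $z^{2k-1}e^{-z/\theta}$ and requiring the boundary contributions at $0$ and $\infty$ to vanish forces $\theta = 2T_a$. The stated reservoir densities $\rho$ (respectively $4kT$) then follow from the known means of these standard laws.

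For the full boundary--driven process I would combine this computation with the preceding proposition, which asserts that the bulk generator ${\cal L}_0$ is reversible with respect to the product measure $\mathbb{P}$ for every value of the free parameter ($p$, $\lambda$, or $\theta$). Since ${\cal L}_a$ acts only on the occupation of site $1$ and $\mathbb{P}$ factorizes over sites, reversibility of ${\cal L}_a$ with respect to $\mathbb{P}$ reduces to detailed balance for the site-$1$ marginal $\mu$, namely $b(n)\mu(n)=d(n+1)\mu(n+1)$ --- exactly the relation characterizing the left reservoir. Hence ${\cal L}_a$ is reversible for $\mathbb{P}$ precisely when the product parameter equals the left--reservoir value (e.g. $p=\alpha/\gamma$ for the SIP), and symmetrically ${\cal L}_b$ is reversible precisely when the parameter equals the right--reservoir value ($p=\delta/\beta$).

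These two requirements are simultaneously satisfiable if and only if the reservoir densities coincide, i.e. $\alpha/\gamma=\delta/\beta$ (equivalently $\alpha\beta-\gamma\delta=0$) for the SIP, $\alpha/(\gamma+\alpha)=\delta/(\beta+\delta)$ for the SEP, $\alpha/\gamma=\delta/\beta$ for the IRW, and $T_a=T_b$ for the BEP. When this matching holds, the single common parameter makes all three summands ${\cal L}_a,{\cal L}_0,{\cal L}_b$ reversible with respect to the \emph{same} $\mathbb{P}$; a sum of operators each self-adjoint in $L^2(\mathbb{P})$ is again self-adjoint, so ${\cal L}$ is reversible for $\mathbb{P}$, and testing $\int {\cal L}f\,d\mathbb{P}=\int f\,{\cal L}1\,d\mathbb{P}=0$ (using ${\cal L}1=0$) upgrades reversibility to stationarity. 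As the authors remark, no genuine obstacle arises --- the argument is an explicit computation. The only step requiring care is the BEP reservoir, where reversibility must be checked in the continuous formulation \eqref{bilancio-cont} via integration by parts, ensuring the boundary terms vanish so that the Gamma law is truly reversible rather than merely formally invariant.
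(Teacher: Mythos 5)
Your proposal is correct and follows essentially the same route as the paper: the paper leaves the proof as an ``explicit computation'' based on the detailed balance relation $b(n)p(n)=d(n+1)p(n+1)$ stated just before the proposition (and condition \eqref{bilancio-cont} for the BEP), which is exactly what you carry out, organized via reversibility of each summand ${\cal L}_a$, ${\cal L}_0$, ${\cal L}_b$ with respect to the common product measure. The only cosmetic caveat is that ``sum of self-adjoint operators is self-adjoint'' should be read at the level of symmetric bilinear forms on a common core (as you effectively use it), since the generators here are unbounded.
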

%

\section{Duality}\label{SECT:Dual}

When the reservoirs of our boundary driven processes work at different
parameters value so that different densities or temperatures are imposed
on the two sides, the stationary measure is in general unknown.
Remarkable exceptions are the boundary driven SEP(1), with at most
one particle per site, for which a matrix product solution is available
\cite{DEHP}, and the case of IRW, where the product structure of the
equilibrium invariant measure is preserved.

An alternative approach to characterize the stationary non-equilibrium
state is provided by duality.
{In section \ref{sec5uno} we describe duality for the processes previously
defined. Dual processes have absorbing boundaries at two extra sites
with suitable absorbing rates depending on the parameters reservoirs.
In general the duality functions are related to moments of the stationary
distribution. In section \ref{sec5due} we show several applications of duality
and we obtain via duality the stationary non-equilibrium measure of independent
random walkers.}

\subsection{{Dual processes}}
\label{sec5uno}
Consider the extended chain $\{0,1, ..., L, L+1\}$ obtained from the original one by adding the bounday sites $\{0,L+1\}$.
 Let $\eta=(\eta_1, ...,\eta_L)$ be the
configuration in the original process, we denote by $\xi=(\xi_0, \xi_1, ..., \xi_L, \xi_{L+1})\in\Omega^{dual}$ the configuration for the dual process, where the configuration space $\Omega_{Dual}$
will be specified later. We say that $(\eta_t)_{t\ge 0}$ and $(\xi_t)_{t\ge 0}$ are dual with duality function $D(\eta,\xi)$ if
\begin{equation}\label{Duality}
\mathbb E_\eta[D(\eta_t, \xi)]=\mathbb E_\xi[D(\eta, \xi_t)] \quad \quad\quad  \text{for any } \: \quad t\ge 0, \quad (\eta,\xi)\in\Omega\times\Omega_{Dual}\;,
\end{equation}
where $\mathbb E_\eta$ denotes the expectation in the original process started from the configuration $\eta$, whereas $\mathbb E_\xi$ denotes the expectation in the dual process started from the configuration $\xi$.
\begin{teorema1}
\label{mainteo}
For the processes defined in Sec. 2 we have
the following duality results.\\ \\
{\bf Inclusion walkers SIP($2k$).}
The process $(\eta_t)_{t\ge0}$ defined by \eqref{INC} is dual to the absorbing boundaries process
$(\xi_t)_{t\ge0}$ with configuration space $\Omega_{Dual}=\mathbb{N}_0^{L+2}$ with generator
\begin{eqnarray}
\label{dual-inc}
{\cal L}^{SIP}_{\text{Dual}}f(\xi)&=&(\ga-\a) \xi_1 \[f(\xi^{1,0})-f(\xi)\]\\
&+& \sum_{i=1}^{L-1}\xi_i(2k+\xi_{i+1})\[f(\xi^{i,i+1})-f(\xi)\]+  \xi_{i+1}(2k+\xi_i)\[f(\xi^{i+1,i})-f(\xi)\]\nonumber\\
&+&(\b-\d) \xi_L \[f(\xi^{L,L+1})-f(\xi)\]\;,\nonumber
\end{eqnarray}
with duality function
\begin{eqnarray}
\label{dualSIP}
D^{SIP}(\eta,\xi)=\(\frac \a {\ga-\a}\)^{\xi_0} \; \prod_{i=1}^L \frac{\eta_i !}{(\eta_i-\xi_i)!} \, \frac{\Gamma(2k)}{\Gamma(2k+\xi_i)}\; \(\frac \d {\b-\d}\)^{\xi_{L+1}}\;.
\end{eqnarray}

{\bf Exclusion walkers SEP($2j$).}
The process $(\eta_t)_{t\ge0}$ defined by \eqref{EXC} is dual to the absorbing boundaries process $(\xi_t)_{t\ge0}$ with configuration space $\Omega_{Dual}=\mathbb{N}_0\times\{0,1,\ldots,2j\}^{L}\times\mathbb{N}_0$ with generator
\begin{eqnarray}
{\cal L}^{SEP}_{\text{Dual}}f(\xi)&=&(\a+\ga) \xi_1 \[f(\xi^{1,0})-f(\xi)\]\\
&+& \sum_{i=1}^{L-1}\xi_i(2j-\xi_{i+1})\[f(\xi^{i,i+1})-f(\xi)\]+ \xi_{i+1}(2j-\xi_i)\[f(\xi^{i+1,i})-f(\xi)\]\nonumber\\
&+&(\b+\d) \xi_L \[f(\xi^{L,L+1})-f(\xi)\]\;,\nonumber
\end{eqnarray}
with duality function
\begin{equation}\label{dualSEP}
D^{SEP}(\eta,\xi)=\(\frac \a {\a+\ga}\)^{\xi_0} \; \prod_{i=1}^L \frac{\eta_i !}{(\eta_i-\xi_i)!} \, \frac{\Gamma(2j+1-\xi_i)}{\Gamma(2j+1)}\; \(\frac \d {\b+\d}\)^{\xi_{L+1}}\;.
\end{equation}

{\bf Independent random walkers IRW.}
The process $(\eta_t)_{t\ge0}$ defined by \eqref{IND} is dual to the absorbing boundaries process $(\xi_t)_{t\ge0}$ with configuration space $\Omega_{Dual}=\mathbb{N}_0^{L+2}$ with generator
\begin{eqnarray}
\label{gen-dual-ind}
{\cal L}^{IRW}_{\text{Dual}}f(\xi)&=& \ga \xi_1 \[f(\xi^{1,0})-f(\xi)\]\\
&+& \sum_{i=1}^{L-1}\xi_i\[f(\xi^{i,i+1})-f(\xi)\]+ \xi_{i+1}\[f(\xi^{i+1,i})-f(\xi)\]\nonumber\\
&+&\b \xi_L \[f(\xi^{L,L+1})-f(\xi)\]\;,\nonumber
\end{eqnarray}
with duality function
\begin{equation}\label{dualIRW}
D^{ind}(\eta,\xi)=\(\frac \a \ga\)^{\xi_0} \; \prod_{i=1}^L \frac{\eta_i !}{(\eta_i-\xi_i)!} \; \(\frac \d \b\)^{\xi_{L+1}}\;.
\end{equation}

{\bf Brownian energy process BEP($2k$).}
The process $(z_t)_{t\ge0}$ defined by \eqref{BEP} is dual to the absorbing boundary process $(\xi_t)_{t\ge0}$ with configuration space $\Omega_{Dual}=\mathbb{N}_0^{L+2}$ with generator
\begin{eqnarray}
\label{dual-bep}
{\cal L}^{BEP}_{\text{Dual}}f(\xi)&=& \frac{\xi_1}{2} \[f(\xi^{1,0})-f(\xi)\]\\
&+& \sum_{i=1}^{L-1}\xi_i(2k+\xi_{i+1})\[f(\xi^{i,i+1})-f(\xi)\]+ \xi_{i+1}(2k+\xi_i)\[f(\xi^{i+1,i})-f(\xi)\]\nonumber\\
&+& \frac{\xi_L}{2} \[f(\xi^{L,L+1})-f(\xi)\]\;,\nonumber
\end{eqnarray}
the duality function is
\begin{eqnarray}
\label{dual2}
D^{BEP}(z,\xi)=(2T_a)^{\xi_0} \; \prod_{i=1}^L z_i^{\xi_i} \, \frac{\Gamma(2k)}{\Gamma(2k+\xi_i)}\; (2T_b)^{\xi_{L+1}}\;.
\end{eqnarray}
\end{teorema1}


\vskip.4cm
Theorem \ref{mainteo} can be proven by explicit computations checking that the effect of the generator of a process on duality functions is the same as the effect of the generator of the dual process.
See \cite{GKR, GKRV} for this explicit computation and the proof of duality for the bulk process. The main novelty of
Theorem \ref{mainteo} consists in including a general class of boundary rates.
Therefore, we only  include the proof of the duality property for the boundary terms.
We treat the inclusion process, the proofs for the other processes being analogous.

\vskip.3cm
{\bf Proof of Duality for the SIP$(2k)$.}
From \cite{GKRV} we know that the free boundary inclusion process (i.e. the process generated by the operator ${\cal L}_0^{SIP}$ defined in \eqref{INC}) is self-dual with duality
function:
\begin{equation}\label{D0}
D_0^{SIP}(\eta,\xi)= \prod_{i=1}^L \frac{\eta_i !}{(\eta_i-\xi_i)!} \, \frac{\Gamma(2k)}{\Gamma(2k+\xi_i)}
\end{equation}
this means that the action of  ${\cal L}_0^{SIP}$ on $D_0^{SIP}(\cdot,\xi)$ and on $D_0^{SIP}(\eta,\cdot)$ is the same, i.e.
\begin{equation}
[{\cal L}_0^{SIP} D_0^{SIP}(\cdot,\xi)](\eta)=[{\cal L}_0^{SIP} D_0^{SIP}(\eta,\cdot)](\xi)
\end{equation}
thus, since ${\cal L}_0^{SIP}$ does not act on the $0$-th and $L+1$-th components of $\xi$, we have
\begin{equation}
[{\cal L}_0^{SIP} D^{SIP}(\cdot,\xi)](\eta)=[{\cal L}_0^{SIP} D^{SIP}(\eta,\cdot)](\xi)
\end{equation}
  It remains to verify that the actions of the operators ${\cal L}^{SIP}$ and ${\cal L}^{SIP}_{\text{Dual}}$  at the boundaries are the same on the duality function.
We verify this for the left boundary:
\begin{eqnarray}
[{\cal L}_{a}^{SIP}D^{SIP}(\cdot,\xi)](\eta)&=&\a (2k+\eta_1)\[D^{SIP}(\eta^{0,1},\xi)-D^{SIP}(\eta,\xi)\]+\ga \eta_1 \[D^{SIP}(\eta^{1,0},\xi)-D^{SIP}(\eta, \xi)\]\nonumber\\
&=& D^{SIP}(\eta,\xi)\; \frac{(\eta_1-\xi_1)!}{\eta_1!} \cdot	\Bigg\{ \a (2k+\eta_1)\[\frac{(\eta_1 +1)!}{(\eta_1+1-\xi_1)!}-\frac{\eta_1 !}{(\eta_1-\xi_1)!}\]
\nonumber \\& & \hskip4cm+ \: \ga \eta_1 \[\frac{(\eta_1 -1)!}{(\eta_1-1-\xi_1)!}-\frac{\eta_1 !}{(\eta_1-\xi_1)!}\] \Bigg\}	
\nonumber \\
&=&D^{SIP}(\eta,\xi)\; \frac{\xi_1}{(\eta_1+1-\xi_1)} \cdot	\left\{ \a (2k+\eta_1)-\gamma(\eta_1+1-\xi_1)\right\} \nonumber \\
&=&D^{SIP}(\eta,\xi)\; \frac{\xi_1}{(\eta_1+1-\xi_1)} \cdot	\left\{ \a(2k+\xi_1-1)-(\ga -\a) (\eta_1+1-\xi_1) \right\} \nonumber \\
&=&\a\,\frac{(2k+\xi_1-1)}{(\eta_1+1-\xi_1)} \, \xi_1 \, D^{SIP}(\eta,\xi)-(\ga -\a)  \, \xi_1 \, D^{SIP}(\eta,\xi) \nonumber \\
&=&(\ga-\a) \xi_1 \[D^{SIP}(\eta,\xi^{1,0})-D^{SIP}(\eta,\xi)\]=
[{\cal L}^{SIP}_{\text{Dual},a}D^{SIP}(\eta,\cdot)](\xi)
\end{eqnarray}
 We have used the notations  ${\cal L}^{SIP}_{\text{a}}$ and  ${\cal L}^{SIP}_{\text{Dual,a}}$ to denote the left boundary parts of the generators ${\cal L}^{SIP}$ and
${\cal L}^{SIP}_{\text{Dual}}$ (i.e. the first line in \eqref{INC}, resp. \eqref{dual-inc}). By an analogous computation it is possible to verify that
\begin{equation}
[{\cal L}_{b}^{SIP}D^{SIP}(\cdot,\xi)](\eta)=[{\cal L}^{SIP}_{\text{Dual},b}D^{SIP}(\eta,\cdot)](\xi)
\end{equation}
where ${\cal L}^{SIP}_{\text{b}}$ and  ${\cal L}^{SIP}_{\text{Dual,b}}$ are  the right boundary parts of the two generators.
This concludes the proof of the duality property. \qed
\vskip.3cm

\begin{remark1}
At this point one may wonder whether there exists a diffusion process dual to the SEP. All the attempts that we have  done in this direction seem to suggest that
this is not the case. On the other hand, one may extend the definition of duality at the level of the generators, i.e. we say that the operator
$\mathcal L$ is dual to the operator $\mathcal L_{\text{Dual}}$  with duality function $D(z,\eta)$ if
\begin{equation}
[\mathcal L D(z,\cdot)](\eta)= [\mathcal L_{\text{Dual}} D(\cdot,\eta)](z)\;.
\end{equation}

Notice that this definition does not require  $\mathcal L$ and $\mathcal L_{\text{Dual}}$ to be Markov generators.
Under this definition, it turns out that the SEP$(2j)$ free boundary operator $\mathcal L_0^{\text{SEP}}$ is ``dual'' to the differential operator defined in \eqref{appross} that has been obtained as a scaling limit of the SEP$(2j)$.
\end{remark1}

\subsection{{Moments and duality.}}
\label{sec5due}

In this section we provide some applications of duality.
These generalize the applications of duality considered
before in the context of the simple symmetric exclusion process or the KMP model, \cite{Spo,KMP,GKR}.
Since the dual process voids the chain, 
the problem of computing stationary expectations for
the original process is reduced to the computation of
the absorption probabilities at the boundaries of the
dual walkers.
In particular, we will see how the $n$-points correlations
are related to the  absorption probabilities at the extra sites
$0$ and $L+1$ of $n$ dual walkers.

\subsubsection{Stationary expectations and absorption probabilities}
In the following Proposition we provide a relation connecting the expectation of the duality function
and the absorption probabilities of the dual walkers.

\begin{prop}
\label{corollario-abs}
{Let $\langle\cdot\rangle_L$ denote expectation with respect to the stationary measure
of the processes defined in Section 2. Let $(\xi(t))_{t\ge0}$
denote the dual processes defined in Theorem \ref{mainteo}.
For a given $\xi\in\Omega_{Dual}$ let $|\xi| = \sum_{i=0}^{L+1}\xi_i$ and
define $a_{m}(\xi)$ the absorption probabilities of the corresponding dual walkers initialized
at $\xi$ (i.e. $\xi_i$ dual walkers start from site $i$), namely
\be
a_{m}(\xi) = \mathbb{P}(\{\xi_0(\infty)=m,\xi_{L+1}(\infty)= |\xi|-m\} \;|\; \{\xi_i(0)=\xi_i, \quad \forall i=1,\ldots, L\} )\;.
\ee
Then we have:
in the case of the boundary driven processes $SIP(2k)$, $SEP(2j)$ and $IRW$
\be
\label{duality-stationary}
\langle D(\eta,\xi)\rangle_L = \sum_{m=0}^{|\xi|} \left(c {\rho_a}\right)^{m}
\left(c {\rho_b}\right)^{|\xi|-m} a_{m}(\xi)\;,
\ee
where $c=\frac{1}{2k}$ for $SIP(2k)$ model,  $c=\frac{1}{2j}$ for $SEP(2j)$ model,
$c=1$ for $IRW$ model, and where the densities $\rho_a$ and $\rho_b$ are defined
in Table \ref{rho-def};
in the case of the boundary driven processes $BEP(2k)$
\be\label{duality-z}
\langle D(z,\xi)\rangle_L = \sum_{m=0}^{|\xi|} \left(2T_a\right)^{m}
\left(2T_b\right)^{|\xi|-m} a_{m}(\xi)\;.
\ee
}
\end{prop}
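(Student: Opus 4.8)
The plan is to combine the duality relation \eqref{Duality} with the absorbing nature of the dual dynamics. I would start from \eqref{Duality}, integrate both sides against the stationary measure $\langle\cdot\rangle_L$ in the $\eta$-variable, and use invariance of $\langle\cdot\rangle_L$ under the original dynamics. Since the stationary measure is left invariant by the semigroup, the left-hand side $\langle \mathbb E_\eta[D(\eta_t,\xi)]\rangle_L$ collapses to the time-independent quantity $\langle D(\eta,\xi)\rangle_L$ for every $t\ge 0$. On the right-hand side, Fubini lets me move the stationary average inside the dual expectation, giving $\mathbb E_\xi[\langle D(\eta,\xi_t)\rangle_L]$, so that the identity reads $\langle D(\eta,\xi)\rangle_L = \mathbb E_\xi[\langle D(\eta,\xi_t)\rangle_L]$ for all $t$. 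The whole content is then extracted by letting $t\to\infty$ on the right.

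The key observation for the limit is that each dual generator in Theorem \ref{mainteo} describes a conservative bulk hopping together with pure absorption at the two extra sites $0$ and $L+1$: the total number of dual particles $|\xi|$ is preserved, and every bulk walker is eventually trapped at a boundary site, so $\xi_t$ converges almost surely to a configuration supported on $\{0,L+1\}$. Inspecting the duality functions \eqref{dualSIP}, \eqref{dualSEP}, \eqref{dualIRW} and \eqref{dual2}, I note that on such a limiting configuration every bulk factor reduces to $1$ (since $\xi_i=0$ forces $\frac{\eta_i!}{(\eta_i-\xi_i)!}\frac{\Gamma(2k)}{\Gamma(2k+\xi_i)}=1$, respectively $z_i^{\xi_i}=1$), and only the boundary prefactors survive. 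Using Table \ref{rho-def} to rewrite $\frac{\alpha}{\gamma-\alpha}=c\rho_a$ and $\frac{\delta}{\beta-\delta}=c\rho_b$ with $c=\frac{1}{2k}$ for SIP (and the analogous identities for SEP, IRW, and the prefactors $2T_a,2T_b$ for BEP), the limiting value of the duality function depends only on how the $|\xi|$ particles split between the two absorbing sites. If $m$ of them end at site $0$ and $|\xi|-m$ at site $L+1$ --- an event of probability $a_m(\xi)$ --- then $D$ tends to $(c\rho_a)^m(c\rho_b)^{|\xi|-m}$, a deterministic constant free of $\eta$, so its stationary average is itself. Summing over $m$ with weights $a_m(\xi)$ produces exactly \eqref{duality-stationary}, and the identical computation with $(2T_a,2T_b)$ in place of $(c\rho_a,c\rho_b)$ gives \eqref{duality-z}.

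The step that requires care is the interchange of $\lim_{t\to\infty}$ with the dual expectation $\mathbb E_\xi$. For the particle systems this is harmless: for fixed total mass $|\xi|$ the dual configuration $\xi_t$ ranges over a set on which $\langle D(\eta,\xi_t)\rangle_L$ is uniformly bounded (the occupation at any site never exceeds $|\xi|$, and the relevant stationary moments are finite), so dominated convergence applies directly. The genuinely delicate case is the BEP, where the factor $z_i^{\xi_i}$ is unbounded; there one must control the growth through the finiteness of the moments $\langle z_i^{\xi_i}\rangle_L$ of the stationary measure and dominate $\langle D(z,\xi_t)\rangle_L$ uniformly in $t$ before passing to the limit. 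Establishing such a uniform integrable bound, together with the almost-sure absorption of all dual walkers in these finite-particle configurations, is the main obstacle; once it is in place the convergence $\langle D(z,\xi_t)\rangle_L \to \sum_{m=0}^{|\xi|} (2T_a)^m(2T_b)^{|\xi|-m}a_m(\xi)$ follows and the proposition is proved.
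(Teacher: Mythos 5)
Your proposal is correct and ends with the same mechanism as the paper (almost-sure absorption of the dual walkers, then evaluation of the duality function on configurations supported on $\{0,L+1\}$, where the bulk factors equal $1$ and the boundary prefactors are $c\rho_a$, $c\rho_b$, resp.\ $2T_a$, $2T_b$), but your first step takes a genuinely different route. The paper fixes a configuration $\eta$ and writes $\langle D(\eta,\xi)\rangle_L=\lim_{t\to\infty}\mathbb{E}_\eta\left[D(\eta_t,\xi)\right]$, i.e.\ it invokes ergodicity --- convergence to the stationary state from an arbitrary initial condition --- and only then applies duality, so the limit $t\to\infty$ is taken inside the dual expectation $\mathbb{E}_\xi\left[D(\eta,\xi_t)\right]$ with $\eta$ fixed; there the integrand ranges over the finitely many dual configurations of total mass $|\xi|$ on $L+2$ sites, so the final limit is bounded convergence. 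You never invoke convergence to equilibrium: you use only invariance of $\langle\cdot\rangle_L$ under the semigroup plus Tonelli/Fubini (legitimate since $D\ge 0$), obtaining $\langle D(\eta,\xi)\rangle_L=\mathbb{E}_\xi\left[\langle D(\cdot,\xi_t)\rangle_L\right]$ for every finite $t$, and then push $t\to\infty$ under $\mathbb{E}_\xi$. What your route buys: it does not presuppose ergodicity of the boundary-driven process, nor the upgrade (nontrivial, since $D$ is unbounded for SIP, IRW and BEP) from convergence in distribution of $\eta_t$ to convergence of expectations of $D(\cdot,\xi)$, a point the paper's second equality silently assumes. What it costs: the domination step you correctly flag, namely finiteness of $\langle D(\cdot,\xi')\rangle_L$ uniformly over the finitely many $\xi'$ with $|\xi'|=|\xi|$; this is automatic for SEP (bounded state space), and for SIP, IRW and BEP it amounts to finiteness of the stationary (factorial) moments --- a hypothesis the paper's argument also needs but never states. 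Modulo this shared technical point, your proof is complete and matches \eqref{duality-stationary} and \eqref{duality-z} exactly.
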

\vskip.3cm
{\bf Proof.} We prove \eqref{duality-stationary}.
Let $\mu_{L,\rho_a,\rho_b}$ be the stationary measure of the process $\eta$ with boundary densities $\rho_a$ and $\rho_b$.
From the definition of duality in \eqref{Duality} and exploiting the fact that the dual walkers are absorbed
at the boundaries, we have
\bea
\langle D(\eta,\xi)\rangle_L &=& \int D(\eta, \xi) \, \mu_{L, \rho_a, \rho_b}(d\eta)\\
&=& \lim_{t \to \infty} \mathbb E_\eta\[D(\eta_t,\xi)\]\nonumber\\
&=& \lim_{t \to \infty} \mathbb E_\xi \[D(\eta, \xi_t)\]\nonumber\\
&=&\sum_{m=0}^{|\xi|}\(c \rho_a\)^m \(c \rho_b\)^{|\xi|-m} \, \mathbb{P}_\xi\(\{\xi_0(\infty)=m,\xi_{L+1}(\infty)= |\xi|-m\} \)\, , \nonumber
\ea
where $\mathbb P_\xi$ is the probability law of the dual process $(\xi(t))_{t\ge0}$ started at $\xi$ at time zero,
and the last identity follows from the formulas of the duality functions \eqref{dualSIP}, \eqref{dualSEP}, \eqref{dualIRW}
and the definitions of the densities given in Table 1.  The proof of \eqref{duality-z} is analogous.
\qed

\subsubsection{Averages in the stationary state}
In this section we will see that all the boundary driven stochastic models considered so far
have a linear density or temperature profile i.e. the
expectations $\langle \eta_i \rangle$ or  $\langle z_i \rangle $
with respect to the stationary measure is a linear function of $i$.
This is an immediate consequence of duality since, in order  to study the
average at site $i$ in the original process, it is enough to consider
a single dual random walker started at $i$ and it is an elementary
fact that its absorption probabilities at the boundaries will be
linear in $i$. Let us see.

\vskip.3cm
For a system of size $L$, the expectations $\langle \eta_i \rangle_L$ and $\langle z_i \rangle_L$ can be written, up to a factor, as the expectations (with respect to the stationary measures of the processes $\eta(t)$ and $z(t)$) of the duality functions
$D(\eta,\xi^i)$  computed in the configuration $\xi^i$  with  $\xi^i_j= \d_{i,j}$.
Furthermore,  using Proposition \ref{corollario-abs}, they can be explicitly found as functions of the dual absorption probabilities $p_i:=a_1(\xi^i)$ and $a_0(\xi^i)=1-p_i$ ($a_m(\xi)$ as in Proposition \ref{corollario-abs}). We have
\be \label{etai}
\langle \eta_i \rangle_L = \frac 1 c \langle D(\eta, \xi^i) \rangle_L =\rho_a \, p_i + \rho_b\,(1-p_i) \quad \quad \quad \quad \quad i=1, \dots, L\;
\ee
for  SIP, SEP and IRW, with $c$ as in Proposition \ref{corollario-abs}. Moreover, denoting by $\theta_a = 4kT_a$ and $\theta_b=4kT_b$, we have
\be\label{zetai}
\langle z_i \rangle_L= 2k  \langle D(z, \xi^i) \rangle_L =\theta_a\, p_i + \theta_b\,(1-p_i) \quad \quad \quad \quad \quad i=1, \dots, L\;
\ee
for the BEP.
It remains to compute $p_i$.\\ Let $X_t$ be the random walker moving on the chain $\{0, 1, \dots, L+1\}$ as follows.
 In the bulk $X_t$  jumps to one of  the neighbouring sites with rate $1/c$ (with $c$ as in Proposition \eqref{corollario-abs}), whereas it is absorbed by the left boundary (site $0$) with rate $u$ and by the right boundary (site $L+1$)
 with rate $v$. The values of $c, u$ and  $v$ depend on the model, they are listed in Table \eqref{boh}.

\begin{table}[h!]
\centerline{
\begin{tabular}{|c|c|c|c|c|}
\hline System & $SIP$ & $SEP$ & $IRW$ & $BEP$\\
\hline   $u$ & $ {\g-\a}$    &  ${\g+\a}$   & $ \g$ & $1/{2}$ \\
\hline $v$  &  ${\b-\d}$  & ${\b+\d}$ &  ${\b}$ &  $1/{2}$ \\
\hline $c$ & $1/{2k}$ & $1/{2j}$ & $1$ & $1/{2k}$\\
\hline
\end{tabular}}
\caption{ \label{boh} Dual processes jump rates.}
\end{table}

The value  $p_i$ can then be interpreted as the probability for the walker $X_t$ started at $i$ to be absorbed by the left boundary, i.e.   $p_i= \mathbb P\(X_\infty=0 \, | \; X_0=i\)$. They verify the following system of equations:
\begin{equation}\label{pi}
\left\{
\begin{array}{ll}
p_0=1\\\\
p_1= \frac{1}{cu+1} \, p_2 + \frac{cu}{cu+1} \, p_0\\\\
p_{i-1}-2p_i +p_{i+1}=0, \quad i=2, \dots, L-1\\\\
p_L= \frac{cv}{cv+1}\, p_{L+1}+ \frac{1}{cv+1}\, p_{L-1}\\\\
p_{L+1}=0\;.
\end{array}
\right.
\end{equation}
Thus  $p_i$ is a linear function of $i$ for $1 \le i \le L$ and the solution of \eqref{pi} is given by:
\be\label{pii}
p_i=\frac{L+\frac 1 {cv}-i}{L+\frac 1 {cu}+\frac 1 {cv}-1} \quad \quad \text{for} \quad i=1, \dots, L \quad \quad \text{and} \quad p_0=1, \:\: p_{L+1}=0\;.
\ee

Hence, from \eqref{etai}, for {SIP}, {SEP}, and {IRW} we get
\be\label{Exp}
\langle \eta_i \rangle_L= \frac{\rho_a\(L+\frac 1 {cv}-i\)+ \(i+\frac 1 {cu} -1\) \rho_b}{L+\frac 1 {cu}+\frac 1 {cv}-1} \quad \quad \quad \quad \quad i=1, \dots, L\;
\ee
with $u, v$ as in Table \eqref{boh}, and $\rho_a, \rho_b$ as in Table \eqref{rho-def}.
\vskip.3cm

%
%
%

\begin{remark1}
Under a suitable rescaling of the constants tuning the annihilation rates at the
boundaries (see Remark \ref{REM}), the solutions of the exclusion and of the
inclusion walkers scale to those of the independent walkers.
\end{remark1}

\vskip.3cm

Finally, from \eqref{zetai} and \eqref{pii}, for the {BEP} we get
\be \label{ExpBEP}
\langle z_i\rangle =
\frac{\theta_a(L+4k-i) + \theta_b(i-1+4k)}{L+8k-1}\;
\quad \quad \quad \quad \quad i=1, \dots, L\;.
\ee

and, by a similar computation, we find
\be \label{ExpKMP}
\langle z_i\rangle =
\frac{T_a(2L-3-2i)+T_b(2i-1)}{2(L-2)}
\quad \quad \quad \quad \quad i=1, \dots, L\;.
\ee
for the {KMP} model.

\subsubsection{Stationary product measure for the boundary driven independent walkers}
{In the following proposition the stationary measure for the boundary driven
IRW is obtained as  an application of the duality property.}
\begin{prop}\label{IRWStSt}
{The stationary measure of the process
with generator ${\cal L}^{IRW}$ defined in \ref{IND} is the  product measure with marginals at each site $i=1, \dots, L$ given by Poisson distribution with parameter
\be
\lambda_i = \frac{\rho_a\left(L+\frac{1}{\beta} -i\right) + \rho_b\left(i-1+\frac{1}{\gamma}\right)}{L + \frac{1}{\beta} +\frac{1}{\gamma} -1}\;.
\ee
}
\end{prop}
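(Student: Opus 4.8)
The plan is to establish that the product Poisson measure $\mu = \otimes_{i=1}^L \text{Poisson}(\lambda_i)$ with the claimed parameters $\lambda_i$ is stationary for $\mathcal{L}^{IRW}$, by verifying directly that $\sum_\eta (\mathcal{L}^{IRW} f)(\eta)\,\mu(\eta) = 0$ for all test functions $f$, equivalently that $\int \mathcal{L}^{IRW} f \, d\mu = 0$. The key structural fact I would exploit is that IRW has no interaction: each particle moves independently, so the generator is a sum of single-site/edge contributions that are linear in the occupation variables. This linearity strongly suggests that a product of Poisson measures is the natural candidate, because Poisson measures are the equilibrium measures for the bulk (see \eqref{distreqirw}) and they are closed under the independent superposition structure of the dynamics.

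First I would recall from \eqref{etai} and \eqref{Exp} that duality already pins down the stationary \emph{first moments}: $\langle \eta_i \rangle_L = \lambda_i$, where $\lambda_i$ is exactly the expression in the statement after substituting $c=1$, $u=\gamma$, $v=\beta$ from Table \ref{boh} and $\rho_a = \alpha/\gamma$, $\rho_b = \delta/\beta$ from Table \ref{rho-def}. So the content of the proposition is not the value of $\lambda_i$ but rather the assertion that the stationary measure is \emph{exactly} a product of Poissons with these means — i.e. that all higher correlations factorize. The cleanest route is to guess the product Poisson form and verify stationarity, rather than to reconstruct the measure from all its moments. Concretely, I would test the generator against the family of functions $f(\eta) = \prod_i x_i^{\eta_i}$ (or equivalently exponential functions $e^{\sum_i s_i \eta_i}$), since the action of $\mathcal{L}^{IRW}$ on such functions is explicit and since knowing $\int \mathcal{L}^{IRW} f\, d\mu = 0$ for this separating family suffices. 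Computing $\int \eta_i \, x_i^{\eta_i-1}\prod_{j}x_j^{\eta_j}\,d\mu$ under a Poisson($\lambda_i$) marginal produces the standard identity $\mathbb{E}[\eta_i g(\eta_i)] = \lambda_i \mathbb{E}[g(\eta_i+1)]$, which converts occupation-weighted hopping rates into clean algebraic relations among the $\lambda_i$.

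Carrying this through, the stationarity condition collapses to the requirement that the $\lambda_i$ satisfy a \emph{discrete harmonicity / flux-balance} system: in the bulk $\lambda_{i-1} - 2\lambda_i + \lambda_{i+1} = 0$ for $2 \le i \le L-1$, with boundary relations encoding injection rates $\alpha,\delta$ and removal rates $\gamma,\beta$ that couple $\lambda_1,\lambda_L$ to the effective reservoir values $\rho_a,\rho_b$. These are precisely the equations \eqref{pi} (transposed for the forward chain rather than the dual absorption probabilities), so the linear profile \eqref{pii}–\eqref{Exp} gives the unique solution, and this is where the explicit formula for $\lambda_i$ comes from. I would present the verification edge by edge: each bulk edge $(i,i+1)$ contributes a term proportional to $\lambda_i - \lambda_{i+1}$ times a factor that cancels against its neighbor, and the two boundary edges contribute terms that vanish exactly when $\lambda_1,\lambda_L$ satisfy the reservoir balance. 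The main obstacle — though it is bookkeeping rather than conceptual — is handling the boundary terms correctly: one must check that the birth-death reservoir rates $\alpha,\gamma$ (resp. $\delta,\beta$) together with the Poisson shift identity reproduce exactly the boundary equations of \eqref{pi}, so that the claimed $\lambda_i$ really closes the system. Once that bulk-plus-boundary linear system is seen to be solved by the stated $\lambda_i$, stationarity of the product measure follows, and uniqueness is guaranteed because the finite-state-irreducible (or positive-recurrent) Markov chain has a unique stationary measure.
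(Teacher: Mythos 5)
Your proof is correct, but it takes a genuinely different route from the paper's. The paper deduces the proposition from its duality machinery: by \eqref{duality-stationary}, the stationary expectation $\langle \prod_{i=1}^L \eta_i!/(\eta_i-\xi_i)!\rangle_L$ is a sum over absorption events of the dual walkers, and since the dual walkers of IRW are \emph{independent}, this sum factorizes site by site into $\prod_{i=1}^L\left(\rho_a p_i + \rho_b(1-p_i)\right)^{\xi_i} = \prod_{i=1}^L \lambda_i^{\xi_i}$ with $p_i$ given by \eqref{pii}; as these are exactly the joint factorial moments of the product Poisson measure, and such moments determine the law, the stationary measure is identified. You instead guess the product Poisson form and verify $\int {\cal L}^{IRW} f\, d\mu = 0$ directly on generating functions $f(\eta)=\prod_i x_i^{\eta_i}$, which (via the Poisson identity $\mathbb{E}[\eta\, g(\eta)] = \lambda\,\mathbb{E}[g(\eta+1)]$) reduces stationarity to the discrete harmonicity equations $\lambda_{i-1}-2\lambda_i+\lambda_{i+1}=0$ together with the boundary relations $\lambda_2-\lambda_1=\gamma\lambda_1-\alpha$ and $\lambda_{L-1}-\lambda_L=\beta\lambda_L-\delta$, which the stated linear profile solves; uniqueness of the stationary distribution then concludes. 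Your argument is more elementary and self-contained: it bypasses duality entirely and avoids the moment-determinacy step that is implicit in the paper, and it is the classical route for boundary-driven independent particles (cf. the zero-range literature cited in the paper). The paper's argument, in exchange, \emph{produces} the answer rather than verifying a guess, illustrates the paper's main theme, and rests on the one feature special to IRW among the models treated, namely that the dual walkers do not interact. Two small repairs to your write-up: the state space $\mathbb{N}_0^L$ is countably infinite, not finite, so uniqueness must be invoked via irreducibility together with positive recurrence (your hedge) of the non-explosive chain, not via finiteness; and ``separating family'' is not by itself the right justification for why testing on generating functions suffices --- what makes it work is that $\int {\cal L}^{IRW} f_x\, d\mu$ is a power series in $(x_1,\dots,x_L)$ whose coefficients are precisely the balance equations of the chain, so its vanishing for all $x$ in a neighborhood yields full stationarity.
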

\begin{proof}
{Since for a random variable $X$ with Poisson distribution of parameter $\lambda$
the $n^{th}$ factorial moment is given by $\mathbb{E}(X(X-1)\ldots(X-n+1)) = \lambda^n$,
to prove the proposition is enough to check the identity
\be
\label{verifica}
\langle \,\prod_{i=1}^L \frac{\eta_i!}{(\eta_i-\xi_i)!}\, \rangle_L = \prod_{i=1}^L \lambda_i^{\xi_i}\;.
\ee
To this aim consider a dual walker that starts his walk from site $i\in\{1,\ldots, L\}$.
The probability $p_i$  of its ultimate absorption at site $0$ is given by
\be
\label{absor}
p_i = \frac{L+\frac{1}{\beta} -i}{L + \frac{1}{\beta} +\frac{1}{\gamma} -1}\;
\ee
(see \eqref{pii} and Table (2)). Using formula (\ref{duality-stationary}) and observing that the absorption
probabilities of a total of $\sum_{i=1}^L \xi_i$ dual walkers, with
$\xi_i$ of them initialized at site $i$, completely factorize because the
walkers are independent, one has
\begin{eqnarray*}
\langle \,\prod_{i=1}^L \frac{\eta_i!}{(\eta_i-\xi_i)!}\, \rangle_L
& = &
\prod_{i=1}^L \sum_{m_i=0}^{\xi_i} \rho_a^{m_i}\rho_b^{\xi_i-m_i}
 {\xi_i \choose m_i}  p_i^{m_i} (1-p_i)^{\xi_i-m_i}\\
& = &
\prod_{i=1}^L \left(\rho_a p_i + \rho_b(1-p_i)\right)^{\xi_i}\;.
\end{eqnarray*}
Inserting (\ref{absor}) in the above formula and remembering
the definition of the $\lambda_i$, equation (\ref{verifica}) is verified
and the proof of the proposition is completed.}
\end{proof}

\subsubsection{Duality moment functions}
It turns out from the previous section that the expectations of the duality functions $D(\eta_t,\xi)$ with respect to the probability law
of the original process $\eta_t$, i.e. the {\em ``duality moment  functions''}
\begin{equation}\label{G1}
G(\eta,\xi,t):= \mathbb E_\eta\[D(\eta_t,\xi)\]
\end{equation}
 are usually some kind of moments of the original process $\eta_t$ labelled by the discrete parameter $\xi\in\Omega_{dual}$.
In the case of  SEP, SIP and IRW, the function $G(\eta,\xi,t)$ is, up to a multiplicative constant depending on $\xi$, the $\xi$-th factorial moment at time $t$
when the initial value is $\eta$. In the case of BEP, the function $G(z,\xi,t):= \mathbb E_z [D(z_t,\xi)]$ is the standard $\xi$-th moment.
Under suitable conditions, the set of moments, obtained on varying the parameter $\xi$,  completely characterizes
the law of the original process.
 From duality we find that  the equations for the functions $G(\eta,\xi,t)$ are closed and quite simple to write.

 \begin{prop}\label{prop:G}
 Let $\eta_t$ and $\xi_t$ be two dual Markov processes with duality function $D(\eta,\xi)$ and let $\mathcal L$ and $\mathcal L_{\text{Dual}}$ be their generators, then the duality moment function $G(\eta,\xi,t)$ defined in \eqref{G1}  satisfies the following equation:
\begin{equation}\label{eqG}
\frac d {dt} \, G(\eta,\xi,t)= \[\mathcal L_{\text{Dual}} G(\eta, \cdot, t)\](\xi)\;.
\end{equation}
 \end{prop}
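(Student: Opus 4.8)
The plan is to differentiate the definition \eqref{G1} under the expectation and then invoke the duality relation \eqref{Duality} to transfer the action of the generator from the original process to the dual process. First I would observe that $G(\eta,\xi,t) = \mathbb{E}_\eta[D(\eta_t,\xi)]$, viewed as a function of $t$ for fixed $\eta$ and $\xi$, is governed by the forward/backward structure of the semigroup of the original process. Writing $S_t$ for the semigroup generated by $\mathcal{L}$ acting on the first variable, we have $G(\eta,\xi,t) = (S_t D(\cdot,\xi))(\eta)$, so that differentiating in time gives the Kolmogorov equation
\begin{equation}
\frac{d}{dt}\, G(\eta,\xi,t) = \[\mathcal L\, S_t D(\cdot,\xi)\](\eta) = \[\mathcal L\, G(\cdot,\xi,t)\](\eta)\;,
\end{equation}
where $\mathcal L$ acts on the $\eta$-variable.

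Next I would bring in duality to rewrite the right-hand side in terms of $\mathcal L_{\text{Dual}}$ acting on $\xi$. The key point is that for each fixed $t$ the duality identity \eqref{Duality} states $\mathbb{E}_\eta[D(\eta_t,\xi)] = \mathbb{E}_\xi[D(\eta,\xi_t)]$, which at the infinitesimal level is equivalent to the generator-level duality $[\mathcal L D(\cdot,\xi)](\eta) = [\mathcal L_{\text{Dual}} D(\eta,\cdot)](\xi)$. Applying this identity to the function $\eta \mapsto G(\eta,\xi,t)$ — more precisely, using that $G(\eta,\cdot,t)$ inherits the duality structure because it is a $t$-evolved combination of duality functions — I would conclude
\begin{equation}
\[\mathcal L\, G(\cdot,\xi,t)\](\eta) = \[\mathcal L_{\text{Dual}}\, G(\eta,\cdot,t)\](\xi)\;,
\end{equation}
which is exactly \eqref{eqG}. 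Alternatively, and perhaps more cleanly, I would use the dual representation $G(\eta,\xi,t) = \mathbb{E}_\xi[D(\eta,\xi_t)] = (S_t^{\text{Dual}} D(\eta,\cdot))(\xi)$ obtained from \eqref{Duality}, and then differentiate this expression directly: since $S_t^{\text{Dual}}$ is the semigroup of the dual process, $\frac{d}{dt}(S_t^{\text{Dual}} D(\eta,\cdot))(\xi) = [\mathcal L_{\text{Dual}} S_t^{\text{Dual}} D(\eta,\cdot)](\xi) = [\mathcal L_{\text{Dual}} G(\eta,\cdot,t)](\xi)$, giving the result in one stroke.

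I would favor this second route as the main argument because it isolates the only genuine input — the duality relation — at the very first step and reduces everything else to the elementary Kolmogorov equation for the dual semigroup. The main obstacle is not conceptual but one of rigor: I must justify that $D(\eta,\cdot)$ lies in the domain of $\mathcal L_{\text{Dual}}$ and that the semigroup differentiation is licit, which for the SIP/BEP requires some care since the duality functions are unbounded in $\eta$ and the dual processes live on an infinite state space. In the finite-volume boundary-driven setting treated here, however, the dual walkers are a finite collection of particles on the finite extended chain $\{0,1,\dots,L+1\}$ with absorbing boundaries, so for fixed $\xi$ the dual dynamics explores only finitely many configurations and $G(\eta,\cdot,t)$ is, for each fixed $\eta$, a function of a finite-dimensional argument with a finite generator; the differentiation and the exchange of $\mathcal L_{\text{Dual}}$ with the time derivative are then completely elementary. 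I would therefore state the proof at the level of generality where these domain issues are transparent, remarking that the identity is an immediate consequence of duality together with the backward equation for the dual semigroup.
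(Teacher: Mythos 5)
Your proposal is correct, and the route you favor is genuinely different from the paper's. The paper works in the opposite order: it first differentiates $G(\eta,\xi,t)=\mathbb{E}_\eta[D(\eta_t,\xi)]$ through the Kolmogorov equation of the original process, $\frac{d}{dt}\,\mathbb{E}_\eta[f(\eta_t)]=\mathbb{E}_\eta[\mathcal{L}f(\eta_t)]$ with $f=D(\cdot,\xi)$, then applies the generator-level duality $[\mathcal{L}D(\cdot,\xi)](\eta')=[\mathcal{L}_{\text{Dual}}D(\eta',\cdot)](\xi)$ pointwise under the expectation, and finally pulls $\mathcal{L}_{\text{Dual}}$ (which acts on the $\xi$-variable) out of $\mathbb{E}_\eta$. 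Your main argument instead invokes duality \emph{first}, in exactly the semigroup form \eqref{Duality} in which the paper defines it, writing $G(\eta,\xi,t)=\mathbb{E}_\xi[D(\eta,\xi_t)]=(S_t^{\text{Dual}}D(\eta,\cdot))(\xi)$, and then differentiates via the backward equation of the dual semigroup. This is arguably cleaner on two counts: you never need the generator-level duality identity (the paper uses it even though \eqref{Duality} is a statement about expectations/semigroups, and passing between the two formulations itself requires a differentiation-at-$t=0$ argument), and you never need to interchange $\mathcal{L}_{\text{Dual}}$ with the expectation over the original process, since a generator commutes with its own semigroup by construction. What the paper's order buys is consistency with Theorem \ref{mainteo}, where duality for these models is actually verified at the generator level, so there the generator identity is the natural primitive. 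One caveat on your first route: the phrase that $G(\eta,\cdot,t)$ ``inherits the duality structure'' is precisely the point that would need proof, and unwinding it reproduces the paper's commutation and interchange steps, so you were right to demote it in favor of the second route. Your closing domain discussion is also apposite: for fixed $\xi$ the dual walkers are finitely many, conserved in number and eventually absorbed at sites $0$ and $L+1$, so the dual chain is effectively finite-state and the differentiation of $S_t^{\text{Dual}}$ is elementary.
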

 \vskip.3cm
 {\bf Proof.}
{For any function $f=f(\eta)$ we have
\begin{equation}
\label{aho}
\frac d {dt}  \mathbb{E}_{\eta}\[f(\eta_t)\]=  \mathbb{E}_{\eta}[\mathcal L f(\eta_t)]\;.
\end{equation}
Given $\xi\in\Omega_{dual}$, applying (\ref{aho}) to $f(\eta) = D(\eta,\xi)$
and using duality,  namely $[\mathcal L D(\cdot, \xi)](\eta)=[\mathcal L_{\text{Dual}}D(\eta, \cdot)](\xi)$
one has
\begin{equation}
\frac d {dt} \mathbb E_\eta\[ D(\eta_t,\xi)\]=\mathbb E_\eta \[[\mathcal L_{\text{Dual}}D(\eta_t, \cdot)](\xi)\] = [\mathcal L_{\text{Dual}} \, \mathbb E_\eta \[D(\eta_t, \cdot)]\](\xi).
\end{equation}
Equation \eqref{eqG} follows from the definition of the function $G$ (cfr. (\ref{G1})). }\qed

\begin{coro}\label{coro:G}
{Let $\langle\cdot\rangle$ denote expectation in the stationary state and define
the {\em ``stationary duality moment  functions''}
\begin{equation}\label{G0}
G(\xi):= \langle D(\eta, \xi) \rangle\;.
\end{equation}
It immediately follows from Proposition \ref{prop:G} that $G(\xi)$ satisfies the equation
\begin{equation}\label{eq:G0}
\mathcal (L_{\text{Dual}}G)(\xi)=0\;.
\end{equation}}
\end{coro}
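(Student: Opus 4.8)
The plan is to read Proposition \ref{prop:G} in the stationary regime: I would integrate the evolution equation \eqref{eqG} against the stationary measure of the original process and then use the fact that stationarity forces the time derivative on the left-hand side to vanish, leaving exactly the claimed identity $[\mathcal L_{\text{Dual}}G](\xi)=0$.

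First I would let $\mu$ denote the stationary measure and take the initial configuration $\eta_0$ distributed according to $\mu$, so that $\eta_t\sim\mu$ for every $t\ge0$. Averaging the duality moment function $G(\eta,\xi,t)=\mathbb E_\eta\[D(\eta_t,\xi)\]$ over this initial law and invoking stationarity gives
\begin{equation*}
\int G(\eta,\xi,t)\,\mu(d\eta)=\mathbb E_\mu\[D(\eta_t,\xi)\]=\langle D(\eta,\xi)\rangle=G(\xi)
\end{equation*}
for every $t\ge0$; in particular this quantity is independent of $t$, so its time derivative is zero.

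Next I would integrate the identity \eqref{eqG} of Proposition \ref{prop:G} against $\mu$. Since $\mathcal L_{\text{Dual}}$ acts only on the dual variable $\xi$ and is, by Theorem \ref{mainteo}, a finite sum of difference operators, the quantity $[\mathcal L_{\text{Dual}}G(\eta,\cdot,t)](\xi)$ is a finite linear combination of the values $G(\eta,\xi',t)$ at the configurations $\xi'$ neighbouring $\xi$; hence the $\eta$-integration commutes with $\mathcal L_{\text{Dual}}$ and
\begin{equation*}
\int[\mathcal L_{\text{Dual}}G(\eta,\cdot,t)](\xi)\,\mu(d\eta)=[\mathcal L_{\text{Dual}}G](\xi).
\end{equation*}
Combining with the previous step, the integrated form of \eqref{eqG} reads $\frac{d}{dt}G(\xi)=[\mathcal L_{\text{Dual}}G](\xi)$, and since $G(\xi)$ does not depend on $t$ the left-hand side is zero, which yields \eqref{eq:G0}.

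As the statement asserts, the derivation is essentially immediate; the only point deserving care is the interchange of the $\eta$-average with the dual generator. This is unproblematic here because each dual generator in Theorem \ref{mainteo} produces only finitely many transitions from a given $\xi$, with bounded rates, so the interchange is plain linearity of the integral and no summability or dominated-convergence subtlety arises. I note that one could equivalently bypass Proposition \ref{prop:G} and argue directly from the stationarity identity $\int\mathcal L f\,d\mu=0$ applied to $f=D(\cdot,\xi)$, together with the generator-level duality $[\mathcal L D(\cdot,\xi)](\eta)=[\mathcal L_{\text{Dual}}D(\eta,\cdot)](\xi)$ established in Section \ref{sec5uno}.
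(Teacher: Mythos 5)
Your proposal is correct and is precisely the argument the paper intends when it says the corollary ``immediately follows'' from Proposition \ref{prop:G}: start the process from the stationary measure so that the $\eta$-average of $G(\eta,\xi,t)$ is time-independent, integrate \eqref{eqG} against that measure, and commute the average with $\mathcal L_{\text{Dual}}$ (legitimate since the dual generator acts on $\xi$ through finitely many transitions). Your closing remark that one can bypass Proposition \ref{prop:G} entirely, via $\int \mathcal L D(\cdot,\xi)\,d\mu=0$ and generator-level duality, is also valid and is essentially the same computation read at $t=0$.
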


{We will see an application of the function $G$ in section \ref{sec-L1}.}

\section{Instantaneous thermalization and KMP model}
\label{Therma}

In this Section we define the boundary driven process with
instantaneous thermalization. An {\em instantaneous thermalization}
process gives rise, for each couple of nearest neighbouring sites,
to an instantaneous redistribution of the total energy
(or of the total number of particles).
The class of instantaneous thermalization processes we consider
in this paper is obtained from the non-equilibrium processes defined
so far after performing a suitable ``instantaneous thermalization limit'':
for each bond, the total energy $E$ (or the total number of particles)
of that bond is redistributed according to the stationary measure of
the original process at equilibrium on that bond, conditioned to the
conservation of $E$.

\subsection{{Thermalized models}}
To start with we  recall a well known instantaneous thermalization
model, the KMP model (see \cite{KMP}).
The KMP model is defined by considering
on each bond a  uniform redistribution of energy. At the boundaries the energy is
fixed by a reservoir which imposes a Boltzmann-Gibbs exponential energy
distribution with different temperatures $T_a$ and $T_b$.
The generator of the process is
\begin{eqnarray}
{\cal L}^{KMP}f(z)
& = &
\int_0^{\infty} dz_1'\; \frac{e^{-z_1'/T_a}}{T_a}\left(f(z'_1,z_2,\ldots,z_L) - f(z)\right) \\
& + &
\sum_{i=1}^{L-1} \int_{0}^1 dx \left(f(z_1,\ldots, x(z_i+z_{i+1}), (1-x)(z_i+z_{i+1}),\ldots,z_L) - f(z)\right)
\nonumber\\
& + &
\int_0^{\infty} dz_L' \; \frac{e^{-z_L'/T_b}}{T_b}\left(f(z_1,\ldots,z_{L-1},z'_L) - f(z)\right) \nonumber
\end{eqnarray}
for any $f:{\R}_+^L\rightarrow \R$.

At the end of this section we will see that the KMP model can be obtained as the
instantaneous thermalization limit of the BEP$(2k)$ model in the particular case $k=1/2$.

From \cite{KMP} we know that the KMP is dual to a suitable  discrete Markov process.
The dual process $\xi(t)=(\xi_0(t), \xi_1(t), \dots, \xi_L(t), \xi_{L+1}(t))\in\mathbb{N}_0^{L+2}$
describes the motion of particles in a one dimensional $L+2$ -sites chain.
The boundary sites $\xi_0$ and $\xi_{L+1}$ are absorbing.  In the bulk, for each couple of neighbouring sites $(i,i+1)$ there is an instantaneous uniform
 redistribution of the total number of particles $\xi_i+\xi_{i+1}$. The redistribution takes place whenever an exponentially distributed clock rings.
 The clocks (one for each couple $(i,i+1)$) are mutually independent.
The generator of this process is defined on functions $f: {\mathbb N}_0^{L+2}\rightarrow \R$ by
\begin{eqnarray}\label{KMP-Dual}
{\cal L}^{KMP}_{\text{Dual}}f(\xi)
& = &
\[f(\xi_0+\xi_1, 0, \xi_2, \dots, \xi_{L+1})-f(\xi)\] \\
& + &
\sum_{i=1}^{L-1}   \sum_{r=0}^{\xi_i+\xi_{i+1}}  \[f(\xi_0,\ldots,\xi_{i-1}, r,  \xi_i+\xi_{i+1}-r,\ldots,\xi_{L+1}) - f(\xi)\]
\nonumber\\
& + &
\[f(\xi_0, \dots,0 , \xi_L+\xi_{L+1})-f(\xi)\]\;, \nonumber
\end{eqnarray}

and the duality function is $D^{\text{KMP}}(z,\xi)= T_a^{\xi_0}\, \prod_{i=1}^N \frac{z_i^{\xi_i}}{\xi_i !}\; T_b^{\xi_{L+1}}$.

\vskip.3cm
We will see that, for each of the instantaneous thermalization processes that we are going to introduce there is a dual process.
The dual processes are instantaneous thermalization processes themselves. They have absorbing boundaries and can be naturally
derived by a thermalization limit from the dual processes of the original ones (see Section \ref{SECT:Dual}).



\vspace{0.4cm}
{{\bf Thermalized Inclusion walkers Th-SIP($2k$).}}
The instantaneous thermalization limit of the Inclusion process  is obtained as follows.
Imagine on each bond $(i,i+1)$ to run the SIP$(2k)$ dynamics for an infinite amount of time.
Then the total number of particles   on the bond will be redistributed according to the stationary
measure on that bond, conditioned to conservation of the total number of particles of the bond.
We consider two independent random variables  $\eta_i$ and $\eta_{i+1}$ distributed according to the stationary measure of the SIP($2k$) at the equilibrium. Thus  $\eta_i$ and $\eta_{i+1}$ are two Negative Binomial random variables
of parameters $2k$ and $p$. Hence $\eta_i+\eta_{i+1} $ is again a Negative Binomial r.v. with parameters $4k$ and $p$
and then the distribution of one of them, given that the sum is fixed to $\eta_i + \eta_{i+1} =E$, has  a Negative Hypergeometric probability
density of parameters $(E,4k-1,2k)$, i.e.
\begin{equation}
\nu^{SIP}_{2k}(r\, | \, E):=\mathbb{P}(\eta_1=r\: |\:\eta_i+\eta_{i+1} =E) = \frac{\binom{2k+r-1}{r} \cdot \binom{2k+E-r-1}{E-r}}{\binom{4k+E-1}{E}}\;.
\end{equation}
On the other hand, the stationary
distribution of the left Inclusion reservoir is the Negative Binomial with parameters $2k$ and $\frac \a \ga$ (resp. $\frac \delta \beta$ for the right reservoir). Then
the generator of the instantaneous thermalization limit of the Inclusion
 process with reservoirs can be defined as follows
\begin{eqnarray}
{\cal L}^{SIP}_{th}f(\eta)
& = & \sum_{r=0}^\infty \[f(r,\eta_2, \dots, \eta_L)-f(\eta)\]\; \binom{2k+r-1}{r}\, \(\frac \a \ga\)^r \(\frac {\ga-\a} \ga\)^{2k} \nonumber
\\
& &\hskip-2.2cm
+\,\sum_{i=1}^{L-1} \sum_{r=0}^{\eta_i+\eta_{i+1}}\[f(\eta_1, \dots, \eta_{i-1},r, \eta_i+\eta_{i+1}-r, \eta_{i+2}, \dots, \eta_L)-f(\eta)\]\;\nu^{SIP}_{2k}(r\, | \,\eta_i+\eta_{i+1})
\nonumber\\
& + & \sum_{r=0}^\infty \[f(\eta_1, \dots, \eta_{L-1},r)-f(\eta)\]\; \binom{2k+r-1}{r}\, \(\frac \delta \beta\)^r \(\frac {\b-\delta} \beta\)^{2k}\;.
\end{eqnarray}

It is easy to check that the thermalized inclusion process is dual, with duality function \eqref{dualSIP}, to the process
that behaves in the bulk as the thermalized SIP$(2k)$ itself, and which has absorbing boundaries at
two extra sites with  absorbing rate 1. In other words the dual process is generated by:
\begin{eqnarray}\label{SIPthDual}
{\cal L}^{SIP}_{th, \text{Dual}}f(\xi)
& = &  \[f(\xi_0+\xi_1,0,\xi_2, \dots,  \xi_{L+1})-f(\xi)\] \nonumber
\\
& &\hskip-2.2cm
+\,
\sum_{i=1}^{L-1} \sum_{r=0}^{\xi_i+\xi_{i+1}}\[f(\xi_0, \dots, \xi_{i-1},r, \xi_i+\xi_{i+1}-r, \xi_{i+2}, \dots, \xi_{L+1})-f(\xi)\]\;\nu^{SIP}_{2k}(r\, | \,\xi_i+\xi_{i+1})
\nonumber\\
& + & \[f(\xi_0, \dots, \xi_{L-1},0, \xi_{L}+\xi_{L+1})-f(\xi)\]\;,
\end{eqnarray}
where $\xi=\(\xi_0, \xi_1, \dots, \xi_L, \xi_{L+1}\)$.

\vspace{0.4cm}
{{\bf Thermalized Exclusion walkers Th-SEP($2k$).}}
If we take two independent random variables  $\eta_i$ and $\eta_{i+1}$ with  Binomial distribution
of parameters $2j$ and $p$, then $\eta_i+\eta_{i+1} $ is again a  Binomial r.v. with parameters $4j$ and $p$;
then the distribution of one of them, given the sum  fixed to $\eta_i + \eta_{i+1} =E$, has an Hypergeometric distribution with
parameters $(E,4j,2j)$, i.e. a probability mass function
\begin{equation}
\nu^{SEP}_{2j}(r\, | \, E):=\mathbb{P}(\eta_1=r\: |\:\eta_1+\eta_2 =E) = \frac{\binom{2j}{r} \cdot \binom{2j}{E-r}}{\binom{4j}{E}} \; \mathbf{1}_{r \le 2j}\;.
\end{equation}
The stationary
distribution of the left Exclusion reservoir is the  Binomial with parameters $2j$ and $\frac \a {\ga+\a}$ (resp. $\frac \delta {\beta+\d}$ for the right reservoir). Then
we define the generator of the instantaneous thermalization limit of the Exclusion
 process with reservoirs  as follows
\begin{eqnarray}
{\cal L}^{SEP}_{th}f(\eta)
& = & \sum_{r=0}^{2j} \[f(r,\eta_2, \dots, \eta_L)-f(\eta)\]\; \binom{2j}{r}\, \(\frac \a {\ga}\)^r \(\frac \ga {\ga+\a}\)^{2j} \nonumber
\\
& &\hskip-2.2cm
+\,
\sum_{i=1}^{L-1} \;\sum_{r=0}^{\eta_i+\eta_{i+1}}\[f(\eta_1, \dots, \eta_{i-1},r, \eta_i+\eta_{i+1}-r, \eta_{i+2}, \dots, \eta_L)-f(\eta)\]\;\nu^{SEP}_{2j}(r\, | \, \eta_i+\eta_{i+1})
\nonumber\\
& + & \sum_{r=0}^{2j} \[f(\eta_1, \dots, \eta_{L-1},r)-f(\eta)\]\; \binom{2j}{r}\, \(\frac \d {\b}\)^r \( \frac \b {\b+\d}\)^{2j}\;.
\end{eqnarray}

The thermalized exclusion process is dual, with duality function \eqref{dualSEP}, to the process
that behaves in the bulk as the process itself, and which has absorbing boundaries at
two extra sites with  absorbing rate 1:
\begin{eqnarray}\label{SEPthDual}
{\cal L}^{SEP}_{th, \text{Dual}}f(\xi)
& = &  \[f(\xi_0+\xi_1,0,\xi_2, \dots,  \xi_{L+1})-f(\xi)\] \nonumber
\\
& &\hskip-2.6cm
+\,
\sum_{i=1}^{L-1} \;\sum_{r=0}^{\xi_i+\xi_{i+1}}\[f(\xi_1, \dots, \xi_{i-1},r, \xi_i+\xi_{i+1}-r, \xi_{i+2}, \dots, \xi_L)-f(\xi)\]\;\nu^{SEP}_{2j}(r\, | \, \xi_i+\xi_{i+1})
\nonumber\\
& + & \[f(\xi_0, \dots, \xi_{L-1},0, \xi_{L}+\xi_{L+1})-f(\xi)\]\;,
\end{eqnarray}
where $\xi=\(\xi_0, \xi_1, \dots, \xi_L, \xi_{L+1}\)$.

\vspace{0.4cm}
{{\bf Thermalized Indepent walkers Th-IRW.}}
Let $\eta_i$ and $\eta_{i+1}$ be two independent random variables  with Poisson distribution
of parameter  $\lambda$, then $\eta_i+\eta_{i+1} $ is again a  Poisson r.v. with parameter $2\lambda$
then the distribution of one of them, given the sum  fixed to $\eta_i + \eta_{i+1} =E$, has
a Binomial density of parameters $(E, 1/2)$:
\begin{equation}
\nu^{IRW}(r\, | \, E):=\mathbb{P}(\eta_1=r\: |\:\eta_1+\eta_2 =E) = \binom{E}{r} \, \frac{1}{2^E}\;.
\end{equation}
Moreover the stationary
distribution of the left IRW reservoir is the Poisson with parameter  $\frac \a \ga$ (resp. $\frac \delta \beta$ for the right reservoir). Then
the generator of the instantaneous thermalization limit of the independent walkers
 process with reservoirs is given by
\begin{eqnarray}
{\cal L}^{IRW}_{th}f(\eta)
& = & \sum_{r=0}^\infty \[f(r,\eta_2, \dots, \eta_L)-f(\eta)\]\;  \(\frac{\a}{\ga}\)^r \, \frac {e^{-\a/\ga}}{r!}
\nonumber \\
& &\hskip-2.2cm
+\,
\sum_{i=1}^{L-1} \sum_{r=0}^{\eta_i+\eta_{i+1}}\[f(\eta_1, \dots, \eta_{i-1},r, \eta_i+\eta_{i+1}-r, \eta_{i+2}, \dots, \eta_L)-f(\eta)\]\;\nu^{IRW}(r\, | \, \eta_i+\eta_{i+1})
\nonumber\\
& + & \sum_{r=0}^\infty \[f(\eta_1, \dots, \eta_{L-1},r)-f(\eta)\]\;   \(\frac{\d}{\beta}\)^r \, \frac {e^{-\delta/\b}}{r!} \;.
\end{eqnarray}

The thermalized independent walkers process is dual, with duality function  \eqref{dualIRW}, to the process
that behaves in the bulk as the process itself, and which has absorbing boundaries at
two extra sites:
\begin{eqnarray}\label{IRWthDual}
{\cal L}^{IRW}_{th, \text{Dual}}f(\xi)
& = &  \[f(\xi_0+\xi_1,0,\xi_2, \dots,  \xi_{L+1})-f(\xi)\] \nonumber
\\
& &\hskip-2.2cm
+\,
\sum_{i=1}^{L-1} \sum_{r=0}^{\eta_i+\eta_{i+1}}\[f(\xi_1, \dots, \xi_{i-1},r, \xi_i+\xi_{i+1}-r, \xi_{i+2}, \dots, \xi_L)-f(\xi)\]\;\nu^{IRW}(r\, | \, \xi_i+\xi_{i+1})
\nonumber\\
& + & \[f(\xi_0, \dots, \xi_{L-1},0, \xi_{L}+\xi_{L+1})-f(\xi)\]\;,
\end{eqnarray}
where $\xi=\(\xi_0, \xi_1, \dots, \xi_L, \xi_{L+1}\)$.

\vspace{0.4cm}
{{\bf Thermalized Brownian energy process  Th-BEP($2k$).}}
We define the instantaneous thermalization limit of the Brownian Energy process  as follows.
On each bond we run the BEP$(2k)$ for an infinite time.
Then the energies on the bond will be redistributed according to the stationary
measure on that bond, conditioned to the conservation of the total energy of the bond.
If we take two independent random variables  $z_i$ and $z_{i+1}$ with Gamma distribution
of parameters $2k$ and $\theta$,
then the distribution of one of them, given  the sum  fixed to $z_i + z_{i+1} =E$, has
density
\begin{equation}
p(z_i|z_i+z_{i+1} =E) = \frac{z_i^{2k-1}(E-z_i)^{2k-1}}{\int_{0}^E z_i^{2k-1}(E-z_i)^{2k-1} dz_i}\;.
\end{equation}
Equivalently, the variable $x = z_i/E$ has a Beta$(2k,2k)$ distribution.
Denoting by $\nu_{2k}^{BEP}(x)$ the density of such  a random variable,  we can define
the generator of the instantaneous thermalization limit of the Brownian
Energy process with reservoirs as follows
\begin{eqnarray}
{\cal L}^{BEP}_{th}f(z)
& = &
\int_0^{\infty} dz_1' \;\frac{1}{(T_a)^{2k} \Gamma(2k)}  (z_1')^{2k-1} e^{-z_1'/T_a} \left(f(z'_1,z_2,\ldots,z_L) - f(z)\right) \\
& + &
\sum_{i=1}^{L-1} \int_{0}^1 \[f(z_1,\ldots, x(z_i+z_{i+1}), (1-x)(z_i+z_{i+1}),\ldots,z_L) - f(z)\]\, \nu_{2k}^{BEP}(x) dx
\nonumber\\
& + &
\int_0^{\infty} dz_L' \; \frac{1}{(T_b)^{2k} \Gamma(2k)}  (z_L')^{2k-1} e^{-z_L'/T_b} \left(f(z_1,\ldots,z_{L-1},z'_L) - f(z)\right) \;.\nonumber
\end{eqnarray}

\vskip.2cm

\begin{remark1}
 For $k=1/2$
this reproduces the uniform redistribution
rule of the KMP model on each bond of the bulk. The same is true for the reservoirs since the stationary
distribution of the Brownian Energy reservoir is Gamma with parameters $2k$ and $\theta$.
If one takes $k=1/2$, then one obtains an Exponential distribution with parameter $\theta$. \\
\end{remark1}

%


The thermalized Brownian Energy  process is dual, with duality function \eqref{dual2} to the process
that behaves in the bulk as the thermalized inclusion process, and which has absorbing boundaries at
two extra sites with  absorbing rate 1. In other words the dual process is generated by:
\begin{eqnarray}\label{BEPthDual}
{\cal L}^{BEP}_{th, \text{Dual}}f(\xi)
& = &  \[f(\xi_0+\xi_1,0,\xi_2, \dots,  \xi_{L+1})-f(\xi)\] \nonumber
\\
& &\hskip-2.2cm
+\,
\sum_{i=1}^{L-1} \sum_{r=0}^{\xi_i+\xi_{i+1}}\[f(\xi_0, \dots, \xi_{i-1},r, \xi_i+\xi_{i+1}-r, \xi_{i+2}, \dots, \xi_{L+1})-f(\xi)\]\;\nu^{SIP}_{2k}(r\, | \,\xi_i+\xi_{i+1})
\nonumber\\
& + & \[f(\xi_0, \dots, \xi_{L-1},0, \xi_{L}+\xi_{L+1})-f(\xi)\]\;,
\end{eqnarray}
where $\xi=\(\xi_0, \xi_1, \dots, \xi_L, \xi_{L+1}\)$.

\vskip.2cm

\begin{remark1}
It is immediately seen that for $k=1/2$,  this gives
the KMP-dual process defined in \eqref{KMP-Dual}.
\end{remark1}


\subsection{Stationary measures for $L=1$.}
\label{sec-L1}

In this section we compute the moments of the istantaneous thermalization  processes defined so far,
by using the result obtained in Corollary \ref{coro:G}. When $L=1$ there is no bulk  contribution in the generator, since we have a site interacting
with two sources.

The {$L=1$} case is trivial for our original processes (SEP, SIP, IRW and BEP), because for these models the contributions of the baths are additive. Then
the system is indeed equivalent to the system of one site  interacting with one bath whose parameters  are recombinations of the parameters of the two {original baths}.
The stationary measure is, then, the stationary measure of this total bath.

The interest of the {$L=1$} case for the thermalized processes lies in the fact that for these models the baths contribution are no longer additive.
Thus, even in this basic case the computation of the stationary measure is worth to be investigated.
{A result in this direction has been obtained in \cite{BDGJL} (see also the remark
at the end of this section).}

\vspace{0.4cm}
{\bf Thermalized Inclusion walkers Th-SIP($2k$)}.
From the previous section we know that   the  thermalized inclusion process $\eta_t$ is dual to the process
defined in \eqref{SIPthDual}, with duality function \eqref{dualSIP}, then the duality moment function is
\begin{equation}
G_{th}^{SIP}(\xi)={\langle D^{SIP}(\eta,\xi)\rangle}=\(\frac \a {\ga-\a}\)^{\xi_0} \; \frac{\Gamma(2k)}{\Gamma(2k+\xi_1)}\; \(\frac \d {\b-\d}\)^{\xi_{2}}\;M(\xi_1)\;,
\end{equation}
where $M(\xi)$ is the $\xi$-th factorial moment with respect to the stationary measure
of $\eta_1$:
\begin{equation}
M(\xi_1)={\langle\frac{\eta_1 !}{(\eta_1-\xi_1)!} \rangle}\;.
\end{equation}
In order to find $M(\xi)$, from Corollary \ref{coro:G} we impose
\begin{eqnarray*}
0&=&{\cal L}^{SIP}_{th, \text{Dual}}G(\xi)=G\(\xi_0+\xi_1,0,\xi_2\)-2G\(\xi_0,\xi_1,\xi_2\)+ G\(\xi_0, 0, \xi_1+\xi_2\)
\\ &= &\(\frac \a {\ga-\a}\)^{\xi_0} \;  \(\frac \d {\b-\d}\)^{\xi_{2}}  \cdot \left\{
\(\frac \a {\ga-\a}\)^{\xi_1}  -2M(\xi_1) \,  \frac{\Gamma(2k)}{\Gamma(2k+\xi_1)}+   \(\frac \d {\b-\d}\)^{\xi_1}\right\}\;.
\end{eqnarray*}
This yields
\begin{equation}
M(\xi_1)= \frac{\Gamma(2k +\xi_1)}{2\Gamma(2k)}  \[\(\frac{\a}{\ga-\a}\)^{\xi_1}+ \(\frac{\d}{\b-\d}\)^{\xi_1}\]\;.
\end{equation}

\vspace{0.4cm}
{\bf Thermalized Exclusion walkers Th-SEP($2j$)}.
The  thermalized inclusion process $\eta_t$ is dual to the process
 in \eqref{SEPthDual}, with duality function \eqref{dualSEP}, then the stationary
 duality moment function is
\begin{equation}
G_{th}^{SEP}(\xi)={\langle D^{SEP}(\eta,\xi)\rangle}=\(\frac \a {\ga+\a}\)^{\xi_0} \; \frac{\Gamma(2j+1-\xi_1)}{\Gamma(2j+1)}\; \(\frac \d {\b+\d}\)^{\xi_{2}}\; M(\xi_1)\;,
\end{equation}
where $M(\xi)$ is the $\xi$-th factorial moment with respect to the stationary measure $\nu$ of $\eta_1$:
\begin{equation}
M(\xi_1)={\langle\frac{\eta_1 !}{(\eta_1-\xi_1)!} \rangle}\;.
\end{equation}
From Corollary \ref{coro:G}, we have
\begin{eqnarray*}
0&=&{\cal L}^{SEP}_{th, \text{Dual}}G(\xi)=G\(\xi_0+\xi_1,0,\xi_2\)-2G\(\xi_0,\xi_1,\xi_2\)+ G\(\xi_0, 0, \xi_1+\xi_2\)
\\ &= &\(\frac \a {\ga+\a}\)^{\xi_0} \;  \(\frac \d {\b+\d}\)^{\xi_{2}}  \cdot \left\{
\(\frac \a {\ga+\a}\)^{\xi_1}  -2M(\xi_1) \,  \frac{\Gamma(2j+1-\xi_1)}{\Gamma(2j+1)}+   \(\frac \d {\b+\d}\)^{\xi_1}\right\}\;.
\end{eqnarray*}
This yields
\begin{equation}
M(\xi_1)= \frac{\Gamma(2j +1)}{2\Gamma(2j+1-\xi_1)} \cdot \[\(\frac{\a}{\ga+\a}\)^{\xi_1}+ \(\frac{\d}{\b+\d}\)^{\xi_1}\]\;.
\end{equation}

\vspace{0.4cm}
{\bf Thermalized independent  random walkers Th-IRW}.
The  thermalized IRW process $\eta_t$ is dual to the process
defined in \eqref{IRWthDual}, with duality function \eqref{dualIRW}, then the
stationary duality moment function is
\begin{equation}
G_{th}^{IRW}(\xi)={\langle D^{IRW}(\eta,\xi)\rangle}=\(\frac \a {\ga}\)^{\xi_0} \;  \(\frac \d {\b}\)^{\xi_{2}}\; M(\xi_1)\;,
\end{equation}
where $M(\xi_1)$ is the $\xi$-th factorial moment with respect to the stationary measure $\eta_1$ as above.
To find $M(\xi_1)$ we impose
\begin{eqnarray*}
0&=&{\cal L}^{IRW}_{th, \text{Dual}}G(\xi)=G\(\xi_0+\xi_1,0,\xi_2\)-2G\(\xi_0,\xi_1,\xi_2\)+ G\(\xi_0, 0, \xi_1+\xi_2\)
\\ &= &\(\frac \a {\ga}\)^{\xi_0} \;  \(\frac \d {\b}\)^{\xi_{2}}  \cdot \left\{
\(\frac \a {\ga}\)^{\xi_1}  -2M(\xi_1) +   \(\frac \d {\b}\)^{\xi_1}\right\}\;.
\end{eqnarray*}
This gives
\begin{equation}
M(\xi_1)= \frac 1 2 \[\(\frac{\a}{\ga}\)^{\xi_1}+ \(\frac{\d}{\b}\)^{\xi_1}\]\;.
\end{equation}

\vspace{0.4cm}
{\bf Thermalized brownian energy process Th-BEP($2k$)}.
The  thermalized brownian energy process $z_t$ is dual to the process
defined in \eqref{BEPthDual}, with duality function \eqref{dual2}, then the stationary
duality moment function is
\begin{equation}
G_{th}^{BEP}(\xi)={\langle D^{BEP}(\eta,\xi)\rangle}=\(2T_a\)^{\xi_0} \; \frac{\Gamma(2k)}{\Gamma(2k+\xi_1)}\; \(2 T_b\)^{\xi_{2}}\; M(\xi_1)
\end{equation}
where $M(\xi_1)$ is now the $\xi_1^{th}$  moment with respect to the stationary measure
of $z_1$:
\begin{equation}
M(\xi_1)={\langle z_1^{\xi_1} \rangle}\;.
\end{equation}
In order to find $M(\xi_1)$, from Corollary \ref{coro:G} we impose
\begin{eqnarray*}
0&=&{\cal L}^{BEP}_{th, \text{Dual}}G(\xi)=G\(\xi_0+\xi_1,0,\xi_2\)-2G\(\xi_0,\xi_1,\xi_2\)+ G\(\xi_0, 0, \xi_1+\xi_2\)
\\ &= &\(2T_a\)^{\xi_0} \;  \(2T_b\)^{\xi_{2}}  \cdot \left\{
\(2T_a\)^{\xi_1}  -2M(\xi_1) \,  \frac{\Gamma(2k)}{\Gamma(2k+\xi_1)}+   \(2T_b\)^{\xi_1}\right\}
\end{eqnarray*}
This yields
\begin{equation}
M(\xi_1)= \frac{\Gamma(2k +\xi_1)}{2\Gamma(2k)} \cdot \[\(2T_a\)^{\xi_1}+ \(2T_b\)^{\xi_1}\]
\end{equation}

\begin{remark1} For $k=1/2$ the $M(\xi_1)$ above becomes
\begin{equation}
M(\xi_1)= \frac{\xi_1!} 2 \cdot \[\(2T_a\)^{\xi_1}+ \(2T_b\)^{\xi_1}\]\;.
\end{equation}
{The knowledge of all the moments fully describes the stationary measure
of the KMP process with $1$ particle.
A similar result was obtained in \cite{BDGJL}. In that paper an explicit
form of the stationary measure for 1 particle connected to two reservoirs
is given, however for a process which is
slightly different that the original KMP process.  The difference lies at the
boundaries thermalization mechanism: in the KMP model the first and last sites are
directly thermalized by the reservoirs, in \cite{BDGJL}  the first and last sites
share uniformly their energies with thermalized reservoirs. }
\end{remark1}

\section{Correlations in the stationary state}
\label{compute}


For some of our processes,
such as for the SEP$(1)$ model \cite{Spo}, the BEP(1/2)
model \cite{GKR}, the KMP model \cite{BDGJL}, the covariances have been proven to be bilinear.
For the boundary driven SEP$(1)$, from the exact solution of the
microscopic stationary state (see e.g. \cite{DE,DEHP,DLS1}),
we know even more.
Indeed for this process all the correlations
$\langle \eta_{i_1}  \dots \,\eta_{i_n} \rangle$ in the stationary state
are multilinear in the variables $i_1, \dots, i_n$, and can be explicitely
computed through a recursive argument on $n$ and $L$ by a matrix
method.

The algebraic similarity of the generators for our whole class of models,
that includes the SEP(1) ,
leads us to expect multilinear correlation functions. This turns out to be false
in general, as we will see in this section.
For instance bilinearity of the covariances holds only for a certain choice
of the boundary parameters for the SEP($2j$) and for the SIP($2k$) and
only in the case $k=1/4$ for the BEP.
From Proposition \ref{corollario-abs}, multilinearity of the correlation functions
is in turn implied by multilinearity of the absorption probabilities of
the dual walkers.
\vskip.3cm

In this section we  compute the two points correlations w.r. to the stationary measure,  for a suitable choice of the parameters. We do this  by direct computation, i.e. for the particle models
we require that the generator of the process vanishes on the functions  $f(\eta)=\eta_i \, \eta_\ell$.
This yields a  linear system in the variables $X_{i,\ell}:=\langle \eta_i \eta_\ell \rangle$, $i\le \ell$. Analogous computations, with $\eta_i$ replaced
 by $z_i$, are performed for the BEP model.

\vskip.5cm

\subsection{Covariances} \label{COV}

The correlations in the stationary state, i.e. the expectations $X_{i,\ell} = \langle \eta_i\eta_\ell \rangle$
with $1\le i \le \ell \le L$,
satisfy a system of $L(L+1)/2$ equations.
The equations  are quite complicated   (we include them  in the Appendix, see section \eqref{SECT:Eq}) and then hard to solve  directly.
What we found is that, for a generic choice of the boundary parameters, for none of our processes there  exists a bilinear function satisfying them.
In other words, the ansatz
\begin{eqnarray}\label{BilAnsatz}
X_{i,\ell}=A i \ell +B i+ C \ell + D \quad\quad  \text{for} \quad i<\ell \quad \quad \quad \quad
\text{and}   \quad \quad X_{i,i}=E i^2 + F i + G
\end{eqnarray}
 does not produce a solution for  the  systems in  \eqref{EqCSipSep} and \eqref{EqCBep}
 (since the number of independent equations that the coefficients in \eqref{BilAnsatz} should satisfy is larger than 7).
But there exist some conditions on  $\a, \b, \ga, \d$  producing an effective simplification of the systems \eqref{EqCSipSep}.  Under this conditions the correlations for the SIP($2k$) and for the SEP($2j$) are bilinear  and then explicitly computable through the ansatz \eqref{BilAnsatz}. In what follows we provide such explicit bilinear forms. In order to verify their validity one can simply put the generic  forms \eqref{BilAnsatz} in the systems  \eqref{EqCSipSep}, find the equations that must be satisfied by the 7 coefficients and solve them.

Finally, at the end of the paragraph,   we will see by duality that one needs to fix $k=1/4$ in order to have bilinear correlations for the BEP.

\vskip.3cm

We denote by  $\langle \eta_i \eta_\ell\rangle_c$ the covariances (truncated correlations) in the
stationary state of the particle models, namely $\langle\eta_i \eta_\ell\rangle_c:= \langle\eta_i \eta_\ell\rangle- \langle \eta_i \rangle  \langle\eta_\ell \rangle$.
Replacing $\eta_i$ with $z_i$ one defines the covariances of the BEP model. \\ \\

{\bf Inclusion walkers SIP($2k$).}
If the parameters satisfy the condition
\begin{equation}
\ga= 2k+\a \quad \quad \text{and} \quad \quad \b= 2k+\d\;,
\end{equation}
i.e.
\begin{equation}
\rho_a=2k \, \frac {\a}{\ga-\a}= \a  \quad \quad \text{and} \quad \quad \rho_b=2k \, \frac {\d}{\d-\b}= \d\;,
\end{equation}

one has
\begin{eqnarray}\label{sip-cov}
\langle \eta_i \, \eta_\ell\rangle_c & = &\frac{i  (L+1-\ell)}{(L+1)^2 (2 k (L+1)+1)}\, (\rho_a -\rho_b )^2 \quad \quad \quad \text{for} \quad \quad i<\ell\,,
\end{eqnarray}
whereas $\langle \eta_i^2 \rangle$ is a quadratic function of $i$.
Notice that, under this same choice of parameters, the expression for the averages is simplified to
\begin{eqnarray}
\label{g3}
\langle \eta_i \rangle = \rho_a + (\rho_b-\rho_a)\frac{i}{L+1}  \;.
\end{eqnarray}

{\bf Exclusion walkers SEP($2j$).}
Under the choice of the parameters
\begin{equation}
\ga= 2j-\a \quad \quad \text{and} \quad \quad \b= 2j-\d\;,
\end{equation}
i.e.
\begin{equation}
\rho_a=2j \, \frac {\a}{\a+\ga}= \a  \quad \quad \text{and} \quad \quad \rho_b=2j \, \frac {\d}{\b+ \d}= \d\;,
\end{equation}

the two points correlations are bilinear and they are given by
\begin{eqnarray}\label{sep-cov}
\langle\eta_i \, \eta_\ell\rangle_c & = & - \frac{ i \, (L+1-\ell)}{(L+1)^2 (2 j (L+1)-1)} \,  (\rho_a -\rho_b )^2  \quad \quad \quad \text{for} \quad \quad i<\ell\;,
\end{eqnarray}
the variances are quadratic and
the average profile becomes
\begin{eqnarray}
\label{g1}
\langle \eta_i \rangle = \rho_a + (\rho_b-\rho_a)\frac{i}{L+1}  \;.
\end{eqnarray}



{\bf Brownian energy process BEP($1/2$).}
In \cite{GKR} it was studied
the BEP model for $k=1/4$ and it was found that
the two points correlations are bilinear. For $i<\ell$ they are given by:
\begin{eqnarray}
\label{bep-cov}
\langle z_i \, z_\ell\rangle_c & = & \frac{2i(L+1-\ell)}{(L+3)(L+1)^2}\, (\theta_b-\theta_a)^2\;.
\end{eqnarray}

In this case
 one has the neat linear profile
\begin{eqnarray}
\label{g2}
\langle z_i \rangle = \theta_a + (\theta_b-\theta_a)\frac{i}{L+1} \;.
\end{eqnarray}

The result in \eqref{bep-cov} can be obtained  from \eqref{sip-cov} and duality. Indeed, comparing (\ref{dual-inc}) and (\ref{dual-bep}),
one notices that the dual processes of BEP($2k$) and SIP($2k$) with $\g-\a=2k$ and $\b-\d=2k$ do coincide if
and only if $k=1/4$.
Under this choice, when the dual process is initialized from the configuration
$\bar{\xi}$ having one particle at site
$i$ and one particle at site $\ell$, equation (\ref{dualSIP}) becomes
\begin{equation}
D^{SIP}(\eta,\bar{\xi})=\(2\a \)^{\xi_0} \; 4 \eta_i \eta_\ell \; \(2 \d\)^{\xi_{L+1}}
\end{equation}
and equation (\ref{dual2}) becomes
\begin{equation}
D^{BEP}(z,\bar{\xi})=(2T_a)^{\xi_0} \; 4  z_i z_\ell \; (2T_b)^{\xi_{L+1}}\;.
\end{equation}
Therefore, with this choice of parameters and initial conditions, the duality functions
are the same if one identifies $T_a = \a = \rho_a$ and $T_b = \d = \rho_b$
and the result (\ref{bep-cov}) immediately follows from (\ref{sip-cov}).

\vskip.3cm

\vskip.3cm

We can summarize the situation as follows. The covariances are bilinear at least in the following cases:
\begin{enumerate}[(a)]
\item SEP($1$), ($j=1/2$) and generic $\alpha,\beta,\gamma,\delta$.
\item SEP($2j$)  for $j\in \{1,3/2,2,5/2,...\}$ and $\gamma+\alpha = 2j$ and $\beta+\delta = 2j$.
\item SIP($2k$)  for $k>0$ and $\gamma-\alpha = 2k$ and $\beta-\delta = 2k$.
\item BEP($\frac{1}{2}$), ($k=1/4$) and generic $T_a,T_b$.
\end{enumerate}
We remark that the conditions (b),(c),(d) are those giving the neat average profile
of equations (\ref{g3}), (\ref{g1}) and (\ref{g2}), i.e. those yielding exactly the densities $\rho_a$ and $\rho_b$ (resp. the temperatures $T_a$ and $T_b$)
in the proximity of the reservoirs (i.e. for $i=0$ and $i=L+1$).

\vskip.3cm
The following further properties of the covariances are observed by solving
the equations in the Appendix \eqref{SECT:Eq}) on {\em Mathematica}.
As the parameters are varied, the covariances are:
\begin{enumerate}[(i)]
\item proportional to $(\rho_b -\rho_a)^2$ or $(T_b-T_a)^2$.
\item {positive} for the inclusion walkers and for the Brownian energy process,
{negative} for the exclusion walkers: this is related to the attractive (bosonic)
interaction of the first two system, compared to the repulsive (fermionic) interaction
of exclusion. For the proof of this property see \cite{GRV}.
\end{enumerate}

\subsection{Results for the $n$-points correlations}

The multilinearity of the correlations seems to be, thus, prerogative of some special cases.
One may wonder about the multilinearity for the $3$-points correlations, in the same range of
parameters leading to bilinearity for the $2$-points correlations (i.e. in the cases (b),(c),(d) above).
The explicit solution of the $n$-points correlations problem becomes harder and harder
as $n$ increases and even the case $n=3$  is quite difficult to  solve exactly.

In this paragraph we provide the results of some numerical computations. We solved numerically the master equation for the invariant distribution of SEP($2j$) in the cases $L=6$ and $j=1/2, 1, 3/2, 2$ and computed the correlations $\langle \eta_i \eta_j\rangle_c$ and $\langle \eta_i \eta_j \eta_\ell\rangle_c$. If  $\langle \eta_1 \eta_{i}\rangle_c$ were multilinear, the differences $d_i=\langle \eta_1  \eta_{i+1}\rangle_c - \langle \eta_1 \eta_{i}\rangle_c,\,i=2,3,4,5$ would be constant.
The simulations seem to confirm the bilinearity of the covariances in the cases (a) and (b) above, and the loss of bilinearity  in the other cases. In Fig.1 (left panel) the values of $d_i$ are reported for the case $\a=1,\, \g=1,\, \b=1/2,\, \d=3/2$: they are clearly constant for $j=1/2$ (case (a) above) and  for $j=1$ (case (b) above) but not for $j=3/2$ and $j=2$.\\

Concerning the 3-points correlations, the simulations  show that the multilinearity is lost
even in the cases where it holds for $n=2$ (i.e. in the case (b)), while it
is conserved for the SEP(1) with at most one particle per site.
Figure 1 gives evidence for this phenomenon
by showing that  $e_i=\langle \eta_1 \eta_2 \eta_{i+1}\rangle_c - \langle \eta_1 \eta_2 \eta_{i}\rangle_c,\, i=3,4,5$ are constant only for $j=1/2$ (case (a) above).\\
The deviation from multilinearity is in any case very small and, very likely, it is decreasing  as $L$ increases.
\begin{figure}[h]
\centering
\vspace{2cm}
\includegraphics[width=7.5cm]{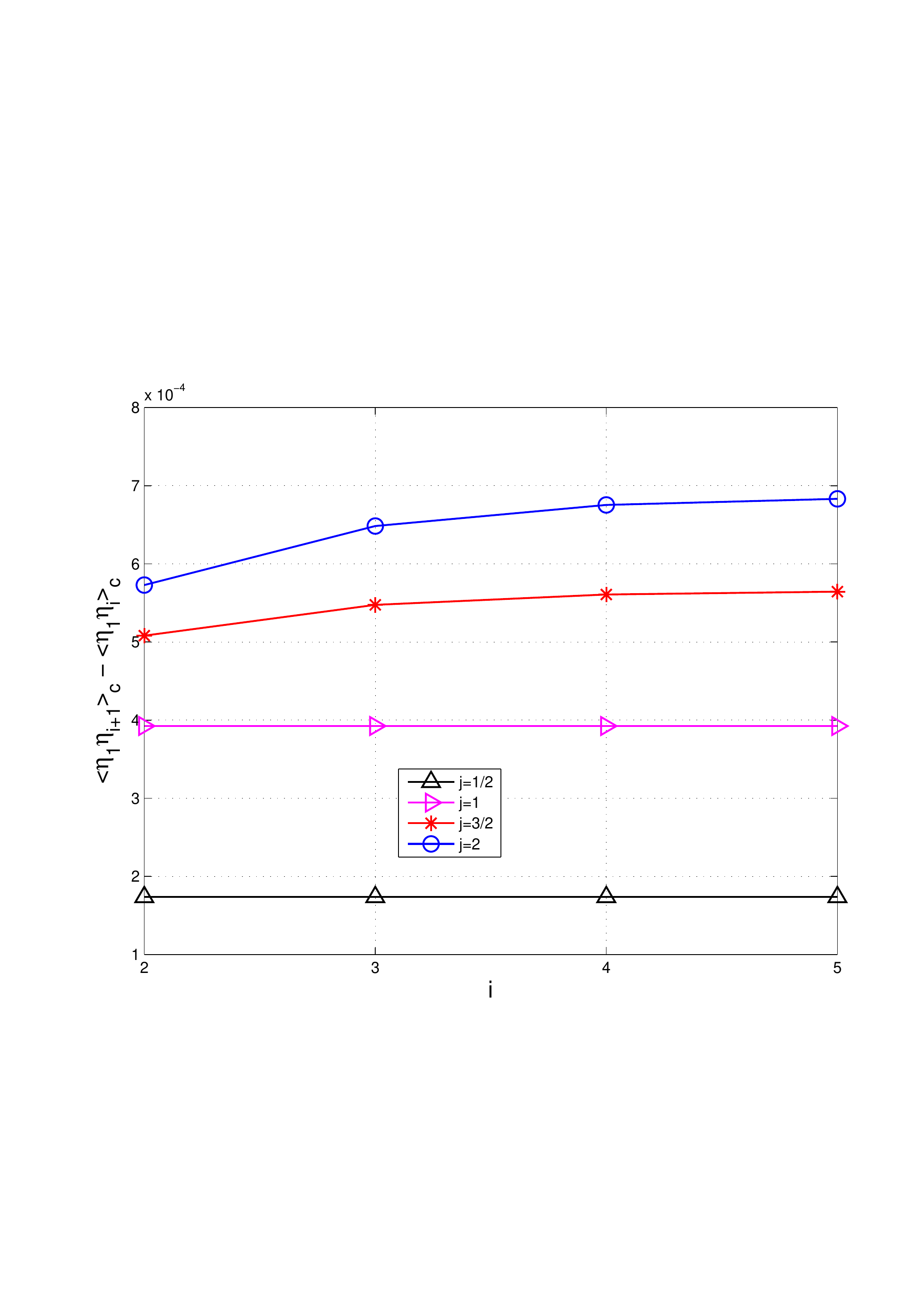}
\includegraphics[width=7.5cm]{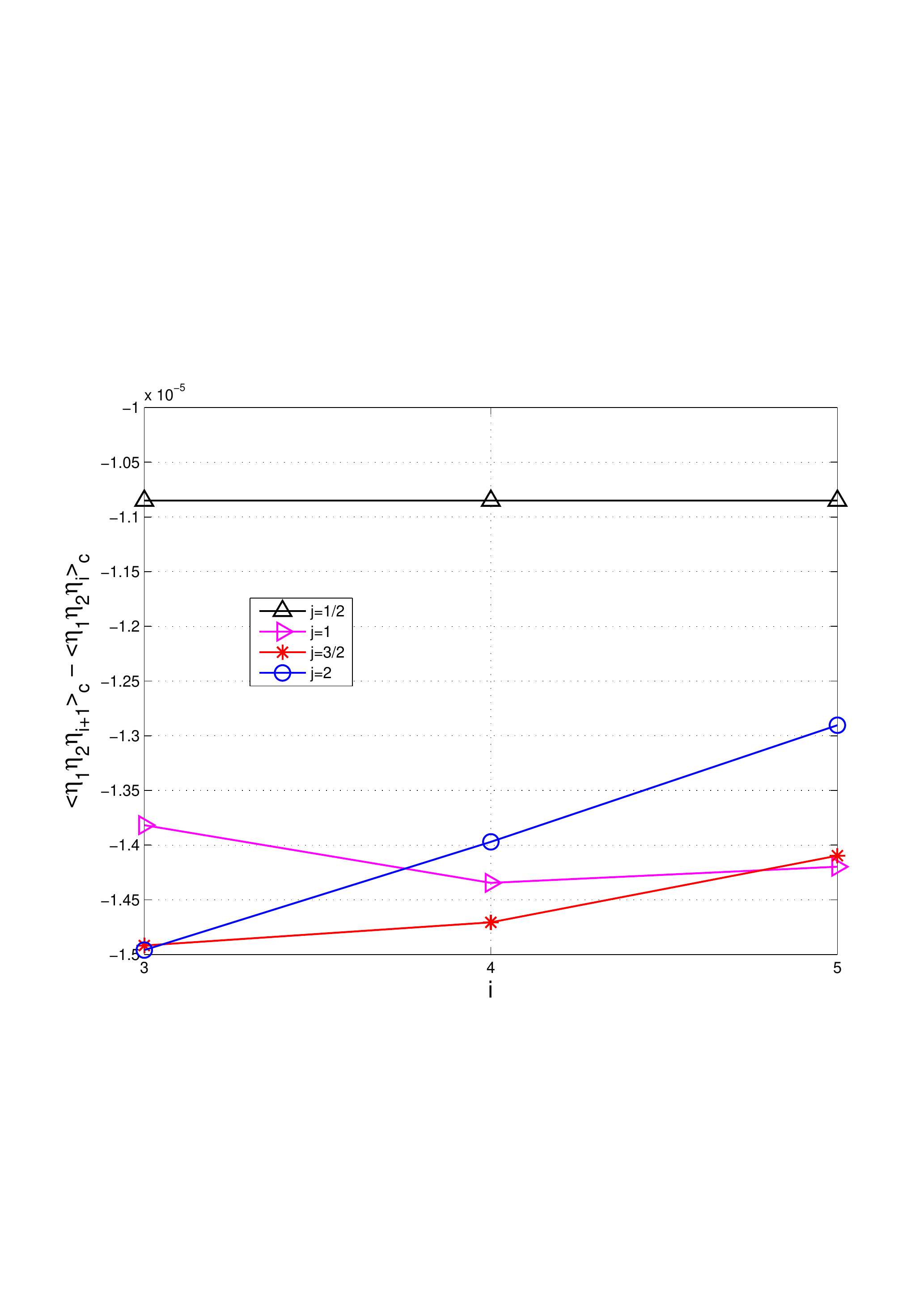}
\caption{\label{fig1}  Test for the multilinearity of the connected correlations  $\langle \eta_1 \eta_{i}\rangle_c$, $\langle \eta_1 \eta_2 \eta_{i}\rangle_c$  for SEP($2j$), $j=1/2\, (\triangle)$, $j=1\, (\rhd)$, $j=3/2\, (\ast)$, $j=2\, (\circ)$ with $\a=1,\, \g=1,\, \b=1/2,\, \d=3/2$ and $L=6$. In the left panel $d_i=\langle \eta_1 \eta_{i+1}\rangle_c - \langle \eta_1 \eta_{i}\rangle_c,\, i=2,3,4,5$; in the right panel $e_i=\langle \eta_1 \eta_2 \eta_{i+1}\rangle_c - \langle \eta_1 \eta_2 \eta_{i}\rangle_c,\, i=3,4,5$. Non constant $d_i$ or $e_i$ imply
violation of the multilinearity.}
\end{figure}

\section{Macroscopic fluctuation theory}
\label{hydro}
The aim of this section is to show that the large scale properties of the models studied so far
can be obtained by a suitable adaptation of the macroscopic fluctuation theory of  \cite{BDGJL,BDGJL1,BDGJL2, BDGJL3}.
In particular we verify that the macroscopic limit of the exact solutions for the covariances  found in Section \ref{compute} does match the prediction
of the macroscopic fluctuation theory (see \cite{DLS,DG} for the exclusion process with at most one particle per site).

\subsection{Macroscopic fluctuation theory and density large deviations functional}
We briefly review the approach of the macroscopic fluctuation theory. Let us consider a one dimensional diffusive systems of linear size $L$ in contact with two reservoirs at densities $\rho_a$ and $\rho_b$.
The macroscopic fluctuation theory describes the behavior of the system in the hydrodynamic limit in terms of the two quantities
 $D(\rho)$ and $\sigma(\rho)$, called {\em diffusivity}  and {\em mobility}, defined by
\bea
D(\rho)&:= &\lim_{\delta \rho \to 0} \; \lim_{L \to \infty} \;\frac{L}{\delta \rho} \,\cdot \, \frac {\langle Q_{i,i+1}(t)\rangle_{L, \rho, \rho + \delta \rho}}{t}\;, \label{Diff}\\
\sigma(\rho)&:=& \lim_{L \to \infty} \frac{\langle Q^2_{i,i+1}(t)\rangle_{L, \rho, \rho}}{t}\;, \label{Mob}
\ea
where
\be
Q_{i,i+1}(t) = \int_0^t j_{i,i+1}(t')dt'\;.
\ee
In the above equation $Q_{i,i+1}(t)$ is the total flow through the bond $i,i+1$ in the time interval $[0,t]$, while  $j_{i,i+1}(t')$ is the instantaneous flow at time $t'$.
The bracket $\langle \cdot \rangle_{L, \rho_a, \rho_b}$ denotes the expectation with respect to the stationary state for the system of size $L$ whose density on the left (resp. right) boundary is $\rho_a$ (resp. $\rho_b$).

From the macroscopic fluctuation theory \cite{BDGJL2}, we know that the probability of observing a time dependent
density and current profiles $\rho(x,\tau)$ and $j(x,\tau)$ in a macroscopic time interval $[\tau_1,\tau_2]$, under the diffusive
scaling $x=i/L$ and $\tau= t/L^2$, is
$\sim exp(-{L \cal A})$, where $\cal A$ is the action functional given by:
\be
{\cal A}(\{\rho(x,s),j(x,s)\};\tau_1,\tau_2)=  \int_{\tau_1}^{\tau_2}ds \int_0^1 dx \frac{\left [ j(x,s)+D(\rho(x,s)) \frac{\displaystyle \partial \rho(x,s)}{\displaystyle \partial x}\right ]^2}{2\sigma(\rho(x,s))}\;.
\ee
Then the probability of observing a density profile $\rho(x)$ in the stationary state is
$\mathbb{P}(\rho(x)) \sim e^{-{L \cal F}(\rho(x))}$ where $\cal F$ is the large deviation functional:
\be \label{ldF}
{\cal F}(\rho(x)) = \min_{\{\rho(x,s),j(x,s)\}}{\cal{A}}(\{\rho(x,s),j(x,s)\};-\infty,\tau)
\ee
with the minimum in \eqref{ldF}  taken over all the trajectories conditioned to the extreme values $\rho(x, -\infty) = \rho^*(x)$,  $\rho^*(x)$ the typical
profile, and $\rho(x,\tau)=\rho(x)$. Density and current must also satisfy the continuity equation
\be \label{CL}
\partial \rho / \partial \tau =- \,\partial j / \partial x.
\ee
\vskip.3cm
The  density correlation functions in the stationary state can be obtained from the large deviation functional $\cal F$ through
 the derivatives of its Legendre transform $\cal G$  (see \cite{D}):
\begin{equation}\label{G2}
\mathcal G (\{\a(x)\})=\sup_{\{\rho(x)\}}\left\{ \int_0^1 \a(x) \rho(x)\, dx- \mathcal F (\{\rho(x)\})\right\}.
\end{equation}
 Then, for large $L$ we have
\bea \label{GSEP}
\langle \rho(x)\rangle &=& \frac{\partial {\mathcal G}}{\partial \a(x)} \bigg |_{\a(x)=0} \\
\langle \rho(x)\rho(y)\rangle_c &=&\frac{1} L \, \frac{\partial^2 {\mathcal G}}{\partial \a(x) \, \partial \a(y)} \bigg |_{\a(x)=0} \nonumber\\
\vdots && \vdots\\
\langle \rho(x_1)\rho(x_2) \;\dots\,\rho(x_k) \rangle_c  &=&\frac{1} {L^{k-1}} \, \frac{\partial^k {\mathcal G}}{\partial \a(x_1) \, \dots \; \partial \a(x_k)} \bigg |_{\a(x)=0}\;.\nonumber
\ea
\subsection{From SEP(1) to models with constant diffusivity and quadratic mobility}
In this section we use a scaling argument to deduce the density large deviations functional
of a model with constant diffusivity and quadratic mobility from that of the SEP(1) (cfr. also \cite{BDGJL2}).
We start by recalling that for the SEP(1) one has
\be\label{diffmobsep1}
D(\rho)=1, \quad \sigma(\rho)=2\rho(1-\rho)\;,
\ee
and therefore
\be
{\cal A}_{SEP(1)}(\{\rho(x,s),j(x,s)\};\tau_1,\tau_2)=  \int_{\tau_1}^{\tau_2}ds \int_0^1 dx \frac{\left [ j(x,s)+ \frac{\displaystyle \partial \rho(x,s)}{\displaystyle \partial x}\right ]^2}{4\rho(x,s)(1-\rho(x,s))}
\ee
then, from \eqref{ldF}, one finds that the density large deviation functional is (see \cite{DLS}, \cite{BDGJL3})
\be
{\cal F }_{SEP(1)}(\{\rho(x)\})=  \sup_{F(x)}\int_0^1 dx \left [ \rho(x) \log \frac{\rho(x)}{F(x)} + (1-\rho(x)) \log\left (\frac{ 1-\rho(x)}{1-F(x)}\right)+ \log \frac{F^\prime(x)}{\rho_a-\rho_b}\right ]
\ee
where the supremum is taken over the monotone functions with boundary values $F(0)=\rho_a$, $F(1)=\rho_b$.
The supremum is attained for  $F=F_\rho$, monotone  solution of the following differential problem:
\begin{equation}\label{F}
\rho(x)= F+ \frac{F(1-F)F''}{(F')^2} \quad \quad \quad \text{with} \quad\quad \quad F(0)=\rho_a \quad \text{and} \quad F(1)=\rho_b.
\end{equation}

The connected correlation functions can be obtained by computing the derivatives of
the functional $\mathcal G_{SEP(1)} $ as in \eqref{G2}.
One finds that  the  lowest order correlations are, for large $L$,
\bea \label{corrSEP}
\langle \rho(x)\rangle^{SEP(1)} &=& \rho_a(1-x)+\rho_b x  \\
\langle \rho(x)\rho(y)\rangle_c^{SEP(1)} &=& - \frac{(\rho_a - \rho_b)^2}{L}x(1-y)\nonumber\\
\langle \rho(x)\rho(y)\rho(z) \rangle_c^{SEP(1)}  &=& -2 \frac{(\rho_a - \rho_b)^3}{L^2} x (1-2y) (1-z), \nonumber
\ea
for $0<x<y<z$.

\vskip.5cm

Let us now  consider the generalization of (\ref{diffmobsep1})  obtained by assuming that the diffusivity is constant and the mobility is a    quadratic function parametrized as
\be\label{diffmobgen}
D(\rho)=C,\quad \sigma(\rho)= 2 A \rho (B-\rho),
\ee
where $A$, $B$ and $C$ are given numbers. The action functional of this system
 \be
{\cal A}(\{\rho(x, s), j(x, s)\}; \tau_1, \tau_2)=  \int_{\tau_1}^{\tau_2} ds \int_0^1 dx \frac{\left [ j(x,s)+C \frac{\displaystyle \partial \rho(x,s)}{\displaystyle \partial x}\right ]^2}{4 A\rho(x,s)(B-\rho(x,s))}
\ee
is related to ${\cal A}_{SEP(1)}$ through the following change of variables (cfr. \cite{DG})
\be \label{AA}
{\cal A}(\{\rho(x,\tau), j(x,\tau)\}; \tau_1, \tau_2)= \frac{C}{A} \,{\cal A}_{SEP(1)}(\{\tilde \rho(x, s), \tilde j(x, s)\};C \tau_1, C \tau_2)\;,
\ee
with
\be
 \tilde \rho(x, s):= \frac 1 B \,\rho(x,C^{-1} s)\quad \quad \text{and} \quad\quad  \tilde j(x, s):=\frac 1 {BC} \,j(x,C^{-1}s)
\ee
The scaling \eqref{AA} has been chosen among all the possible scalings connecting ${\cal A}$ and ${\cal A}_{SEP(1)}$ as it is the only one
 preserving  the  conservation law \eqref{CL} between $\tilde \rho(x, s)$ and $\tilde j(x, s)$.
\\
Then, by  \eqref{AA} and  \eqref{ldF} it follows that
 the large deviation functional for the system characterized by  (\ref{diffmobgen}) is given by
 \be \label{FF}
 {\cal F }(\{\rho(x)\})=\frac{C}{A}\, {\cal F }_{SEP(1)}\(\{B^{-1}\rho(x)\}\) \;,
 \ee
and thus
\bea\label{FreeEn}
{\cal F }(\{\rho(x)\})= \frac{C}{AB} \,\sup_{\tilde F(x)} \int_0^1 dx \left [ \rho(x) \log \frac{\rho(x)}{\tilde F(x)} + (B-\rho(x)) \log \left (\frac{ B-\rho(x)}{B-\tilde F(x)}\right )+B \log \frac{\tilde F^\prime(x)}{\tilde\rho_a-\tilde\rho_b}\right ]\;,\nonumber\\
\ea
with
\be
 \tilde \rho_a= B \rho_a,  \quad\quad \tilde \rho_b= B \rho_b \quad \quad \quad \text{and} \quad \quad \quad \tilde F(x)=B F(x) \;,
\ee
where $F(x)$ is the monotone function satisfying \eqref{F}.
Equivalently $\tilde F$ is the monotone solution of the differential problem
\be \label{tF}
\rho(x)= \tilde F+ \frac{\tilde F(B-\tilde F)\tilde F''}{(\tilde F')^2} \quad \quad \quad \text{with} \quad\quad \quad \tilde F(0)=\tilde \rho_a \quad \quad  \text{and} \quad \quad \tilde F(1)=\tilde \rho_b.
\ee

Using \eqref{G2}
and \eqref{FF} we find
\bea
{\cal G}\(\{\a(x)\}\)&=&\sup_{\{\tilde \rho(x)\}}\left\{ \int_0^1 \a(x) \tilde \rho(x)\, dx- \mathcal F(\{\tilde \rho(x)\})\right\}\nonumber\\
&=& \sup_{\{\tilde \rho(x)\}}\left\{ \int_0^1 \a(x) \tilde \rho(x)\, dx-\frac {C}{A} \,\mathcal F_{SEP(1)}(\{B^{-1}\tilde\rho(x)\})\right\}\nonumber\\
&=&\frac{C}{A} \, \sup_{\{\rho(x)\}}\left\{ \int_0^1 \frac {A B} C \,\a(x)  \rho(x)\, dx- \mathcal F_{SEP(1)}(\{\rho(x)\})\right\}\nonumber\\
&=& \frac{C}{A} \,{\cal G}_{SEP(1)}\(\{A B C^{-1}\a(x)\}\)
\ea

and, from \eqref{GSEP} and \eqref{corrSEP}, we have
\bea \label{corr}
\langle \rho(x)\rangle  &=& \frac{\partial {\mathcal G}}{\partial \a(x)} \bigg |_{\a(x)=0} = B\, \langle \rho(x)\rangle^{\small{SEP(1)}}=  \tilde\rho_a(1-x)+\tilde\rho_b x\\
\nonumber\\
\langle \rho(x)\rho(y)\rangle_c &=& \frac{1} L \, \frac{\partial^2 {\mathcal G}}{\partial \a(x) \, \partial \a(y)} \bigg |_{\a(x)=0} =
 \frac{A B^2}{C}\, \langle \rho(x)\rho(y)\rangle_c^{SEP(1)} =  - \left (\frac{A}{C}\right )\frac{(\tilde\rho_a - \tilde\rho_b)^2}{L}x(1-y)\nonumber\\
 \nonumber\\
\langle \rho(x)\rho(y)\rho(z) \rangle_c  & =& \frac{1} {L^2} \, \frac{\partial^3 {\mathcal G}}{\partial \a(x) \, \partial \a(y) \partial \a(z)} \bigg |_{\a(x)=0}= \frac{A^2 B^3}{C^2}\,\langle \rho(x)\rho(y)\rho(z) \rangle_c^{SEP(1)} \nonumber\\
\nonumber\\
 &&\hskip4.5cm = -2\left (\frac{A}{C}\right )^2 \frac{(\tilde\rho_a - \tilde\rho_b)^3}{L^2} x (1-2y) (1-z) \nonumber
\ea
and, more generally, one gets a factor $B^n (A/C)^{n-1}$ for the $n$-point connected correlation function.\\

\vskip1cm
\subsection{Macroscopic behavior of the correlations}
With suitable choices of the parameters $A,B,C$ we can generate the large scale limits of models that we have considered in the previous sections.

\vskip.5cm
{\bf Inclusion walkers SIP($2k$)}.
For interacting particle systems, the flux across bond $(i,i+1)$ in a time interval $[0,t]$
is given by the number of particles which jump from $i$ to $i+1$ minus the number of particles
which jump from $i+1$ to $i$. i.e.
\begin{eqnarray}
\label{flusso}
Q_{i,i+1}(t)
& = &
\int_0^t dt' \left[\eta_{i+1}(t')(2k+\eta_i(t'))-\eta_i(t')(2k+\eta_{i+1}(t'))\right] \nonumber \\
& = &
2k \int_0^t dt' \left[\eta_{i+1}(t')-\eta_i(t')\right] \;.
\end{eqnarray}
As a consequence, the expectation of the flow $Q_{i,i+1}(t)$ in the stationary state
with boundary densities $\rho_a, \rho_b$ is given by
\be
\langle Q_{i,i+1}(t)\rangle_{L,\rho_a, \rho_b}= 2k \cdot t \cdot
\langle \eta_{i+1} -  \eta_{i}\rangle_{L,\rho_a, \rho_b}\;.
\ee
It follows, from \eqref{Diff} and \eqref{Exp},  that  $D(\rho)=2k$.

From Section \ref{Canonical} we know that the SIP($2k$) equilibrium stationary measure
at density $\rho$ is the product of {\tt NegativeBinomial} $(2k, \rho/(\rho +2k))$ with a variance
$Var(\eta_i)=\frac{\rho(\rho+2k)}{2k}$. Using (\ref{flusso})
\begin{eqnarray}
 \langle Q^2_{i,i+1}(t)\rangle_{L,\rho, \rho}
& = &
(2k)^2 \int_0^t dt' \int_0^t dt'' \langle \left[\eta_{i+1}(t')-\eta_i(t')\right] \left[\eta_{i+1}(t'')-\eta_i(t'')\right] \rangle_{L,\rho,\rho} \;.
\end{eqnarray}
Now we have
\begin{eqnarray}
&&\lim_{t \to \infty}\frac 1t \int_0^t dt' \int_0^t dt'' \langle(\eta_{i+1}(t')-\eta_{i}(t'))(\eta_{i+1}(t'')-\eta_{i}(t'')) \rangle_{L, \rho, \rho} \nonumber
\\
&=& \lim_{t \to\infty}\frac 2{t} \int_0^t dt' \int_{t'}^t dt'' \langle(\eta_{i+1}(t')-\eta_{i}(t'))(\eta_{i+1}(t'')-\eta_{i}(t'')) \rangle_{L, \rho, \rho} \nonumber\\
&=&2\int_0^\infty dt \,\langle(\eta_{i+1}(0)-\eta_{i}(0))(\eta_{i+1}(t)-\eta_{i}(t)) \rangle_{L,\rho,\rho}  \nonumber\\
&=& 2 \int_0^\infty dt \cdot \sum_{\eta} \left\{  (\eta_{i+1}-\eta_{i}) \,\mathbb E_{\eta} \[\eta_{i+1}(t)-\eta_{i}(t)\] \; \mu_{L,\rho,\rho}(\eta) \right\}\label{su}
\end{eqnarray}
where,  in the last display, $\mu_{L,\rho,\rho}$ denotes the stationary equilibrium measure.
By duality,
\begin{eqnarray}
\mathbb E_{\eta} \[\eta_{i+1}(t)-\eta_{i}(t)\]
&=&{2k}\left\{ \mathbb E_{\eta} \[D_0^{SIP}(\eta(t), \xi^{i+1})\]-\mathbb E_{\eta} \[D_0^{SIP}(\eta(t), \xi^i)\] \right\}\nonumber\\
&=&{2k}\left\{ \mathbb E_{\xi^{i+1}} \[D_0^{SIP}(\eta, \xi(t))\]-\mathbb E_{\xi^i} \[D_0^{SIP}(\eta, \xi(t))\] \right\}
\end{eqnarray}
where $\xi^i$ is the $L$-dimensional configuration $(\xi_1^i, \dots, \xi_{L}^i)$ with $\xi^i_j= \delta_{i,j}$ and $D_0^{SIP}$ is the duality function defined in \eqref{D0}.
Let $p_t(i,j)$ be the transition probability from the site $i$ to the site $j$ in the time interval $[0,t]$ of a  random walker on the set $\{1, \dots, L\}$ moving at rate $2k$, then
\be
\mathbb E_{\xi^i} \[D_0^{SIP}(\eta, \xi(t))\]  =\frac 1 {2k}\sum_j  \eta_j \cdot p_t(i,j)\;.
\ee
As a consequence \eqref{su} is equal to
\bea \label{qua}
&&2\sum_{j=1}^L  \langle(\eta_{i+1}-\eta_{i})\eta_j \rangle_{L,\rho,\rho} \cdot \int_0^\infty dt \, \(p_t (i+1,j)- p_t(i,j)\)  \nonumber\\
&&= 4 \, \langle(\eta_{i+1}-\eta_{i})\eta_{i+1} \rangle_{L,\rho,\rho}  \cdot \int_0^\infty dt \, \(p_t (i+1,i+1)- p_t(i,i+1)\) \nonumber \\
&&= 4 \, Var (\eta_i) \cdot \int_0^\infty dt \, \(p_t (0,0)- p_t(0,1)\)
\ea
where the two identities above follow from the product character of the equilibrium measure, and from the fact that $p_t(i,j)$ depends only on the distance $|i-j|$.
Now the random walk $p_t$ is moving at rate $2k$, then, from the master equation we have
\be
2 \(p_t (0,0)- p_t(0,1)\) = -\(p_t(0,1)+p_t(0,-1)-2p_t(0,0)\)
=-\, \frac 1 {2k} \cdot \frac {d}{dt} p_t(0,0)\;.
\ee
Then \eqref{qua} is given by
\be
 - 2 Var (\eta_i) \cdot \int_0^\infty \frac{1}{2k} \frac d {dt} \, p_t(0,0) \cdot dt= 2 \,\frac 1 {(2k)^2} \cdot \rho(\rho+2k)\cdot \(1-p_{\infty}(0,0)\)\;.
\ee
Since $p_\infty(0,0)$ vanishes as $L \to \infty$, we finally  obtain, using (\ref{Mob}) $\sigma(\rho)=2\rho (\rho +2k)$.

Summarizing, for the inclusion process SIP($2k$), we have
\be \label{Dsipa}
D(\rho)=2 k, \quad \sigma(\rho)=2\rho (\rho +2k)\;,
\ee
which implies  $A=-1$, $B=-2k$ and $C=2k$. This choice produces (see \eqref{corr})  the following correlation functions:
\bea \label{corrSIP}
\langle \rho(x)\rangle &=& \rho_a(1-x)+\rho_b x \nonumber \\
\langle \rho(x)\rho(y)\rangle_c &=&\frac{1}{2k}\frac{(\rho_a - \rho_b)^2}{L}x(1-y) \nonumber \\
\langle \rho(x)\rho(y)\rho(z) \rangle_c  &=& -\left (\frac{1}{2k} \right )^2\frac{2(\rho_a - \rho_b)^3}{L^2} x (1-2y) (1-z).
\ea
where $\rho_a$ and $\rho_b$ are the SIP($2k$) boundary densities ($\rho_a= 2k \, \a /(\ga-\a)$ and $\rho_b=2 k \, \d /(\b-\d)$ in terms of our boundary parameters).
Notice that the covariances in \eqref{corrSIP} do indeed agree with the macroscopic limit of the
microscopic covariances that have been found in Section \ref{COV} (see \eqref{sip-cov}) for a
particular choice of the parameters.
Similarly, one gets for  the density large deviation functional:
\be
{\cal F }(\{\rho(x)\})= \int_0^1 dx \left [ \rho(x) \log \frac{\rho(x)}{F(x)} + (2k+\rho(x)) \log \left (\frac{ 2k+\rho(x)}{2k+F(x)}\right )+2k \log \frac{F^\prime(x)}{\rho_a-\rho_b}\right ]\;,
\ee
where  $F=F_\rho$ is the monotone solution of
\be
 \rho(x)= F + \frac{ F(2k+ F) F''}{( F')^2} \quad \quad \quad \text{with} \quad\quad \quad  F(0)= \rho_a \quad \quad  \text{and} \quad \quad  F(1)= \rho_b.
\ee

\vskip.5cm

{\bf Exclusion walkers  SEP($2j$).}
The flux is now given by
\begin{eqnarray}
Q_{i,i+1}(t)
& = &
\int_0^t dt' \left[\eta_{i+1}(t')(2j-\eta_i(t'))-\eta_i(t')(2j-\eta_{i+1}(t'))\right] \\
& = &
2j \int_0^t dt' \left[\eta_{i+1}(t')-\eta_i(t')\right] \;.
\end{eqnarray}
As a consequence, the expectation of  $Q_{i,i+1}(t)$   with respect to the steady state measure
reads
\be
\langle Q_{i,i+1}(t)\rangle_{L,\rho_a, \rho_b}= 2j \cdot t \cdot \langle \eta_{i+1}-  \eta_{i}\rangle_{L,\rho_a, \rho_b}\;.
\ee
Thus, from \eqref{Diff} and \eqref{Exp},  we get $D(\rho)=2j$.
From Section \ref{Canonical} we know that the SEP($2j$) stationary measure at density $\rho$ is the product of {\tt Binomial} $(2j, \rho/2j)$
with a variance $Var(\eta_i)=\frac{\rho(2j-\rho)}{2j}$. Using a similar computation as for the inclusion walkers then, one can compute also the mobility,
obtaining:
\be \label{Dsep}
D(\rho)=2 j, \quad \sigma(\rho)=2\rho (2j-\rho).
\ee

Therefore we have  $A=1$, $B=2j$, $C=2j$ and, from \eqref{corr}, we have the following correlation functions:
\bea \label{corrSEP(2j)}
\langle \rho(x)\rangle &=& \rho_a(1-x)+\rho_b x \nonumber\\
\langle \rho(x)\rho(y)\rangle_c &=& - \frac{1}{2j}\frac{(\rho_a - \rho_b)^2}{L}x(1-y) \nonumber \\
\langle \rho(x)\rho(y)\rho(z) \rangle_c  &=& -\left (\frac{1}{2j} \right )^2\frac{2(\rho_a - \rho_b)^3}{L^2} x (1-2y) (1-z)
\ea
where $\rho_a$ and $\rho_b$ are the SEP($2j$) boundary densities ($\rho_a= 2j \, \a /(\a+\ga)$ and $\rho_b=2 j \, \d /(\b+\d)$ in terms of our boundary parameters).
The second line in \eqref{corrSEP(2j)} does agree with the microscopic SEP-covariances  that have been found in Section \ref{COV} (see \eqref{sep-cov}) for a particular choice of the parameters.
Moreover the density large deviation functional is given by
\be
{\cal F }(\{\rho(x)\})= \int_0^1 dx \left [ \rho(x) \log \frac{\rho(x)}{F(x)} + (2j-\rho(x)) \log \left (\frac{ 2j-\rho(x)}{2j-F(x)}\right )+2j \log \frac{F^\prime(x)}{\rho_a-\rho_b}\right ]
\ee
where $F=F_\rho$ is the monotone function satisfying
\be
\rho(x) = F + \frac{ F(2j- F) F''}{( F')^2} \quad \quad \quad \text{with} \quad\quad \quad  F(0)= \rho_a \quad \quad  \text{and} \quad \quad  F(1)= \rho_b.
\ee

\vskip.5cm

{\bf Independent random walkers IRW.}
As  observed in \cite{DG}, the independent random walkers model, for which
\be
D(\rho)=1, \quad  \sigma(\rho)=2\rho
\ee
 is obtained in the limit as $A=B^{-1}\rightarrow 0$ and $C=1$. Under this choice, see (\ref{corr}), all the correlation functions vanish (this obviously reflects the fact that the stationary measure has a product structure, see Proposition \ref{IRWStSt}).
As $B\rightarrow \infty$, one can see from \eqref{FreeEn} that,
due to the concavity of the logarithm, the derivative $F^\prime(x)$ is constant. Therefore in this limit the optimal $F(x)$ is given by
\be
F(x)=(1-x)\rho_a+x\rho_b
\ee
and one get
\be\label{ldfrw}
{\cal F}(\{ \rho(x)\}) =\int_0^1 dx \left [ \rho(x) \log \left (\frac{\rho(x)}{(1-x)\rho_a+x\rho_b}\right) -\rho(x) + (1-x)\rho_a+x\rho_b \right ].
\ee

\vskip.3cm
{\bf KMP model.}
The expectation of  $Q_{i,i+1}(t)$   with respect to the steady state measure $\mu_{L, T_a, T_b}$
 is now given by
\bea
\langle Q_{i,i+1}(t)\rangle_{L, T_a, T_b}&=& t \cdot  \int_0^1 dx \,\langle\[x(z_i+z_{i+1})-z_i\]-\[(1-x)(z_i+z_{i+1})-z_{i+1}\] \rangle_{L,T_a, T_b}\nonumber \\&= &t \cdot \langle z_{i+1}-  z_{i}\rangle_{L,T_a, T_b}
\ea
then, from \eqref{Diff} and \eqref{ExpKMP},  we get $D(\rho)=1$.
We know that the KMP stationary measure at temperature $T$ is the product of {\tt Exponential}$(1/T)$. By a duality argument
we compute also the mobility and get
\be
D(\rho)= 1 \quad \sigma(\rho)= 2\rho^2.
\ee
The KMP model can be, then, obtained (see \cite{DG}) by taking the {\em unphysical} limit $B\rightarrow 0$, $A\rightarrow -1$ with $C=1$.
In this limit the first three connected correlations functions (see \ref{corr})  are
\bea
\langle \rho(x)\rangle &=& \rho_a(1-x)+\rho_b x \nonumber\\
\langle \rho(x)\rho(y)\rangle_c &=& \frac{(\rho_a - \rho_b)^2}{L}x(1-y) \nonumber\\
\langle \rho(x)\rho(y)\rho(z) \rangle_c  &=& -2\frac{(\rho_a - \rho_b)^3}{L^2} x (1-2y) (1-z),
\ea
which agree with (2.38) of \cite{BDGJL}. Moreover the density large deviation functional that we obtain
\be
{\cal F }(\{\rho(x)\})= -\sup_{F(x)} \int_0^1 dx \left [ 1- \frac{\rho(x)}{F(x)} +  \log \frac{\rho(x)}{F(x)}+\log \frac{F^\prime(x)}{\rho_a-\rho_b}\right ]
\ee
agrees with the same function  computed in \cite{BGL}.

\vskip.5cm

{\bf Acknowledgments.} We are extremely grateful to Bernard Derrida, with whom we
discussed some of the topics in this work. In particular we own to him the results of
Section 7 for the comparison to macroscopic fluctuation theory.

We acknowledge financial support from the by the Italian Research Funding
Agency (MIUR) through FIRB project ``Stochastic processes in interacting particle
systems: duality, metastability and their applications'', grant n. RBFR10N90W and
the Fondazione Cassa di Risparmio Modena through the International Research 2010 project.

\section{Appendix: Equations for  the two points correlations}\label{SECT:Eq}

We provide the linear systems that must be satisfied by
the two points correlation functions in the steady state, i.e. $X_{i,\ell} = \langle \eta_i\eta_\ell \rangle$
with $1\le i \le \ell \le L$.
In the following, equations 1),2),3) are obtained by letting act the generator on a couple
of sites at distance larger or equal than two,
equations 4),5),6) are derived from nearest-neighbouring sites,
equations 7),8),9) correspond to the diagonal,
equation 10) is obtained from the couple (1,L).\\


{\bf Inclusion/Exclusion walkers}: the equations for the inclusion walkers SIP($2k$) and for the exclusion walkers SEP($2j$)
are similar, with some relevant change of signs in the two cases; therefore we write them
together. With the convention to use upper symbol for inclusion and lower symbol for exclusion
in $\pm$ and $\mp$ and with the further convention that $h=k$ for SIP($2k$) and $h=j$ for SEP($2j$),
the equations read\\
\\
\begin{equation}\label{EqCSipSep}
\begin{array}{lll}
1)\; X_{i-1,\ell}+X_{i+1,\ell}+X_{i,\ell-1}+X_{i,\ell+1}-4 X_{i,\ell}=0 \quad\quad\quad \quad\quad \quad\quad \quad\quad \quad \quad \quad\quad\quad \text{for} \quad i+1 < \ell, \: i > 1, \: \ell < L \\
\\
2)\; 2 h (X_{2,\ell} + X_{1,\ell-1} +
    X_{1, \ell + 1}) - (6 h \mp \a+ \ga) X_{1, \ell} +
 2 h \a x_\ell=0 \quad\quad \quad\quad \quad\quad\quad \quad\text{for} \quad \ell > 2\\
 \\
3)\; 2 h (X_{i, L - 1} + X_{i + 1, L} +
     X_{i - 1, L}) - (6 h + \beta \mp \delta) X_{i, L} +
  2 h \delta x_i = 0    \quad\quad\quad\quad \quad\quad \quad\quad \quad\text{for} \quad i< L-1\\
\\
4)\; h X_{i,i} + h X_{i+1,i+1} + (\mp 1 - 4 h) X_{i,i+1} + h X_{i-1,i+1} +
 h X_{i,i+2} - h (x_i + x_{i+1})=0 \quad \quad \text{for}\quad 1<i < L-1\\
 \\
5)\;  2 h X_{1,1} + 2 h X_{2,2} - (2 (3 h \pm 1) + (\mp \alpha + \gamma)) X_{1, 2} +
  2 h X_{1, 3} - 2 h x_1 - 2 h (1 - \alpha) x_2 = 0\\
\\
6)\;  2 h X_{L,L} +
  2 h X_{L - 1,L-1} - (2 (3 h \pm 1) + (\beta \mp \delta)) X_{L- 1, L} +
  2 h X_{L - 2, L} - 2 h x_L - 2 h (1 - \delta) x_{L- 1} = 0\\
  \\
 7)\;  h (x_{i-1} + 2 x_i+ x_{i+1}) + (2 h \pm 1) X_{i-1,i} -
  4 h X_{i,i} + (2 h \pm 1) X_{i,i+1} = 0 \quad \quad\quad \quad\quad \quad\;\;\text{for} \quad 1<i < L\\
  \\
 8)\; 2 (2 h + (\mp \alpha + \gamma)) X_{1,1} -
 2 (2 h \pm 1) X_{1, 2} - (2 h (2 \alpha + 1) + \gamma \pm  \alpha) x_1 - 2 h x_2 - 2 h \alpha=0 \\
\\
9)\;  2 (2 h + (\beta \mp \delta)) X_{L,L} -
  2 (2 h \pm 1) X_{L- 1,L} - (2 h (2 \delta + 1) + \beta \pm \delta) x_L -
  2 hx_{L- 1} - 2 h \delta = 0\\
\\
10)\;  -(4h +\ga \mp \delta \mp \alpha +\beta) X_{1,L} +2h X_{2,L} +2h X_{1, L-1} +2h (\delta x_1 + \alpha x_L) =0
\quad\quad\quad\quad\quad\quad\\
\\
\end{array}
\nonumber
\end{equation}

\vskip.6cm


{\bf Brownian energy process BEP($2k$)}: the equations for the BEP($2k$) read
\begin{equation}\label{EqCBep}
\begin{array}{lll}
1) \; X_{i-1,\ell}+X_{i+1,\ell}+X_{i,\ell-1}+X_{i,\ell+1}-4 X_{i,\ell}=0 & \text{for} & i+1 < \ell, \quad i > 1, \: \ell < L\\
\\
2) \; 4k (X_{1,\ell-1}+X_{1,\ell+1}+ X_{2,\ell})-(1+12k)X_{1,\ell}+4k T_a \langle z_\ell \rangle=0 &\text{for}& \ell > 2\\
\\
3) \; 4k(X_{i-1,L}+X_{i+1,L}+X_{i,L-1})-(12k+1)X_{i,L}+4k T_b \langle z_i\rangle =0 &\text{for}& i < L-1\\
\\
4) \; 2k X_{i,i}+2k X_{i+1,i+1}-2(4k+1)X_{i,i+1}+2k (X_{i-1,i+1}+X_{i,i+2})=0 &\text{for}& 1<i  <  L-1 \\
\\
5) \; 4k(X_{1,1}+X_{2,2}) -(12k + 5) X_{1,2}+4k X_{1,3}+ 4k T_a \langle z_2 \rangle=0\\
\\
6) \; 4k(X_{L,L}+X_{L-1,L-1})-(12k+5)X_{L-1,L}+4k X_{L-2,L}+4k T_b \langle z_{L-1} \rangle =0\\
\\
7) \; (2k+1)X_{i-1,i}+(2k+1)X_{i,i+1}-4k X_{i,i}=0 &\text{for}& 1<i < L\\
\\
8) \; 2(2k+1)X_{1,2}-(4k+1)X_{1,1}+ 2 (2k+1)T_a \langle z_1 \rangle =0\\
\\
9)\; 2(2k+1)X_{L-1,L}-(4k+1)X_{L,L}+2(2k+1)T_b \langle z_L \rangle =0 \\
\\
10)\; 4kT_a \langle z_L \rangle + 4kT_b \langle z_1 \rangle -2(1+4k)X_{1,L}
+ 4k(X_{2,L}+X_{1,L-1}) =0\\
\end{array}
\nonumber
\end{equation}

\vspace{1.cm}

\end{document}